\newcommand{\Euro}{\textup{\euro}}
\tikzset{
	-Latex,auto,node distance =1 cm and 1 cm,semithick,
	state/.style ={ellipse, draw, minimum width = 0.7 cm},
	point/.style = {circle, draw, inner sep=0.04cm,fill,node contents={}},
	bidirected/.style={Latex-Latex,dashed},
	el/.style = {inner sep=2pt, align=left, sloped},
	square/.style={regular polygon,regular polygon sides=4}
}
\newcommand{\myassetnode}[5]{%
	\node[state,circle] (#1) at (#2,#3) {#4};
	\node[state,square,scale=0.5, fill=white!100] (#1asset) at (#2+0.3, #3-0.3) {#5};
}
\newcommand{\myassetnodenew}[5]{%
	\node[state,circle] (#1) at (#2,#3) {#4};
	\node[state,rectangle, minimum width=30pt, minimum height=12pt, scale=0.8, fill=white!100] (" ") at (#2+0.42, #3-0.36) {};
	\node[scale=0.5] at (#2+0.42, #3-0.36) {#5};
}
\newcommand{\mynode}[4]{%
	\node[state,circle] (#1) at (#2,#3) {#4};%
}
\theoremstyle{definition}
\newtheorem{thm}{Theorem}
\newtheorem{lem}{Lemma}
\newtheorem{clm}{Claim}
\newtheorem{definition}{Definition}
\def\qed{\hfill\hbox{${\vcenter{\vbox{
	\hrule height 0.4pt\hbox{\vrule width 0.4pt height 6pt						
	\kern5pt\vrule width 0.4pt}\hrule height 0.4pt}}}$}}
\DeclareMathOperator{\sumop}{sum}
\begin{document}

\title[Payment Scheduling in the Interval Debt Model]{Payment Scheduling in the Interval Debt Model}

\author[1]{\fnm{Tom} \sur{Friedetzky}}\email{tom.friedetzky@durham.ac.uk}

\author*[1]{\fnm{David C.} \sur{Kutner}}\email{david.c.kutner@durham.ac.uk}

\author[1]{George B. Mertzios }\email{george.mertzios@durham.ac.uk}
\equalcont{Partially supported by the EPSRC grant EP/P020372/1.}

\author[1]{\fnm{Iain A.} \sur{Stewart}}\email{i.a.stewart@durham.ac.uk}

\author[1]{\fnm{Amitabh} \sur{Trehan}}\email{amitabh.trehan@durham.ac.uk}

\affil[1]{\orgdiv{Department of Computer Science}, \orgname{Durham University}, \orgaddress{Upper Mountjoy Campus, Stockton Road, Durham DH1 3LE, UK}}

\abstract{
    The network-based study of financial systems has received considerable attention in recent years but has seldom explicitly incorporated the dynamic aspects of such systems. We consider this problem setting from the temporal point of view and introduce the Interval Debt Model (IDM) and some scheduling problems based on it, namely: \textsc{Bankruptcy Minimization/Maximization}, in which the aim is to produce a payment schedule with at most/at least a given number of bankruptcies; \textsc{Perfect Scheduling}, the special case of the minimization variant where the aim is to produce a schedule with no bankruptcies (that is, a perfect schedule); and \textsc{Bailout Minimization}, in which a financial authority must allocate a smallest possible bailout package to enable a perfect schedule. We show that each of these problems is NP-complete, in many cases even on very restricted input instances. On the positive side, we provide for \textsc{Perfect Scheduling} a polynomial-time algorithm on (rooted) out-trees although in contrast we prove NP-completeness on directed acyclic graphs, as well as on instances with a constant number of nodes (and hence also constant treewidth). When we allow non-integer payments, we show by a linear programming argument that the problem \textsc{Bailout Minimization} can be solved in polynomial time.
}

\keywords{temporal graph, financial network, payment scheduling, computational complexity}

\maketitle

	\section{Introduction}
	
	 A natural problem in the study of financial networks is that of whether and where a failure will occur if no preventative action is taken. We focus specifically on the flexibility that financial entities are afforded as regards the precise timing of their outgoings and for this purpose introduce the \emph{Interval Debt Model (IDM)} in which a set of financial entities is interconnected by debts due within specific time intervals. In the IDM, a \emph{payment schedule} specifies timings of payments to serve the debts. We examine the computational hardness of determining the existence of a schedule of payments with ``good'' properties, e.g., no or few bankruptcies, or minimizing the scale of remedial action. In particular, we establish how hardness depends on variations in the exact formalism of the model (to allow some small number of bankruptcies or insist on none at all) and on restrictions on the structure or lifetime of the input instance. A unique and novel feature of the IDM is its capacity to capture the temporal aspects of real-world financial systems; previous work has seldom explicitly dealt with this intrinsic facet of real-world debt. 

\paragraph*{Financial Networks}
    Graph theory provides models for many problems of practical interest for analyzing (or administering) financial systems. For example, Eisenberg and Noe's work \cite{eisenberg2001systemic} abstracts a financial system to be a weighted digraph (in which each node is additionally labeled according to the corresponding entity's assets). The authors of that work are focused on the existence and computation of a \emph{clearing vector}, which is essentially a set of payments among nodes of the graph which can be executed synchronously without violating some validity constraints. Their model provides the basis for much subsequent work in the network-based analysis of financial systems: it has been adapted to incorporate default costs \cite{rogers2013failure}, Credit Default Swaps \cite{schuldenzucker2017finding} (CDSs) (that is, derivatives through which banks can bet on the default of another bank in the system) and the sequential behavior of bank defaulting in real-world financial networks \cite{papp2020sequential}.     
    
    An axiomatic aspect of Eisenberg and Noe's model is the so-called \emph{principle of proportionality}: that a defaulting bank pays off each of its creditors proportionally to the amount it is owed. 
    Some recent work has considered alternative payment schemes, which allow, for example, paying some debts in full and others not at all (so-called \emph{non-proportional payments}).
    For example, Bertschinger, Hoefer and Schmand \cite{BHS24} study financial networks in a setting where each node is a rational agent which aims to maximize flow through itself by allocating its income to its debts. The focus of that work is on game-theoretic questions, such as the price of anarchy, or the existence, properties, and computability of equilibria. Papp and Wattenhoffer \cite{PW20} also study a non-proportional setting, additionally incorporating CDSs.

    Complementing the decentralized, game-theoretic approach is the question of the (centralized) computability of a \emph{globally} ``good'' outcome through bailout allocation \cite{EW21} (also called cash injection \cite{KKZ22}), or timing default announcements \cite{papp2020sequential}, among other operations. In such works, the prototypical objective is to minimize the number of bankruptcies; related measures include total market value \cite{EW21}, or systemic liquidity \cite{KKZ22}. Egressy and Wattenhoffer \cite{EW21} focus solely on computational complexity of leveraging bailouts to optimize a range of objectives, in a setting which incorporates proportional payments and default costs.     
    Kanellopoulos, Kyropoulou and Zhou \cite{KKZ22,KKZ24} apply both a game-theoretic and a classical complexity perspective to two mechanisms: debt forgiveness (deletion of edges in the financial network) and cash injection (bailouts). Notably, in this work the central authority may remove debts in a way which may be detrimental to certain individuals, but beneficial to the total systemic liquidity.


    
    
    

    Previous research on financial networks has also drawn from ecology \cite{haldane2011systemic}, statistical physics \cite{bardoscia2021physics} and Boolean networks \cite{eisenberg1996boolean}. 

    A central motivation of financial network analysis is to inform central banks' and regulators' policies. The concepts of \emph{solvency} and \emph{liquidity} are core to this task: a bank is said to be \emph{solvent} if it has enough assets (including, e.g., debts owed to it) to meet all its obligations; and it is said to be \emph{liquid} if it has enough liquid assets (that is, cash) to meet its obligations on time. An illiquid but solvent bank may exist even in modern interbank markets \cite{rochet2004coordination}. In such cases, a central bank may act as a \emph{lender of last resort} and extend loans to such banks to prevent them defaulting on debts \cite{bagehot1873lombard, rochet2004coordination}. The optimal allocation of bailouts to a system in order to minimize damage has also been studied as an extension of Eisenberg and Noe's model \cite{papachristou2022allocating}. Here, bailouts refer to funds provided by a third party (such as a government) to entities to help them avoid bankruptcy.

\paragraph*{Temporal Graphs} 
    Temporal graphs are graphs whose underlying connectivity structure changes over time. Such graphs allow us to model real-world networks which have inherent dynamic properties, such as transportation networks \cite{10.1007/s10796-021-10164-2}, contact networks in an epidemic \cite{enright2021deleting, kutner2023tardis} and communication networks; for an overview see~\cite{Holme-Saramaki-book-13,michail2016introduction}. Most commonly, following the formulation introduced by Kempe, Kleinberg and Kumar~\cite{KKK00}, a temporal graph has a fixed set of vertices together with edges that appear and disappear at integer times up to an (integer) lifetime. Often, a natural extension of a problem on static graphs to the temporal setting yields a computationally harder problem; for example, finding node-disjoint paths in a temporal graph remains NP-complete even when the underlying graph is itself a path \cite{klobas2021interference}, and finding a temporal vertex cover remains NP-complete even on star temporal graphs \cite{akrida2020temporal}. 

 \paragraph*{Contributions}
In this paper we present a novel framework, the \emph{Interval Debt Model} (IDM), for considering problems of bailout allocation and payment scheduling in financial networks by using temporal graphs to account for the isochronal aspect of debts between financial entities (previous work has almost exclusively focused on static financial networks). In particular, the IDM offers the flexibility that entities can pay debts earlier or later, within some agreed interval. We introduce several natural problems and problem variants in this model and show that the tractability of such problems depends greatly on the network topology and on the restrictions on payments (i.e., the admission or exclusion of partial and fractional payments on debts). 

Our work explores the natural question of whether and how payments can be scheduled to avert large-scale failures in financial networks. Broadly, we establish that computing a zero-failure schedule (a \emph{perfect schedule}) is NP-complete even when the network topology is highly restricted, unless we admit fractional payments, in which case determining the existence of a perfect schedule is tractable in general. Interestingly, if we allow a small number $k$ of bankruptcies to occur then every problem variant is computationally hard even on inputs with $O(1)$ nodes. 
This can be thought of analogously to \textsc{Max 2SAT} being strictly harder than \textsc{2SAT} (unless P=NP) despite being a ``relaxation'': in \textsc{Max 2SAT} we allow up to $k$ clauses to not be satisfied. 
Furthermore, in the setting where we insist not only on payments being for integer amounts but more strongly that any payment is for the full amount of the corresponding debt, finding a perfect schedule is NP-complete even if there are only four nodes. 

We begin by introducing, first by example and then formally, the Interval Debt Model in Section 2. In Section 3 we present our results: in Section 3.1--3.3 we establish some sufficent criteria for NP-hardness for each of the problems we consider; and in Section 3.4 we present two polynomial-time algorithms. Our conclusions and directions for further research are given in Section 4.
	
	\section{The Interval Debt Model}\label{sec-IDM-model}
	
	In this section, we introduce (first by example and then formally) the \emph{Interval Debt Model}, a framework in which temporal graphs are used to represent the collection of debts in a financial system. 
	
	\subsection{An illustrative example} 
	
	As an example, consider a tiny financial network consisting of the 3 banks $u$, $v$ and $w$ with \Euro30, \Euro20 and \Euro10, respectively, in initial external assets and where there are the following inter-bank financial obligations:
	\begin{itemize}
		\item bank $u$ owes bank $v$ \Euro20 which it must pay by time 3 and \Euro15 which it must pay at time 4 or time 5 (note that all payments must be made at integer times)
		\item bank $v$ must pay bank $w$ a debt of \Euro25 at time 2 exactly
		\item bank $w$  must pay \Euro25 to bank $v$ between times 4 and 6 (that is, at time 4, 5 or 6).
	\end{itemize}
	A graphical representation of this system is shown in Fig.~\ref{fig:intro example} (the descriptive notation used should be obvious and is retained throughout the paper).
	
	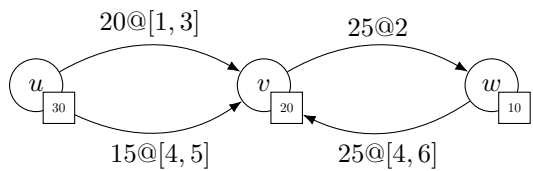
\begin{figure}[!ht]
		\centering
		\begin{tikzpicture}
			\myassetnode{u}{0}{0}{$u$}{$30$}
			\myassetnode{v}{3}{0}{$v$}{$20$}
			\myassetnode{w}{6}{0}{$w$}{$10$}
			\path (u) edge[bend left] node[above] {$20@[1,3]$} (v);
			\path (uasset) edge[bend right] node[below] {$15@[4,5]$} (v);
			\path (v) edge[bend left] node[above] {$25@2$} (w);
			\path (w) edge[bend left] node[below] {$25@[4,6]$} (vasset);
		\end{tikzpicture}
		\caption{A simple instance of the Interval Debt Model (IDM). Numbers in square boxes represent the initial external assets of the node (for example, \Euro30 for node $u$), directed edges represent debts, and the label on an edge represents the terms of the associated debt (for example, $u$ must pay $v$ \Euro20 between time 1 and time 3).}
		\label{fig:intro example}
	\end{figure}
	
	Several points can be made about this system: node $u$ is \emph{insolvent} as its \Euro30 in initial external assets are insufficient to pay all its debts; node $v$ may be \emph{illiquid} for it may default on part of its debt to $w$, e.g., if $u$ pays all of its first debt at time 3, or may remain liquid, e.g., if it receives at least \Euro5 from $u$ by time 2; and node $w$ is solvent and certain to remain liquid in any case. The choices made by the various banks, in the form of a (payment) schedule, clearly affect the status of the overall financial system. Note that solvency is determined solely by whether sufficient funds exist whereas liquidity depends upon when debts are paid and owed.
	
	One may ask several questions about our toy financial system such as: Are partial payments allowed (e.g., $u$ paying \Euro18 of the \Euro20 debt at time 1, and the rest later)? If so, are non-integer payments allowed? Can money received be immediately forwarded (e.g., $u$ paying $v$ \Euro20 at time 2 and $v$ paying $w$ \Euro25 at time 2)? Does $v$ necessarily have to pay its debt to $w$ at time 2 if it has the liquid assets to do so? We now expand upon these questions and specify in detail the setting we consider in the remainder of the paper. Note that throughout the paper we use the euro \Euro~as our monetary unit of resource even though, as we will see, we have a variant of the Interval Debt Model within which payments can be made for any rational fraction of a euro. We often prefix monetary payments with the symbol \Euro~to make our proofs more readable.
	
	\subsection{Formal setting}
	
	Formally, an \emph{Interval Debt Model} (\emph{IDM}) instance is a 3-tuple $(G, D, A^0)$ as follows.
	\begin{itemize}
		\item $G=(V, E)$ is a finite digraph with the set of $n$ nodes (or, alternatively, \emph{banks}) $V=\{v_i:i=1,~2,~\ldots,~n\}$ and the set of $m$ directed labelled edges $E\subseteq V\times V \times \mathbb{N}$, with the edge $(u,v,id)\in E$ denoting that there is an edge, or \emph{debt}, whose label is $id$, from the \emph{debtor} $u$ to the \emph{creditor} $v$. We can have multi-edges but the labels of the edges from some node $u$ to some node $v$ must be distinct and form a contiguous integer sequence $0,1,2,\ldots$. We refer to the subset of edges directed out of or in to some specific node $v$ by $E_{\text{out}}(v)$ and $E_{\text{in}}(v)$, respectively. We also refer to the undirected graph obtained from $G$ by ignoring the orientations on directed edges as the \emph{footprint} of $G$.
		\item $D:E\to\{(a,t_1,t_2):a,t_1,t_2\in\mathbb{N}\setminus\{0\}, t_1\leq t_2\}$ is the \emph{debt function} which associates \emph{terms} to every debt (ordinarily, we abbreviate $D((u,v,id))$ as $D(u,v,id)$). Here, if $e$ is a debt with terms $D(e)=(a,t_1,t_2)$ then $a$ is the \emph{monetary amount} (or \emph{monetary debt}) to be paid and $t_1$ (resp.~$t_2$) is the first (resp.~last) time at which (any portion of) this amount can be paid. For any debt $e\in E$, we also write $D(e) = (D_a(e),D_{t_1}(e), D_{t_2}(e))$. For simplicity of notation, we sometimes denote the terms $D(e)=(a,t_1,t_2)$ by $a@[t_1,t_2]$ or by $a@t_1$ when $t_1=t_2$ (as we did in Fig.~\ref{fig:intro example}); also, for simplicity, we sometimes just refer to $a@[t_1,t_2]$ as the debt.
		\item $A^0=(c_{v_1}^0,c_{v_2}^0,...c_{v_{n}}^0)\in \mathbb N^{n}$ is a tuple with $c_{v_i}^0$ denoting the \emph{initial external assets} (i.e. starting cash) of bank $v_i$.
	\end{itemize}
	
	We refer to the greatest time-stamp $T$ that appears in any debt for a given instance as the \emph{lifetime} and assume that all network activity ceases after time $T$. The instance shown in Fig.~\ref{fig:intro example}, which has lifetime $T=6$, is formally given by: $V=\{u,v,w\}$, $E=\{(u,v,0), (u,v,1), (v,w,0), (w,v,0)\}$, $D(u,v,0)=(20,1,3)$, $D(u,v,1)=(15,4,5)$, $D(v,w,0)=(25,2,2)$, $D(w,v,0)=(25,4,6)$ and $A^0 = (c_u^0, c_v^0, c_w^0)$, where $c_u^0=30$, $c_v^0=20$ and $c_w^0=10$. Similarly, the instance shown in Fig.~\ref{fig:p3} has lifetime $T=2$ and is given by $V=\{u,v,w\}$, $E=\{(u,v,0), (v,w,0)\}$, $D(u,v,0)=(1,1,2)$, $D(v,w,0)=(1,1,1)$ and $A^0 = (c_u^0, c_v^0, c_w^0)$, where $c_u^0=1$, $c_v^0=0$ and $c_w^0=0$.

	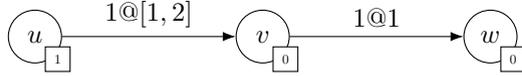
\begin{figure}[!ht]
		\centering
		\begin{tikzpicture}
			\myassetnode{u}{0}{0}{$u$}{$1$}
			\myassetnode{v}{3}{0}{$v$}{$0$}
			\myassetnode{w}{6}{0}{$w$}{$0$}
			\path (u) edge node[above] {$1@[1,2]$} (v);	
			\path (v) edge node[above] {$1@1$} (w);	
		\end{tikzpicture}
		\caption{An IDM instance for which every schedule is described by four payment values $p_{(u,v,0)}^1$, $p_{(v,w,0)}^1$, $p_{(u,v,0)}^2$ and $p_{(v,w,0)}^2$.}
		\label{fig:p3}
	\end{figure}
	
	The \emph{size} of the instance $(G, D, A^0)$ is defined as $n + m + \log(T) + b$, where $b$ is the maximum number of bits needed to encode any of the (integer) numeric values appearing as the monetary amounts in the debts. Note that in what follows, we usually do not mention the label $id$ of a debt $(u,v,id)$ but just refer to the debt as $(u,v)$ when this causes no confusion.
	
	\subsection{Schedules}\label{subsec:schedules}
	
	Given an IDM instance $(G,D,A^0)$, a (\emph{payment}) \emph{schedule} $\sigma$ describes the times at which the banks transfer \emph{assets} to one another via \emph{payments}. Formally, a schedule $\sigma$ is a set of $|E|T$ \emph{payment values} $p_e^t\geq 0$, one for each edge-time pair $(e,t)$ (note that no payments are made at time 0). Equivalently, a schedule can be expressed as an $|E|\times T$ matrix $S$ with the payment values $p_e^t$ the entries of the matrix. The value $p_e^t$ is the monetary amount of the debt $e$ paid at time $t$. Our intention is that at any time $1\leq t \leq T$, every payment value $p_e^t>0$ of a schedule $\sigma$ is paid by the debtor of $e$ to the creditor of $e$, not necessarily for the full monetary amount $D_a(e)$ but for the amount $p_e^t$. A schedule for the instance of Fig.~\ref{fig:p3} consists of the four payments values $p_{(u,v,0)}^1$, $p_{(v,w,0)}^1$, $p_{(u,v,0)}^2$ and $p_{(v,w,0)}^2$. Note that, using the above representation of a schedule $\sigma$, we might have a large number of zero payments. Therefore, for simplicity of presentation, in the remainder of the paper we specify schedules by only detailing the non-zero payments. An example schedule for the IDM instance in Fig. \ref{fig:p3} is then $p_{(u,v,0)}^1=1$, $p_{(v,w,0)}^1=1$.
	
	We now introduce some auxiliary variables which are not strictly necessary but help us to concisely express constraints on and properties of schedules. For nodes $u,v \in V$ and time $0\leq  t \leq T$, the following values are with respect to some specific schedule.
	\begin{itemize}
		\item Denote by $I_v^t$ the total monetary amount of incoming payments of node $v$ at time $t$.
		\item Denote by $O_v^t$ the total monetary amount of outgoing payments (expenses) of node $v$ at time $t$. 
		\item We write $p_{u,v}^t$ to denote the total amount of all payments made from debtor $u$ to creditor $v$ at time $t$ in reference to \emph{all} debts from $u$ to $v$; that is, $p_{u,v}^t=\sum_{i}p_{(u,v,i)}^t$. 
		\item The vector $A^0 = (c_{v_1}^0, c_{v_2}^0, \ldots, c_{v_n}^0)$ specifies the initial external assets (cash) of each node at time $0$. For $t>0$, we denote by $c_v^t$ node $v$'s cash assets at time $t$; that is, $c_v^t=c_v^{t-1}+I_v^t-O_v^{t}$. 
	\end{itemize}
	For clarity, we refer to the starting cash of banks as ``initial external assets'' and to liquid assets in general as cash assets. By cash assets `at time $t$' (resp. `prior to time $t$') we mean after all (resp. before any) of the payments associated with time $t$ have been executed. Cash assets at time $0$ are then precisely the initial external assets at time $0$ (possibly supplemented by some bailout, as we shall see later). 
	
	We have been a little vague so far as regards the form of the payment values in any schedule and have not specified whether these values are integral, rational or do not necessarily equal the full monetary amount of the debt. As we detail below, we have variants of the model covering different circumstances (with perhaps the standard version being when payment values are integral but do not necessarily equal the full monetary amount of the debt).
	
	Recall the example schedule from Fig. \ref{fig:p3}, which we can represent as $p_{u,v}^1=1$, $p_{v,w}^1=1$. As we shall soon see, the payments in this schedule can be legitimately discharged in order to satisfy the terms of all debts but in general this need not be the case. However, there might be schedules that are not \emph{valid}, as well as valid schedules in which banks default on debts (that is, go \emph{bankrupt}). We deal with the key notions of validity and bankruptcy now.
	
	\begin{definition}\label{def:validschedule}
		A schedule is \emph{valid} if it satisfies the following properties (for any debt $e$, terms $D(e)=(a,t_1,t_2)$ and node $v$): 
		\begin{itemize}
			\item all payment values are non-negative; that is, $p_e^t\geq 0$, for $1\leq t\leq T$
			\item all cash asset values (as derived from payment values and initial external assets) are non-negative; that is, $ c_v^t\geq 0$, for $0\leq t\leq T$
			\item no debts are overpaid; that is, $\sum_{t=1}^T p_e^t \leq a$
			\item no debts are paid too early; that is, $\sum_{t=1}^{t_1-1} p_e^t = 0$.
		\end{itemize}
		Given some IDM instance, some schedule and some debt $e$ with terms $D(e) = (a,t_1,t_2)$, the debt $e$ is said to be \emph{payable} at any time in the interval $[t_1,t_2-1]$. At time $t_2$, $e$ is said to be \emph{due}. At time $t_2\leq t\leq T$, if the full amount $a$ has not yet been paid (including payments made at time $t_2$) then $e$ is said to be \emph{overdue} at time $t$. A debt is \emph{active} whenever it is payable, due or overdue.
		However, a bank is said to be \emph{withholding} if, at some time $1\leq t\leq T$, it has an overdue debt and sufficient cash assets to pay (part of, where fractional or partial payments are permitted; see below) the debt. If any bank is withholding (at any time) in the schedule then the schedule is not valid.
	\end{definition}
	So, for example and with reference to the IDM instance in Fig.~\ref{fig:p3}, if, according to some schedule, bank $u$ pays $1$ to bank $v$ at time $1$ but $v$ makes no payment to $w$ at time 1 then $v$ is withholding and the schedule is not valid.
	
	\begin{definition}
		With reference to some schedule, a bank is said to be \emph{bankrupt} (at time $t$) if it is the debtor of an overdue debt (at time $t$). We say that a schedule has $k$ \emph{bankruptcies} if $k$ distinct banks are bankrupt at some time in the schedule (the times at which these banks are bankrupt might vary). A bank may \emph{recover} from bankruptcy if it subsequently receives sufficient income to pay off all its overdue debts. 
	\end{definition}
	
	\begin{definition}
		A bank $v$ is said to be \emph{insolvent} if all its assets (that is, the sum of all debts due to $v$ and of $v$'s initial external assets) are insufficient to cover all its obligations (that is, the sum of all debts $v$ owes). Formally, $v$ is insolvent if 
		\[
		c_v^0 + \sum_{e\in E_{\text{in}}(v)}D_a(e)~~< \sum_{e\in E_{\text{out}}(v)}D_a(e).
		\]
		A bank which is insolvent will necessarily be bankrupt in any schedule.
	\end{definition}
 
	We will not be concerned with the precise timing of bankruptcy or the recovery or not of any bank in this paper.
	
	We now detail three variants of the model (alluded to earlier) in which different natural constraints are imposed on the payment values.
	\begin{definition}
		In what follows, $e$ is an arbitrary debt and $1\leq t\leq T$ some time.
		\begin{itemize}
			\item In the \emph{Fractional Payments} (\emph{FP}) variant, the payment values may take rational values; that is, $p_e^t \in \mathbb{Q}$ and we allow payments for a smaller amount than the full monetary amount of $e$. 
			\item In the \emph{Partial Payments} (\emph{PP}) variant, the payment values may take only integer values; that is, $p_e^t \in \mathbb{N}$ and we allow payments for a smaller amount than the full monetary amount of $e$.
			\item In the \emph{All-or-Nothing} (\emph{AoN}) variant, every payment value must fully cover the relevant monetary amount of $e$; that is, every payment value must be for the full monetary amount of $e$ or zero. So, $p_e^t \in \{D_a(e), 0\}$.
		\end{itemize}
	\end{definition}
	For example, the instance of Fig.~\ref{fig:p3} has the following valid schedules:
	\begin{itemize}
		\item (in all variants) the schedule above in which $p_{u,v}^1=p_{v,w}^1=\Euro 1$ (all debts are paid in full at time 1)
		\item (in all variants) the schedule in which $p_{u,v}^2=p_{v,w}^2=\Euro 1$ (all debts are paid in full at time 2)
		\begin{itemize}
			\item under this schedule, node $v$ is bankrupt at time 1 as \Euro1 of the debt $(v,w,0)$ is unpaid and that debt is overdue
		\end{itemize} 
		\item (in the FP variant only) for every $a\in \mathbb Q$, where $0<a<1$, the schedule in which $p_{u,v}^1=p_{v,w}^1=\Euro a$ and $p_{u,v}^2=p_{v,w}^2=\Euro 1-a$
		\begin{itemize}
			\item under each of these schedules, node $v$ is bankrupt at time 1 as \Euro$1-a$ of the debt $(v,w,0)$ is unpaid and that debt is overdue.
		\end{itemize} 
	\end{itemize}
	
	It is worthwhile clarifying the concepts of instant forwarding and payment-cycles. We emphasize that we allow a bank to instantly spend income received. Note that in any valid schedule for the instance in Fig.~\ref{fig:p3}, $v$ \emph{instantly forwards} money received from $u$ to $w$ (so as not to be withholding); so, the cash assets of $v$ never exceed 0 in any valid schedule. This behaviour is consistent with the Eisenberg and Noe model \cite{eisenberg2001systemic} in which financial entities operate under a single clearing authority which synchronously executes payments. Indeed, in such cases a payment-chain of any length is permitted and the payment takes place instantaneously regardless of chain length.
	
	Furthermore, and still consistent with the Eisenberg and Noe model, there is the possibility of a \emph{payment-cycle} which is a set of banks $\{u_1,u_2,\ldots,u_c\}$, for some $c\geq 2$, with a set of debts $\{e_i=(u_i,u_{i+1},l_i): 1\leq i\leq c-1\}\cup\{e_c = (u_{c},u_0,l_{c})\}$ so that at some time $t$, all debts are active yet none has been fully paid and where each bank makes a payment, at time $t$, of the same value $a$ towards its debt. As an illustration, Fig. \ref{fig:cycle} shows three `cyclic' IDM instances, all with lifetime $T=2$. By our definition of a valid schedule, the schedule $p_{u,v}^1=p_{v,w}^1=p_{w,x}^1=p_{x,u}^1=\Euro 1$, forming a payment-cycle, is valid in all three instances. This is intuitive for Fig.~\ref{fig:cycleloaded}, where each node has sufficient initial external assets available to pay all its debts in full at any time, irrespective of income. In Fig. \ref{fig:cycleroulette}, we may imagine that the \Euro1 moves from node $u$ along the cycle, satisfying every debt at time 1. This is a useful abstraction but not strictly accurate: rather, we should imagine that all four banks simultaneously order payments forward under a single clearing system. The clearing system calculates the balances that each bank would have with those payments executed, ensures they are all non-negative (one of our criteria for schedule validity) and then executes the payments by updating all accounts simultaneously. This distinction is significant when we consider Fig.~\ref{fig:cycleempty} in which no node has any initial external assets. A clearing system ordered to simultaneously pay all debts would have no problem doing so in the Eisenberg and Noe model and in our model this constitutes a valid schedule. We highlight that there also exist valid schedules for the instance in Fig.~\ref{fig:cycleempty} in which all four banks go bankrupt, one schedule being where all payments at any time are 0: here, no bank is withholding (they all have zero cash assets), so the schedule is valid, but every bank has an overdue debt and so is bankrupt. 

	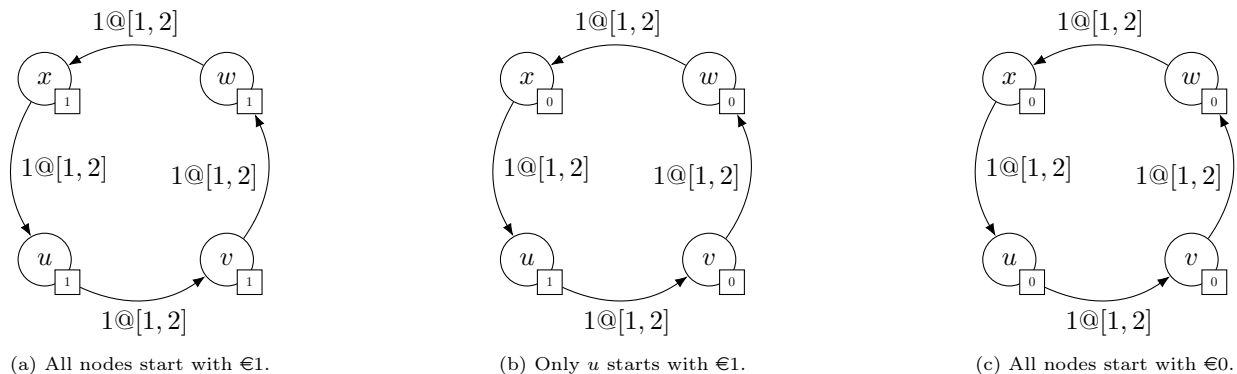
\begin{figure*}[!ht]
		\centering
		\subfloat[All nodes start with \Euro1.\label{fig:cycleloaded}]{
			\begin{tikzpicture}
				\myassetnode{u}{0}{0}{$u$}{$1$}
				\myassetnode{v}{2.4}{0}{$v$}{$1$}
				\myassetnode{w}{2.4}{2.4}{$w$}{$1$}
				\myassetnode{x}{0}{2.4}{$x$}{$1$}
				\path (uasset) edge[bend right] node[below] {$1@[1,2]$} (v);	
				\path (v) edge[bend right] node[left] {$1@[1,2]$} (wasset);	
				\path (w) edge[bend right] node[above] {$1@[1,2]$} (x);	
				\path (x) edge[bend right] node[right] {$1@[1,2]$} (u);	
			\end{tikzpicture}
		}
		\hfill
		\subfloat[Only $u$ starts with \Euro1.\label{fig:cycleroulette}]{
			\begin{tikzpicture}
				\myassetnode{u}{0}{0}{$u$}{$1$}
				\myassetnode{v}{2.4}{0}{$v$}{$0$}
				\myassetnode{w}{2.4}{2.4}{$w$}{$0$}
				\myassetnode{x}{0}{2.4}{$x$}{$0$}
				\path (uasset) edge[bend right] node[below] {$1@[1,2]$} (v);	
				\path (v) edge[bend right] node[left] {$1@[1,2]$} (wasset);	
				\path (w) edge[bend right] node[above] {$1@[1,2]$} (x);	
				\path (x) edge[bend right] node[right] {$1@[1,2]$} (u);	
			\end{tikzpicture}
		}
		\hfill
		\subfloat[All nodes start with \Euro0.\label{fig:cycleempty}]{
			\begin{tikzpicture}
				\myassetnode{u}{0}{0}{$u$}{$0$}
				\myassetnode{v}{2.4}{0}{$v$}{$0$}
				\myassetnode{w}{2.4}{2.4}{$w$}{$0$}
				\myassetnode{x}{0}{2.4}{$x$}{$0$}
				\path (uasset) edge[bend right] node[below] {$1@[1,2]$} (v);	
				\path (v) edge[bend right] node[left] {$1@[1,2]$} (wasset);	
				\path (w) edge[bend right] node[above] {$1@[1,2]$} (x);	
				\path (x) edge[bend right] node[right] {$1@[1,2]$} (u);	
			\end{tikzpicture}
		}
		\caption{Examples illustrating the behaviour of cycles in the IDM. In all instances shown the schedule in which all nodes pay their debts in full at time 1 is valid.}
		\label{fig:cycle}
		
	\end{figure*}

	We use payment-cycles throughout our constructions in a context such as that in Fig.~\ref{fig:paymentcycle}. Here, a valid schedule is where all nodes pay their corresponding debts in full at time $t$. The effect is that the \Euro1 of cash assets at node $u$ is `transferred' to \Euro1 of cash assets at node $v$.

	\begin{figure*}[!ht]
	\centering
	\scalebox{1}{
		\begin{tikzpicture}
			\myassetnode{u}{0}{2}{$u$}{$1$}
			\myassetnode{w}{2}{2}{$w$}{$0$}
			\myassetnode{x}{4}{2}{$x$}{$0$}
			\myassetnode{v}{6}{2}{$v$}{$0$}
			\myassetnode{y}{3}{0}{$y$}{$0$}
			\path (u) edge node[above] {$1@t$} (w);	
			\path (x) edge node[above] {$1@t$} (w);	
			\path (x) edge node[above] {$1@t$} (v);	
			\path (wasset) edge node[left] {$2@t$} (y);	
			\path (y) edge node[right] {$2@t$} (x);	
		\end{tikzpicture}}

	\caption{Using a payment-cycle to effectively transfer \Euro1 of assets from node $u$ to node $v$.}
	\label{fig:paymentcycle}

	\end{figure*}
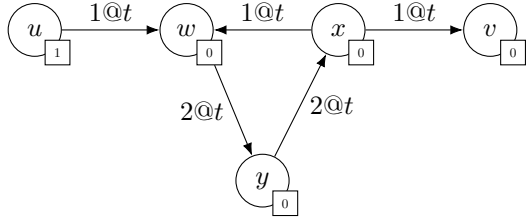

	\subsection{Canonical instances}
	
	We wish to replace certain IDM instances with equivalent yet simpler ones. For example, consider the instance given in Fig.~\ref{fig:intro example} but where every time-stamp in the instance is multiplied by a factor of $100$ (so that, for example, the debt from $w$ to $v$ becomes $25@[400,600]$). This `inflated' instance is in essence equivalent to the original one but has a lifetime of $600$.
	
	\begin{definition}
		Let $(G, D, A^0)$ be an instance. Then the set of time-stamps $\{t: D_{t_1}(e)=t \text{ or } D_{t_2}(e)=t \text{, for some edge } e\}$ is the set of \emph{extremal} time-stamps.
	\end{definition}
	
	There is a simple preprocessing step such that we can assume that the lifetime $T$ of any IDM instance is polynomially bounded in $n$ and $m$ (that is, the numbers of banks and debts, respectively). This preprocessing step modifies the instance such that every $1\leq t \leq T$ is an extremal time-stamp with the process being to simply omit non-extremal time-stamps and then compact the remaining time-stamps. Observe that this procedure is such that any valid schedule in the original IDM instance can be transformed into a valid schedule in the compacted instance so that these schedules have the same number of bankruptcies, eventual assets and so forth, and vice versa when the compacted instance is expanded into the original instance. Hence, we need not consider pathological cases in which the lifetime is, say, exponential in the number of nodes and debts. Given this restriction, we can now revise the notion of the size of an IDM instance to say that it is $n+m+b$ where $n$ is the number of banks, $m$ is the number of debts and $b$ is the maximum number of bits needed to encode any of the numeric values appearing as monetary amounts of debts.
	
	\begin{lem}\label{lem:polyverif}
		For any given IDM instance and any schedule, in any of the FP, PP or AoN variants, it is possible in polynomial-time both to check whether the schedule is valid and to compute the number of bankruptcies under the schedule.
	\end{lem}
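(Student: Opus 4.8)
The plan is to exhibit an explicit verification procedure and argue that each of its steps runs in time polynomial in the size $n+m+b$ of the instance together with the size of the given schedule. The key enabling observation is the one established just above: after canonicalization the lifetime $T$ is polynomially bounded in $n$ and $m$, so a schedule consists of only $|E|\,T = mT$ payment values, each encoded with polynomially many bits. Hence the schedule itself has polynomial size, and every payment value $p_e^t$ can be read and manipulated in polynomial time.

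First I would precompute, by a single left-to-right sweep over the times $t=1,\ldots,T$, all of the derived quantities introduced in Section~\ref{subsec:schedules}: for each node $v$ the totals $I_v^t = \sum_{e\in E_{\text{in}}(v)} p_e^t$ and $O_v^t = \sum_{e\in E_{\text{out}}(v)} p_e^t$, then the running cash $c_v^t = c_v^{t-1} + I_v^t - O_v^t$; and for each debt $e$ the cumulative amount paid $P_e^t = \sum_{s=1}^t p_e^s$. Each entry is an accumulation of at most $mT$ numbers of polynomial bit-length, so it again has polynomial bit-length and the entire precomputation costs $O(mT)$ arithmetic operations on polynomially-sized numbers. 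In the FP variant the same holds for rationals, since a sum of polynomially many rationals of polynomial bit-length again has polynomial bit-length.

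With these tables in hand, the four ``hard'' validity conditions of Definition~\ref{def:validschedule} become pointwise checks: non-negativity of payments ($p_e^t\ge 0$) and of cash ($c_v^t\ge 0$), no overpayment ($P_e^T \le D_a(e)$), and no early payment ($P_e^{\,t_1-1}=0$ where $t_1=D_{t_1}(e)$), together with the variant-specific membership test on each payment value ($p_e^t\in\mathbb{Q}$ for FP, $p_e^t\in\mathbb{N}$ for PP, or $p_e^t\in\{0,D_a(e)\}$ for AoN). All of these range over the $O(mT)$ table entries and so are verified in polynomial time. The bankruptcy count is equally direct: a bank $u$ is bankrupt at time $t$ exactly when it owns a debt $e$ that is overdue at $t$, i.e. $t\ge D_{t_2}(e)$ and $P_e^t < D_a(e)$; I would scan all $O(mT)$ debt--time pairs, mark the debtor of every overdue debt, and return the number of distinct marked banks.

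The step that requires the most care — and the one I expect to be the main obstacle — is the \emph{withholding} condition, since it is not a direct constraint on the recorded payment values but a statement about what each bank \emph{could} have paid toward an overdue debt, and its exact form depends on the variant. Once the tables above are available, however, this too reduces to a pointwise comparison. At a time $t$, a bank $u$ carrying an overdue debt is withholding precisely when it retains spare paying capacity: in the FP and PP variants this means leftover cash $c_u^t>0$ (in PP the outstanding amount $D_a(e)-P_e^t$ on any overdue debt is a positive integer, so a single euro of leftover cash already suffices to pay part of it), while in the AoN variant it means $c_u^t\ge D_a(e)$ for some overdue debt $e$, i.e. $u$ could additionally have discharged $e$ in full. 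In every case the test is a comparison between precomputed quantities performed over the $O(mT)$ node--time (or debt--time) pairs, hence polynomial time. Combining these steps, the whole verification together with the bankruptcy count runs in time polynomial in $n+m+b$ and the size of the schedule, which is what the lemma asserts.
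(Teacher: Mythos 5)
Your proposal is correct and follows essentially the same approach as the paper's proof sketch: a single sweep computing cash assets and overdue debts, pointwise checks of the validity conditions (including the same variant-specific withholding test, $c_u^t=0$ for FP/PP versus $c_u^t < D_a(e)$ for AoN), and counting bankruptcies as the number of distinct debtors of overdue debts. Your additional care about polynomially bounded $T$, bit-lengths of accumulated sums, and the variant membership tests on payment values only makes explicit what the paper leaves implicit.
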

	
	\begin{proof}[Proof sketch] It is possible to iterate over the schedule once and calculate: the cash assets of every node, and which debts are overdue at each time-stamp. 
    Computing the set $\{v | v $ has some overdue debt under $\sigma\}$ is then straightforward, and the number of bankruptcies is the cardinality of that set.

    It remains to check the validity of the schedule. We can efficiently verify that there are:
    \begin{description}
        \item[No withholding banks:] iterate once over the debts overdue at each time. If the debt $e=(u,v,i)$ is overdue at time $t$, verify that $c_u^t$ is insufficient to make a payment toward $e$ (i.e. $c_u^t= 0$ in the FP or PP model, or $c_u^t<D_a(e)$ in the AoN model).
        \item[No overpaid debts:] iterate over all debts and ensure payments made with reference to each are no more than the debt amount.
        \item[No debts paid too early:] ensure $p_e^t = 0$ for any $t < D_{t_1}(e)$, for each debt $e$.
    \end{description}
	\end{proof}

	\subsection{Problem definitions} \label{subsec:problems}
	
	We now define some decision problems with natural real-world applications.

    \begin{framed}
        \noindent\underline{\textsc{Bankruptcy Minimization}}
    	\begin{description}
    		\item Instance: an IDM instance $(G,D,A^0)$ and an integer $k$
    		\item Yes-instance: an instance for which there exists a valid schedule $\sigma$ such that at most $k$ banks go bankrupt at some time in the schedule $\sigma$.
    	\end{description}
    \end{framed}
     
    \begin{framed}
    	\noindent\underline{\textsc{Perfect Scheduling}}
    	\begin{description}
    		\item Instance: an IDM instance $(G,D,A^0)$
    		\item Yes-instance: an instance for which there exists a valid schedule $\sigma$ such that no debt is ever overdue in $\sigma$; that is, a \emph{perfect schedule}.
    	\end{description}
    \end{framed}	
    
    \begin{framed}
         
    	\noindent\underline{\textsc{Bailout Minimization}}
    	\begin{description}
    		\item Instance: an IDM instance $(G,D,A^0)$ (with $n$ banks) and an integer $b$
    		\item Yes-instance: an instance for which there exists a positive bailout vector $B=(b_{1}, b_{2}, \ldots,  b_{n})$ with $\sum_{i=1}^n b_{i} \leq b$ and valid schedule $\sigma$ such that $\sigma$ is a perfect schedule for the instance $(G,D,A^0+B)$.
    	\end{description}
    \end{framed}
 
	The problem \textsc{Perfect Scheduling} is equivalent to the \textsc{Bailout Minimization} problem where $b=0$ and to the \textsc{Bankruptcy Minimization} problem where $k=0$. 
 
    \begin{framed}
    	\noindent\underline{\textsc{Bankruptcy Maximization}}
    	\begin{description}
    		\item Instance: an IDM instance $(G,D,A^0)$ and an integer $k$
    		\item Yes-instance: an instance for which there exists a valid schedule $\sigma$ such that at least $k$ banks go bankrupt at some time in the schedule $\sigma$. 
    	\end{description}
     \end{framed}
 
	The problem \textsc{Bankruptcy Maximization} is interesting to consider for quantifying a `worst-case' schedule where banks' behavior is unconstrained beyond the terms of their debts.
	
	All of the problems above exist in the AoN, PP and FP variants and are in NP: for every yes-instance, there exists a witness schedule, polynomial in the size of the input, the validity of which can be verified in polynomial time (see Lemma \ref{lem:polyverif}). 
	
	Every valid PP schedule is a valid FP schedule whereas not every valid AoN schedule is a valid PP schedule. In an AoN schedule, a bank may go bankrupt while still having assets (insufficient to pay off any of its debts) whereas this is prohibited in any PP schedule as that bank would be withholding. If we restrict the instances to only those in which for every debt $e$, $D_a(e)=1$ then every valid AoN schedule for that instance is a valid PP schedule and a valid FP schedule.
	
	We call a digraph $G$ from some IDM instance a \emph{multiditree} whenever the footprint of $G$ is a tree. We call a multiditree in which every edge is directed away from the root a \emph{rooted out-tree} (or just \emph{out-tree}). By an \emph{out-path} we mean an out-tree where the footprint is a path and the root is either of the endpoints. We take this opportunity to note that an out-tree is both a directed acyclic graph (DAG) and a multiditree, but that not every multiditree DAG is an out-tree. 
	
	A summary of some of our upcoming results is given in Table~\ref{table:results} (with NP-c denoting `NP-complete' and P denoting `polynomial-time'). 
    However, note that there are other, more nuanced results in what follows that do not feature in Table~\ref{table:results}. Also, even though hardness for out-trees entails hardness for multiditrees and DAGs, we reference a separate result for the latter two settings where that is proven under different (stronger) constraints for the more general graph class. For example, our proof that \textsc{AoN Bankruptcy Minimization} is NP-complete on out-trees uses a construction requiring $T \ge 2$, but our proof of Theorem \ref{thm:bankmin} has $T=1$. 

	\begin{table}[!ht]
		\centering
		\setlength{\tabcolsep}{1pt}
		\begin{tabular}{
				| m{4.7cm}
				| >{\centering\arraybackslash}m{1.6cm} 
				| >{\centering\arraybackslash}m{1.6cm} 
				| >{\centering\arraybackslash}m{1.6cm} 
				| >{\centering\arraybackslash}m{1.8cm} |}
			\hline
			\footnotesize problem 
			& \footnotesize out-tree 
			& \footnotesize multiditree 
			& \footnotesize DAG  
			& \footnotesize general case \\
			\hline
			\footnotesize\textsc{FP Bankruptcy Minimization}  
			& \footnotesize ?                       
			& \footnotesize ?        
			& \footnotesize\hspace{.5em}NP-c \newline (Thm~\ref{thm:bankmin})      
			& \footnotesize\hspace{.5em}NP-c \newline (Thm~\ref{thm:bankmin}) \\ 
			\hline
			\footnotesize\textsc{PP Bankruptcy Minimization}                                 
			& \footnotesize ?
			& \footnotesize\hspace{.5em}NP-c \newline (Thm~\ref{thm:perfsched multiditree})
			& \footnotesize\hspace{.5em}NP-c \newline (Thm~\ref{thm:bankmin})
			& \footnotesize\hspace{.5em}NP-c \newline (Thm~\ref{thm:bankmin}) \\ 
			\hline
			\footnotesize\textsc{AoN Bankruptcy Minimization}
			& \footnotesize\hspace{.5em}NP-c \newline (Thm~\ref{thm:AoN3path})
			& \footnotesize\hspace{.5em}NP-c \newline (Thm~\ref{thm:AoN3path})
			& \footnotesize\hspace{.5em}NP-c \newline (Thm~\ref{thm:bankmin})
			& \footnotesize\hspace{.5em}NP-c \newline (Thm~\ref{thm:bankmin}) \\ 
			\hline
			\footnotesize\textsc{FP Perfect Scheduling}
			& \footnotesize \hspace{1em}P \newline (Thm~\ref{thm:ptime fractional})
			& \footnotesize \hspace{1em}P \newline (Thm~\ref{thm:ptime fractional})
			& \footnotesize \hspace{1em}P \newline (Thm~\ref{thm:ptime fractional})          
			& \footnotesize \hspace{1em}P \newline (Thm~\ref{thm:ptime fractional}) \\ 
			\hline
			\footnotesize\textsc{PP Perfect Scheduling} 
			& \footnotesize\hspace{1em}P \newline (Thm~\ref{thm:ptime out-tree})                       
			& \footnotesize\hspace{.5em}NP-c \newline (Thm~\ref{thm:perfsched multiditree})
			& \footnotesize\hspace{.5em}NP-c \newline (Thm~\ref{thm:perfschedDAG})
			& \footnotesize\hspace{.5em}NP-c \newline (Thm~\ref{thm:perfschedDAG}) \\ 
			\hline
			\footnotesize\textsc{AoN Perfect Scheduling}                                 
			& \footnotesize\hspace{.5em}NP-c \newline (Thm~\ref{thm:AoN3path})
			& \footnotesize\hspace{.5em}NP-c \newline (Thm~\ref{thm:AoN3path})
			& \footnotesize\hspace{.5em}NP-c \newline (Thm~\ref{thm:perfschedDAG}) 
			& \footnotesize\hspace{.5em}NP-c \newline (Thm~\ref{thm:perfschedDAG})  \\
			\hline
			\footnotesize\textsc{FP Bailout Minimization}
			& \footnotesize \hspace{1em}P \newline (Thm~\ref{thm:ptime fractional})
			& \footnotesize \hspace{1em}P \newline (Thm~\ref{thm:ptime fractional})
			& \footnotesize \hspace{1em}P \newline (Thm~\ref{thm:ptime fractional})          
			& \footnotesize \hspace{1em}P \newline (Thm~\ref{thm:ptime fractional}) \\ 
			\hline
			\footnotesize\textsc{PP Bailout Minimization} 
			& \footnotesize\hspace{1em}P \newline (Thm~\ref{thm:ptime out-tree})                       
			& \footnotesize\hspace{.5em}NP-c \newline (Thm~\ref{thm:perfsched multiditree})
			& \footnotesize\hspace{.5em}NP-c \newline (Thm~\ref{thm:perfschedDAG})
			& \footnotesize\hspace{.5em}NP-c \newline (Thm~\ref{thm:perfschedDAG}) \\ 
			\hline
			\footnotesize\textsc{AoN Bailout Minimization}                                 
			& \footnotesize\hspace{.5em}NP-c \newline (Thm~\ref{thm:AoN3path})
			& \footnotesize\hspace{.5em}NP-c \newline (Thm~\ref{thm:AoN3path})
			& \footnotesize\hspace{.5em}NP-c \newline (Thm~\ref{thm:perfschedDAG}) 
			& \footnotesize\hspace{.5em}NP-c \newline (Thm~\ref{thm:perfschedDAG})  \\
			\hline
			\footnotesize\textsc{FP Bankruptcy Maximization}                                
			& \footnotesize ?
			& \footnotesize ?
			& \footnotesize\hspace{.5em}NP-c \newline (Thm \ref{thm:bankmax})   
			& \footnotesize\hspace{.5em}NP-c \newline (Thm \ref{thm:bankmax}) \\ 
			\hline
			\footnotesize\textsc{PP Bankruptcy Maximization}                                
			& \footnotesize ?
			& \footnotesize ?
			& \footnotesize\hspace{.5em}NP-c \newline (Thm \ref{thm:bankmax})   
			& \footnotesize\hspace{.5em}NP-c \newline (Thm \ref{thm:bankmax}) \\ 
			\hline
			\footnotesize\textsc{AoN Bankruptcy Maximization}
			& \footnotesize\hspace{.5em}NP-c \newline (Thm~\ref{thm:AoN3pathBankMax})
			& \footnotesize\hspace{.5em}NP-c \newline (Thm~\ref{thm:AoN3pathBankMax})
			& \footnotesize\hspace{.5em}NP-c \newline (Thm~\ref{thm:AoN3pathBankMax}) 
			& \footnotesize\hspace{.5em}NP-c \newline (Thm~\ref{thm:AoN3pathBankMax})  \\
			\hline
		\end{tabular}\medskip
		\caption{Summary of results.}\label{table:results}
	\end{table}

    \subsection{Discussion of the model}
    We describe here some notable differences (and similarities) of the IDM as compared with other studied models. 
    
    First and foremost, the IDM is a temporal model; the eponymous ``interval debts'' are its principal distinguishing feature when contrasted with other financial network models.
    The \emph{timing} of payments, not their allocation to one payee or another, is the principal question. In fact, in \textsc{Perfect Scheduling} this is the \emph{only} question. As we shall see, under the restriction $D_{t_1}=D_{t_2}$ that problem (and its superproblem \textsc{Bailout Minimization}) become straightforwardly solvable in all variants.  All of our hardness results arise from the expressivity of that degree of freedom (the scheduling of payments sooner or later). Indeed, in \textsc{Bankruptcy Minimization} and \textsc{Bankruptcy Maximization} that freedom remains, and the problems remain NP-complete under the same restriction $D_{t_1}=D_{t_2}$. Consequently, the results of other works which do not have a temporal component \cite{KKZ24,EW21,PW20} do not straightforwardly carry over to the IDM. 
    
    We take this opportunity to emphasize that interval debts are practically motivated; in particular, some real-world debts may be paid neither early nor late (see, e.g., Figure \ref{fig:bond}). 

    
    \begin{wrapfigure}{l}{0.5\textwidth}        
        \begin{center}
        \includegraphics[width=0.9\linewidth]{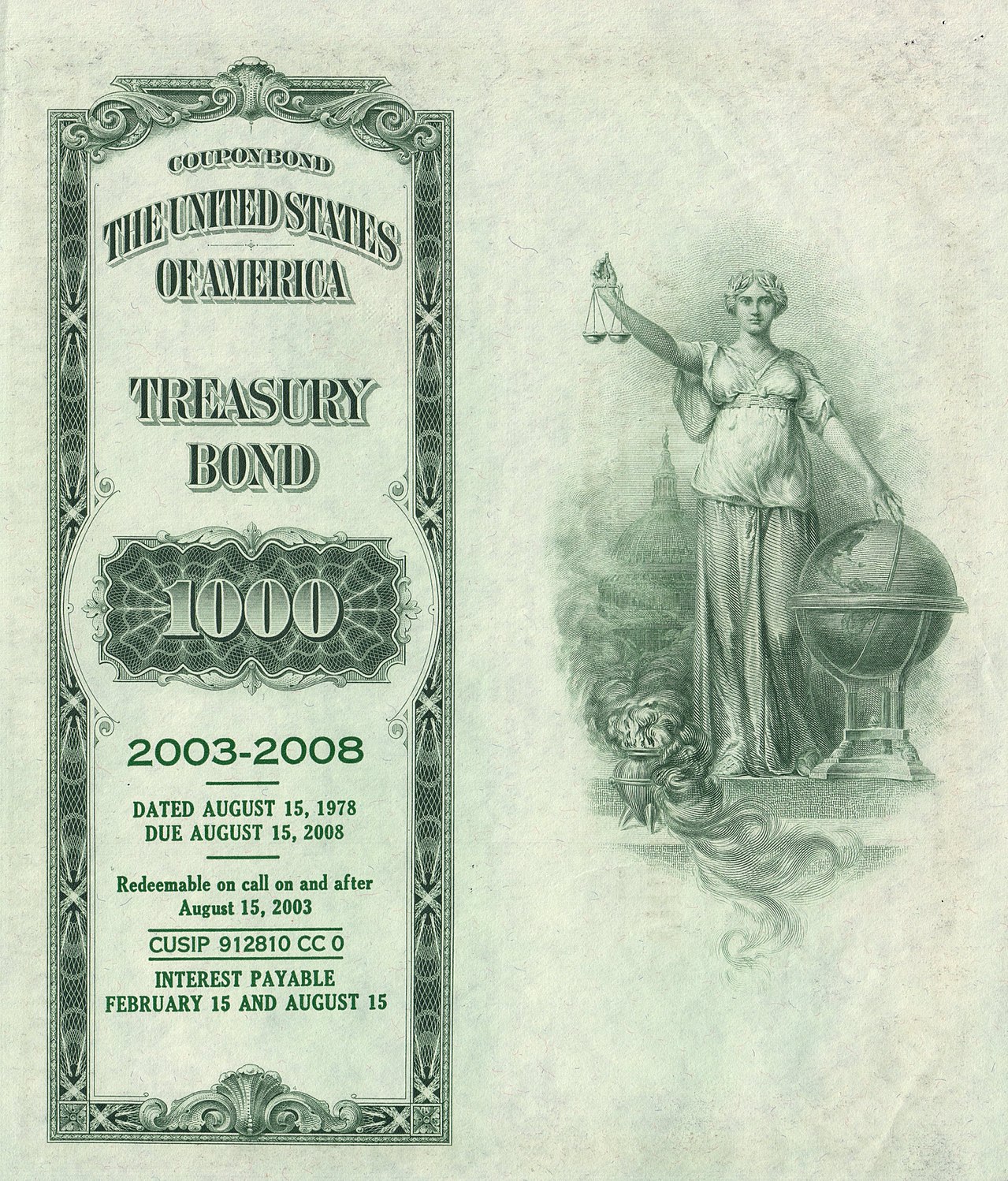}
        \end{center}
        \caption{A real-life interval debt: this 1978 US government bond is payable between 2003 and 2008.} \label{fig:bond}
    \end{wrapfigure}

    Non-proportional payments on debts are likewise nothing new to financial networks. Recent work has considered frameworks wherein \emph{priorities} are associated with each debt, with higher-priority debts paid off before lower-priority debts \cite{KKZ24,PW20}. 
    In such a setting, the priority of some debt may be either chosen by a regulatory authority or left to the individual agents. In the former case, the hardness of computing a solution which maximizes utility is of particular interest, whereas game-theoretic approaches are more relevant in the latter. Our focus in the present work is solely on questions of computational complexity from a centralized perspective. 
    
    We note that \textsc{Bailout Minimization} and its subproblem \textsc{Perfect Scheduling} remain unchanged as decision problems if bankruptcy in the IDM is redefined to require proportional payments, or immediate deletion of the bankrupt node. 
    Both problems fundamentally ask whether a perfect schedule exists; consequently, the manner in which bankruptcy and defaulting are modeled in the IDM are irrelevant. If there is a perfect schedule $\sigma$, then under $\sigma$ all debts are by definition paid on time, in full (and hence proportionally). Conversely, if no such $\sigma$ exists, then a bankruptcy (however it is modeled) must occur, and we have a no-instance of the respective problems.
    
    Lastly, we would like to comment briefly on the respective practical value of the AoN, PP and FP variants. The FP variant is quite intuitive for theoreticians, and yields our main tractable case. 
    On the other hand, the PP variant realizes the practical constraint that arbitrarily small transfers are impractical, and may be of interest where a fungible but indivisible resource needs to be exchanged.
    Personal communication \cite{LIPNE24} suggests that, perhaps unexpectedly, the AoN variant may well be the one of most interest to the finance community.
    Unlike most other models, the AoN model has the unintuitive property that a bankrupt bank may retain some assets. We note that this modelling of bankruptcy is not required for any of our hardness of tractability proofs in the AoN model. 
	
	\section{Our results}\label{section:results} 
	
	In this section we investigate the complexity of the problems presented above. We present our hardness results for \textsc{Bankruptcy Minimization}, \textsc{Perfect Scheduling} and \textsc{Bankruptcy Maximization} in Sections~\ref{subsec:hardnessBmin}, ~\ref{subsec:hardnessPS} and ~\ref{subsec:hardnessBmax}, respectively, and then in Section~\ref{subsec:poly} show that under certain constraints the problem \textsc{Bailout Minimization} and its subproblem \textsc{Perfect Scheduling} become tractable. 
	
	\subsection{Hardness results for \textsc{Bankruptcy Minimization}}\label{subsec:hardnessBmin}
	
	We begin with our core hardness result.
	
	\begin{thm}
		\label{thm:bankmin}
		For each of the AoN, PP and FP variants, the problem \textsc{Bankruptcy Minimization} is NP-complete, even when we restrict to IDM instances $(G,D,A^0)$ for which: $T=1$; $G$ is a directed acyclic graph with a longest path of length 4, has out-degree at most 2 and has in-degree at most 3; the monetary amount of any debt is at most \Euro3; and initial external assets are at most \Euro3 per bank.
	\end{thm}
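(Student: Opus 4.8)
The problem is in \textsc{NP} by Lemma~\ref{lem:polyverif}: a valid schedule is a polynomial-size witness whose validity and bankruptcy count can both be checked in polynomial time. For hardness I would reduce from a bounded-occurrence version of \textsc{3-Sat} (chosen so that the degree bounds below can be met, e.g.\ with each literal occurring in at most two clauses). The key simplification afforded by the restriction $T=1$ is that timing disappears entirely: every debt has $t_1=t_2=1$, so the only freedom in a schedule is how each debtor splits its available cash among its creditors. Note that for $T=1$ a perfect schedule (the $k=0$ case) exists if and only if no node is insolvent, which is decidable in polynomial time; hence any hardness must come precisely from allowing $k>0$ bankruptcies, exactly the ``relaxation makes it harder'' phenomenon highlighted in the introduction.

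The construction would be layered, giving a DAG of constant depth. For each variable $x_i$ I would introduce an insolvent \emph{chooser} node with a small amount of initial cash and out-degree $2$, one out-edge leading toward the ``true'' side and one toward the ``false'' side. Because an insolvent node is necessarily bankrupt and a bankrupt node may not be withholding, such a node must pay out \emph{all} of its cash; the two downstream branches are then dimensioned so that only a branch receiving the \emph{full} amount can in turn fully discharge its own obligation, while any strictly partial receipt leaves the branch unable to act. This is what turns the continuous allocation at a chooser into a genuine binary choice, even in the FP variant. Literal nodes feed \emph{clause} nodes (in-degree up to $3$, matching the three literals of a clause); a clause node is supplied by a satisfied literal and otherwise has too little cash to meet its own debt and goes bankrupt. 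Choosing all monetary amounts in $\{1,2,3\}$, all initial assets at most $3$, and keeping the layering shallow respects out-degree $\le 2$, in-degree $\le 3$, amounts $\le 3$, assets $\le 3$ and longest path $4$ simultaneously. Finally I would set $k$ equal to the number of nodes (the choosers and any auxiliary sources) that are \emph{forced} bankrupt regardless of the schedule, so that a schedule with at most $k$ bankruptcies is exactly one in which \emph{every} clause node avoids bankruptcy.

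Correctness then has the usual two directions. For completeness, from a satisfying assignment I would build an integral, ``tight'' schedule in which every debt is paid either in full or not at all and every bankrupt node ends with exactly zero cash; such a schedule is simultaneously valid in the AoN, PP and FP variants and realizes exactly the $k$ forced bankruptcies. For soundness I must show that \emph{any} valid FP schedule with at most $k$ bankruptcies yields a satisfying assignment: since only the $k$ forced-bankrupt nodes may be bankrupt, every clause node must avoid bankruptcy, so reading off, at each chooser, the branch that received a full unit recovers a boolean assignment satisfying every clause.

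The main obstacle is precisely the FP direction. The naive ``money as flow'' encoding fails here, because the fractional relaxation of \textsc{3-Sat} is trivially feasible (set every variable to value $\tfrac12$, making every clause receive $\tfrac32\ge 1$): if clause satisfaction were \emph{additive} in the incoming payments, every instance would become a spurious yes-instance under FP. The whole design therefore hinges on making clause satisfaction \emph{non-additive} --- each node's escape from bankruptcy must depend on receiving a full unit from a single funder, with the no-withholding (disjunctive) constraint ensuring that partially funded branches are dead weight rather than usefully combinable. Verifying that this non-additivity genuinely collapses every fractional schedule to a binary assignment, while simultaneously keeping all of the structural parameters within the stated bounds, is the delicate heart of the proof; the AoN and PP cases then follow from the same instance with only minor bookkeeping.
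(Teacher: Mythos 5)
Your construction is, in essence, the paper's own: a reduction from a bounded-occurrence \textsc{3-Sat} (the paper uses \textsc{3-Sat-3}) in which each variable gets an insolvent source node (your ``chooser'') holding \Euro3 and owing \Euro3 to each of two literal nodes, literal nodes owe \Euro1 to the clause nodes containing them plus a balancing debt to a sink so that each owes \Euro3 in total, and clause nodes owe \Euro1 to the sink; the bankruptcy budget is what collapses the continuous allocation at each chooser to a 0/full binary choice. However, as written your reduction contains a concrete accounting error that breaks it: you set $k$ equal to the number of nodes that are \emph{individually} forced bankrupt (``the choosers and any auxiliary sources''), i.e.\ essentially $k=n$. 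But in \emph{every} valid schedule at least one of the two branches of each variable is also bankrupt --- the chooser's \Euro3 cannot fully fund both --- even though neither branch is forced individually; these $n$ further bankruptcies are forced \emph{collectively}. So the minimum achievable number of bankruptcies is $2n$, not $n$: with your $k$, no schedule meets the budget even when $\phi$ is satisfiable, and your completeness claim (``the tight schedule realizes exactly the $k$ forced bankruptcies'') is false, since that schedule bankrupts the $n$ choosers \emph{and} the $n$ unfunded branches. The paper sets $k=2n$ precisely for this reason, and its soundness argument exploits the resulting tightness: exactly one literal node per pair is bankrupt and no clause node is.

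A second, related imprecision concerns your FP mechanism. Your claim that the no-withholding constraint makes ``partially funded branches dead weight'' is backwards in the FP and PP variants: a literal node that receives, say, \Euro1.5 is \emph{forced} to spend it (otherwise it is withholding), and it can perfectly well forward \Euro1 to a clause node, so clause satisfaction in this construction genuinely is additive in the incoming payments. (The ``dead weight'' reading is only correct in AoN, where a node holding less than the full amount of any of its debts can pay nothing and is not withholding.) What actually rules out fractional splitting is the bankruptcy budget itself: a chooser that splits its \Euro3 bankrupts \emph{both} of its literal nodes, giving at least $2n+1$ bankruptcies and exceeding $k=2n$; consequently every chooser pays \Euro3 to one literal and \Euro0 to the other, bankrupt literals receive and forward nothing, and every surviving clause node must have been paid by a fully funded (``true'') literal. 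Once $k$ is corrected to $2n$ and your soundness step is rephrased as this counting argument, your proof coincides with the paper's.
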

	
	\begin{proof}
		We build a polynomial-time reduction from the problem \textsc{3-Sat-3} to \textsc{Bankruptcy Minimization} so that all target instances satisfy the constraints in the statement of the theorem (the problem \textsc{3-Sat-3}, defined below, was shown to be NP-complete in \cite{Tov84}).\smallskip
		
		\noindent\underline{\textsc{3-Sat-3}}
		\begin{description}
		\item Instance: a c.n.f. formula $\phi$ over $n$ Boolean variables $v_1, v_2, ... , v_n$ so that each of the $m$ clauses $c_1, c_2, ... , c_m$ has size at most 3 and where there are exactly 3 occurrences of $v_i$ or $\neg v_i$ in the clauses
		\item Yes-instance: there exists a satisfying truth assignment for $\phi$.
		\end{description}
		We may (and do, throughout the paper) restrict ourselves to those instances in which every literal appears at least once and at most twice (that is, both $v_i$ and $\neg v_i$ appear in some clause, for $1\leq i\leq n$) and where no clause contains both a literal and its negation. We define the size of an instance $\phi$ to be $n$.
		
		Suppose that we are given a \textsc{3-Sat-3} instance $\phi$ of size $n$. We construct an IDM instance $(G,D,A^0)$ as follows. For any variable $v_i$, denote by $count_{v_i}$ (resp. $count_{\lnot v_i}$) the number of occurrences of the literal $v_i$ (resp. $\lnot v_i$) in $\phi$ (of course, $count_{v_i}+count_{\lnot v_i} = 3$). We build a digraph $G$ with:
		\begin{itemize}
			\item a \emph{source node} $s_i$, for each variable $v_i$, so that this node has initial external assets \Euro3 (every other type of node will have initial external assets \Euro0)
			\item  two \emph{literal nodes} $x_i$ and $\lnot x_i$, for each variable $v_i$
			\item a \emph{clause node} $q_j$, for each clause $c_j$
			\item a \emph{sink node} $d$.
		\end{itemize}
		We then add edges and debts as follows. For every $1 \leq i \leq n$:
		\begin{itemize}
			\item we add the debt $(s_i,x_i)$ with terms $3@1$
			\item we add the debt $(s_i,\lnot x_i)$ with terms $3@1$
			\item we add the debt $(x_i, d)$ with terms $count_{\lnot v_i}@1$	(note that the monetary amount to be paid is either \Euro1 or \Euro2)
			\item we add the debt $(\lnot x_i, d)$ with terms $count_{v_i}@1$	(note that the monetary amount to be paid is either \Euro1 or \Euro2)
			\item for every $1 \leq j \leq m$:
			\begin{itemize}
				\item we add the debt $(q_j, d)$ with terms $1@1$
				\item if the literal $v_i \in c_j$ then we add the debt $(x_i, q_j)$ with terms $1@1$
				\item if the literal $\lnot v_i \in c_j$ then we add the debt $(\lnot x_i, q_j)$ with terms $1@1$.
			\end{itemize}
		\end{itemize}
		Fig.~\ref{fig:bankruptcy minimization} shows a sketch of this construction (where nodes without any depicted initial external assets start with \Euro0). The IDM instance $(G,D,A^0)$ can clearly be built from $\phi$ in polynomial-time.
		
		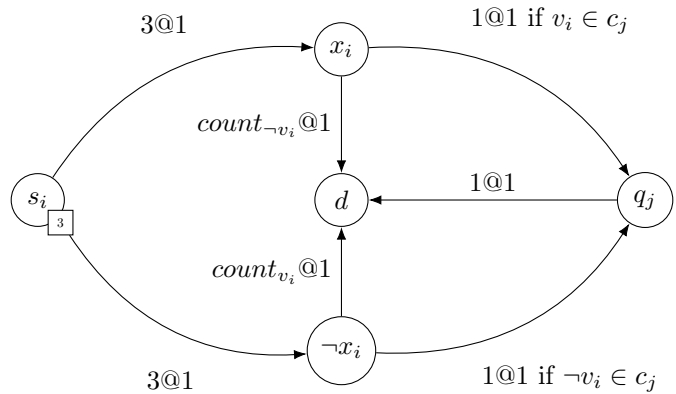
\begin{figure}[!ht]
			\centering
			\begin{tikzpicture}
				
				\myassetnode{si}{0}{0}{$s_i$}{$3$}
				\mynode{xi}{4}{2}{$x_i$}
				\mynode{nxi}{4}{-2}{$\lnot x_i$}
				\mynode{qj}{8}{0}{$q_j$}
				\mynode{d}{4}{0}{$d$}

				\path (si) edge[bend left] node[above, yshift=12pt] {$3@1$} (xi);	
				\path (siasset) edge[bend right] node[below, yshift=-12pt] {$3@1$} (nxi);	
				\path (xi) edge node[left] {$count_{\lnot v_i}@1$} (d);	
				\path (nxi) edge node[left] {$count_{v_i}@1$} (d);	
				\path (xi) edge[bend left]  node[above, yshift=12pt, xshift=12pt] {$1@1$ if $v_i \in c_j$} (qj);	
				\path (nxi) edge[bend right] node[below, yshift=-12pt, xshift=18pt] {$1@1$ if $\lnot v_i \in c_j$} (qj);	
				\path (qj) edge node[above] {$1@1$} (d);	
				
			\end{tikzpicture}
			\caption{Construction sketch for an IDM instance from a given formula $\phi$. Note that each of $x_i$ and $\lnot x_i$ owes \Euro 3 in total.}
			\label{fig:bankruptcy minimization}
		\end{figure}
		
		We claim that the instance $((G, D, A^0), 2n)$ of \textsc{Bankruptcy Minimization} as constructed above is a yes-instance of \textsc{Bankruptcy Minimization} (no matter which of the AoN, PP and FP variants we work with) iff $\phi$ is a yes-instance of \textsc{3-Sat-3}.
		
		Before we proceed, we have the following remark. Recall that for each $1\leq i \leq n$, $count_{v_i} + count_{\lnot v_i} = 3$; consequently, $count_{v_i}=2$ iff $count_{\lnot v_i}=1$, and vice versa. For each $1\leq i \leq n$, node $x_i$ (resp.~$\lnot x_i$) has a total monetary debt to the clause nodes of $count_{v_i}$ (resp.~$count_{\lnot v_i}$) and a total monetary debt to the sink node $d$ of $count_{\lnot v_i}$ (resp.~$count_{v_i}$); so, each literal node has a total monetary debt of \Euro3.
		
		\begin{clm}\label{clm:bankminclm1}
			If $\phi$ is a yes-instance of \textsc{3-Sat-3} then $((G,D,A^0),2n)$ is a yes-instance of \textsc{Bankruptcy Minimization}.
		\end{clm}
	
	\begin{proof}		
		Suppose that $\phi$ is satisfiable via some truth assignment $X$. Consider the schedule $\sigma$ for $(G, D, A^0)$ in which:
		\begin{itemize}
			\item every source node $s_i$ pays \Euro3 (at time 1, as are all payments) to the literal node $x_i$ (resp.~$\lnot x_i$) if $X(v_i) = True$ (resp.~$X(v_i) = False$) 
			\item every literal node $x_i$ (resp. $\lnot x_i$) for which $X(v_i) = True$ (resp. $X(v_i) = False$) pays all its debts in full
			\begin{itemize}
				\item as remarked above, this literal node has total monetary debt \Euro3 but, from above, it receives \Euro3 from $s_i$
			\end{itemize}
			\item every clause node pays its \Euro1 debt to the sink node $d$
			\begin{itemize}
				\item this is necessarily possible because $X$ is a satisfying truth assignment, meaning every clause node receives at least \Euro1 from some literal node corresponding to a literal in that clause set to $True$ by $X$.
			\end{itemize}
		\end{itemize}
		Note that $\sigma$ is valid in all three IDM variants. The total number of bankruptcies in $\sigma$ is $2n$: exactly $n$ bankrupt source nodes and exactly $n$ bankrupt literal nodes. Hence, if $\phi$ is satisfiable then the schedule $\sigma$ for $(G, D, A^0)$ results in at most (in fact, exactly) $2n$ bankruptcies.
		\end{proof}
	
	\begin{clm}\label{clm:bankminclm2}
		If $(G,D,A^0),2n)$ is a yes-instance of \textsc{Bankruptcy Minimization} then $\phi$ is a yes-instance of \textsc{3-Sat-3}.
	\end{clm}
		
	\begin{proof}
		Suppose that we have a schedule $\sigma$ for $(G, D, A^0)$ with at most $2n$ bankruptcies. Consider the set of all literals $L=\{v_1, \lnot v_1, v_2, \lnot v_2, ..., v_n, \lnot v_n\}$ w.r.t. $\phi$. Define the set of \emph{bankrupt literals} $B\subseteq L$ to consist of every literal whose corresponding (literal) node is bankrupt within $\sigma$. Define $X(w) = True$ iff $w\in L\setminus B$. We claim that $X$ is a (complete) truth assignment. Suppose it is not and that $X(v_i)=X(\neg v_i)=True$; so, both $x_i$ and $\neg x_i$ are bankrupt within $\sigma$. However, in any valid schedule (no matter what the IDM variant) every source node $s_i$ will necessarily go bankrupt and at least one of the literal nodes $x_i$ and $\lnot x_i$ will go bankrupt. Thus, as we have at most $2n$ bankruptcies, our supposition is incorrect. Alternatively, suppose that $X(v_i)=X(\neg v_i)=False$; so, neither $x_i$ nor $\neg x_i$ is bankrupt within $\sigma$. But, as stated, this cannot be the case. So, $X$ is a truth assignment; moreover, $\sigma$ has exactly $2n$ bankruptcies with exactly one of any pair of `oppositely-oriented' literal nodes bankrupt.
		
		Suppose, for contradiction, that $X$ is not a satisfying assignment. So, there exists at least one clause, $c_j$ say, such that every literal in the clause is made $False$ by $X$. By definition of $X$, we have that every literal node corresponding to one of these literals is a bankrupt node. Any such literal node must receive \Euro0 (as the `oppositely-oriented' literal node is not bankrupt and receives \Euro3); consequently, the clause node $q_j$ receives \Euro0 and is bankrupt. This yields a contradiction as we have exactly $2n$ bankrupt nodes (as detailed above). 
		\end{proof}
		
		Consequently, $\phi$ is satisfiable iff the IDM instance $(G,D,A^0)$ admits a schedule with at most $2n$ bankruptcies (again, this holds for each IDM variant). This concludes our proof.
	\end{proof}
	
	Our next result is perhaps rather surprising in that we restrict to IDM instances $(G,D,A^0)$ where $G$ is a fixed digraph.
	
	\begin{thm}\label{thm:bankminconstant}
		For each of the AoN, PP and FP variants, the problem \textsc{Bankruptcy Minimization} is weakly NP-complete, even when we restrict to instances $((G,D,A^0), k)$ where $G$ is a fixed, specific digraph with 32 nodes and $k=16$.
	\end{thm}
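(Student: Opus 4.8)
The plan is to prove membership in NP and then weak NP-hardness by reduction from \textsc{Partition} (given positive integers $a_1,\dots,a_m$ with $\sum_i a_i = 2S$, decide whether some subset sums to exactly $S$), the canonical weakly NP-complete problem. Membership is immediate and holds in all three variants: a schedule is a polynomial-size witness whose validity and bankruptcy count can be verified in polynomial time by Lemma~\ref{lem:polyverif}. The essential restriction here, in contrast to Theorem~\ref{thm:bankmin}, is that the digraph is \emph{fixed}, so the reduction may not imprint the combinatorial instance onto the topology. The guiding principle is therefore to push all input information into the \emph{numeric} data (monetary amounts and initial external assets, encoded in binary) while holding the vertex set — and hence the counted bankruptcies and the treewidth — constant; this is exactly what makes the hardness \emph{weak}. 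To carry the $m$ element-values on a constant vertex set I would use multi-edges, which the model explicitly permits: a single designated pair of nodes can carry $m$ parallel debts, the $i$-th with monetary amount $a_i$.

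Concretely, I would build a constant skeleton around a \emph{source} $s$ with initial external assets \Euro$S$, a \emph{collector} $M$, a \emph{sink} $d$, and a fixed number of auxiliary nodes used by the variant-unifying gadgets below; the values $a_1,\dots,a_m$ appear only as the amounts of the $m$ parallel debts feeding the selection mechanism. The collector $M$ is given a single debt of amount \Euro$S$ to $d$, due one step after it can receive, with no other income. The intended correspondence is that fully serving the $i$-th element-debt places $a_i$ in the chosen subset, so that the total reaching $M$ equals the selected subset-sum; since $s$ holds only \Euro$S$, this total is at most $S$, and $M$ can clear its \Euro$S$ debt (and thus avoid bankruptcy) if and only if the selected sum is \emph{exactly} $S$. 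I would lay out the auxiliary nodes so that the whole vertex set has exactly $32$ nodes and a fixed set of $16$ ``structural'' banks (the source, the cycle nodes, and so on) is bankrupt in \emph{every} schedule, while $M$'s bankruptcy is the single contested one. Setting $k=16$ then yields the equivalence ``there is a valid schedule with at most $k$ bankruptcies $\iff$ $M$ can be kept solvent $\iff$ some subset sums to exactly $S$,'' and the whole instance is produced in time polynomial in $m+\log S$.

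The delicate point — and the step I expect to be the main obstacle — is making this equivalence hold \emph{simultaneously} in all three variants. In the AoN variant it is essentially automatic: each element-debt is paid in full or not at all, the non-withholding rule forces $s$ to commit a maximal selection, and the exact-\Euro$S$ requirement at $M$ pins the selected sum to $S$. The difficulty is the divisible FP variant (and, to a lesser extent, PP), where a naive construction collapses: the scheduler could pay each element-debt fractionally and deliver exactly \Euro$S$ to $M$ irrespective of any genuine subset, destroying the subset-sum structure. To quantize the channels I would route each element through a payment-cycle commitment of the kind in Fig.~\ref{fig:paymentcycle}, sized to $a_i$, and arrange the non-withholding constraints so that any \emph{partial} advance along the $i$-th channel strands residual cash at a node that is then forced (on pain of being withholding) into an additional bankruptcy exceeding the budget $k=16$. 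If this can be made to bite, the only budget-feasible transfers toward $M$ along each channel are \Euro$0$ or the full \Euro$a_i$, restoring all-or-nothing behaviour at the level that governs $M$'s solvency even when individual payments may be fractional. Verifying that this cycle-commitment device genuinely forbids every intermediate transfer in FP and PP — while keeping the auxiliary node count and the structural-bankruptcy total fixed at $32$ and $16$ — is the technical heart of the argument and the claim most in need of care.

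Finally I would check both directions. If \textsc{Partition} has a solution, paying exactly the debts of a subset summing to $S$ (and clearing each forced cycle at its full value) gives a valid schedule in which only the $16$ structural banks are bankrupt. Conversely, any valid schedule with at most $16$ bankruptcies must keep $M$ solvent; by the quantization property the value delivered to $M$ is a sum of full element-amounts, so it equals $S$ exactly, exhibiting a \textsc{Partition} solution. Together with NP membership this gives weak NP-completeness on the single fixed $32$-node digraph with $k=16$, in each of the AoN, PP and FP variants.
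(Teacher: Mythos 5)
Your high-level skeleton is the right one, and it matches the paper's strategy in spirit: fix a small digraph, force a set of ``structural'' bankruptcies that occur in every valid schedule, set $k$ equal to that count, and let one contested node's solvency encode a weakly NP-hard numeric problem carried entirely by binary-encoded amounts on multi-edges. NP membership via Lemma~\ref{lem:polyverif} is also correct. But the step you yourself flag as the technical heart --- quantizing the channels so that FP/PP schedulers cannot split element amounts --- is genuinely missing, and the mechanism you sketch fails. Bankruptcy in the IDM is triggered by having \emph{any} overdue debt, regardless of the size of the shortfall. So if a fixed node $Z$ carries the ``must pay $a_i$ in full'' debt for channel $i$, then $Z$ is bankrupt not only when channel $i$ is partially served but also when it carries \emph{zero} --- which is exactly what happens whenever $a_i$ is assigned to the other side of the partition. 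On a fixed $32$-node graph you cannot spend one fresh node per element, so the same $Z$ must serve all $m$ channels, and every legitimate \textsc{Partition} witness would already bankrupt $Z$, destroying your equivalence. The ``stranded cash'' variant runs into the same wall: the non-withholding rule only forces a node that \emph{has} cash to pay it toward overdue debts; it can never convert a partial receipt into an extra bankruptcy that a zero receipt would avoid.

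It is worth seeing how the paper escapes this, because the escape dictates the choice of source problem. The paper does not detect partial payments through extra bankruptcies at all. Instead, every payment event (of any positive size) routed into a side forces the loss of \Euro$1$ to a sink, and lost euros can only be replaced from a pool of exactly $n/2$ replacement euros per side, held by $m_{12}^A$ and $m_{12}^B$ --- the two nodes whose potential bankruptcies are precisely the two contested ones inside the budget of $16$. Since all \Euro$2k$ from the source must ultimately reach the collectors, every loss must be replaced, so each side supports at most $n/2$ payment events; with $n$ elements each requiring at least one event, no element can be split across two events, and exactly $n/2$ elements go to each side with sum exactly $k$. This counting argument is why the paper reduces from \textsc{Equal Cardinality Partition} rather than plain \textsc{Partition}: the per-side event budget must match the number of elements exactly, which presupposes that feasible partitions have prescribed cardinality $n/2$ per side. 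Your reduction from plain \textsc{Partition} cannot reuse this device (a witness subset may have any size), so the FP and PP directions of your equivalence would require a genuinely new quantization idea; as written they are unproven, and the most natural implementation of your gadget is incorrect.
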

		
	\begin{proof}
		We build a polynomial-time reduction from \textsc{Equal Cardinality Partition} to \textsc{Bankruptcy Minimization} (\textsc{Equal Cardinality Partition}, defined below, was proven weakly NP-complete in \cite{Karp72}).\smallskip
		
		\noindent\underline{\textsc{Equal Cardinality Partition}}
		\begin{description}
			\item Instance: a multi-set of positive integers $S=\{a_1, a_2, ..., a_n\}$ where $n$ is even and with sum $sum(S)=2k$
			\item Yes-instance: there exist a partition of $S$ into two equal-sized sets $S_1$ and $S_2$ such that $sum(S_1)=sum(S_2)=k$.
		\end{description}
		The size of such an instance is $nb$ where $b$ is the least number of bits so that any integer $a_i$ can be represented in binary using $b$ bits.
		
		Given an instance $S=\{a_1, a_2, ..., a_n\}$ of \textsc{Equal Cardinality Partition} (where $n\geq 1$), we construct the IDM instance $(G,D,A^0)$ that is illustrated in Fig.~\ref{fig:o1 vertices}. Every appearance of the shaded node $p$ in Fig.~\ref{fig:o1 vertices} corresponds to the same single node, that we refer to as the sink, and thus this instance has 32 nodes in total. We use the symbol `$\infty$' to denote a suitably high monetary amount (though $2k+n+1$ suffices) and $T=10n+7$. Our IDM instance can trivially be constructed from $S$ in time polynomial in the size of the instance $S$. We now show that $(G,D,A^0)$ admits a valid schedule with at most 16 bankruptcies iff $S$ is a yes-instance of \textsc{Equal Cardinality Partition} (no matter whether we are in the AoN, PP or FP variant). Until further stated, we will work solely within the PP variant and return to the AoN and FP variants later. 
		
			\begin{figure}[!ht]
			\centering
			\scalebox{.62}{
				\begin{tikzpicture}
					\myassetnode{s}{0}{0}{$s$}{$2k$}
					\begin{scope}[xshift=0cm, yshift=-3cm]
						\node[state,circle,draw=red, dashed, text=red] (m1) at (0,0) {$m_1$};%
						\node[state,circle,draw=black, text=black, fill=lightgray] (t01) at (-2,0) {$p$};%
						\node[state,draw=none] (m1_out) at (-2,0) {};%
						\path[draw=red, dashed, text=red] (m1) edge[draw=red] node[above] {$\infty@1$} (m1_out);	
						\node[state,circle] (m2) at (1.5,-2) {$m_2$};%
						\node[state,circle,draw=red, dashed, text=red] (m3) at (0,-4) {$m_3$};%
						\path[draw=red, dashed, text=red] (m3) edge[draw=red] node[left] {$\infty@1$} (m1);	
						\path (m1) edge node[align=center, yshift = -10pt, xshift = 10pt] {$a_1@10,~ a_1@15$\\$...$\\$a_n@10n,~a_n@10n+5$} (m2);
						\path (m2) edge node[align=center, yshift = 10pt] {$a_1@10,~ a_1@15$\\$...$\\$a_n@10n,~a_n@10n+5$} (m3);
						\path (s) edge node[align=center] {$a_1@[10, 15]$\\$...$\\$a_n@[10n, 10n+5]$} (m1);
					\end{scope}
					\begin{scope}[xshift=-2.5cm, yshift=-9cm]
						\begin{scope}[xshift=0cm, yshift=0cm]
							\node[state,circle,draw=red, dashed, text=red] (m4a) at (0,0) {$m_4^A$};%
							\node[state,circle,draw=black, text=black, fill=lightgray] (t02) at (-2,0) {$p$};%
							\path[draw=red, dashed, text=red] (m4a) edge[draw=red] node[align=center, above] {$\infty@1$} (t02);	
						\end{scope}
						\node[state,circle] (m5a) at (1.5,-2) {$m_5^A$};%
						\node[state,circle,draw=red, dashed, text=red] (m6a) at (0,-4) {$m_6^A$};%
						
						\path[draw=red, dashed, text=red] (m6a) edge[draw=red] node[left] {$\infty@1$} (m4a);	
						\path (m4a) edge node[align=center, yshift = -5pt, xshift = 0pt] {$a_1@10$\\$...$\\$a_n@10n$} (m5a);
						\path (m5a) edge node[align=center, yshift=10pt, xshift = 2pt] {$a_1@10$\\$...$\\$a_n@10n$} (m6a);
						\path[draw=red, dashed, text=red] (m3) edge[draw=red] node[below, xshift=5pt] {$\infty@1$} (m4a);	
					\end{scope}
					
					\node[state, circle, draw=black, text=black] (m7a) at (-4, -15) {$m_7^A$};
					\path[draw=red, dashed, text=red] (m6a) edge[draw=red] node[right] {$\infty@1$} (m7a);	
					
					\begin{scope}[xshift=-2.7cm, yshift=-15cm]
						\begin{scope}[xshift=-2.5cm, yshift=3cm]
							\node[state,circle,draw=red, dashed, text=red] (m8a) at (0,0) {$m_8^A$};%
							\node[state,circle,draw=black, text=black, fill=lightgray] (t04) at (0,2) {$p$};%
							\path[draw=red, dashed, text=red] (m8a) edge[draw=red] node[align=center, right] {$\infty@1$} (t04);	
						\end{scope}
						\node[state,circle] (m9a) at (-5,3) {$m_9^A$};%
						\node[state,circle] (m10a) at (-5,0) {$m_{10}^A$};%
						\node[state,circle,draw=red, dashed, text=red] (m11a) at (-5,-3) {$m_{11}^A$};%
						\begin{scope}[xshift=-3cm, yshift=-3cm]
							\node[state,circle] (m12a) at (0.5,0) {$m_{12}^A$};
							\node[state,square,scale=0.5, fill=white!100] (m12a_asset) at (1, -.5) {$n/2$};
						\end{scope}
						
						\path (m7a) edge node[align=center, left, xshift=2pt, yshift=-6pt] {\\$1@10$\\$...$\\$1@10n$} (m8a);
						\path (m8a) edge node[align=center, above, yshift=5pt] {$1@11$\\$...$\\$1@10n+1$} (m9a);
						\path (m9a) edge node[align=center, left] {$1@11$\\$...$\\$1@10n+1$} (m10a);
						\path (m10a) edge node[align=center, left] {$1@11$\\$...$\\$1@10n+1$} (m11a);
						\path[draw=red, dashed, text=red] (m11a) edge[draw=red] node[xshift=10pt, yshift=20pt] {$\infty@1$} (m8a);	
						\path[draw=red, dashed, text=red] (m11a) edge[draw=red] node {$\infty@1$} (m12a);	
						\path (m12a) edge node[align=center, left,yshift=10pt] {$1@11$\\$...$\\$1@10n+1$\\} (m7a);
					\end{scope}

					\begin{scope}[xshift=-2.5cm, yshift=-19cm]
						
						\begin{scope}[xshift=0cm, yshift=0cm]
							\node[state,circle,draw=red, dashed, text=red] (m13a) at (0,0) {$m_{13}^A$};%
							\node[state,circle,draw=black, text=black, fill=lightgray] (t06) at (2,0) {$p$};%
							\path[draw=red, dashed, text=red] (m13a) edge[draw=red] node[align=center, below] {$\infty@1$} (t06);
						\end{scope}
						\node[state,circle] (m14a) at (-1.5,-2) {$m_{14}^A$};%
						\node[state,circle,draw=red, dashed, text=red] (m15a) at (0,-4) {$m_{15}^A$};%
						\node[state,circle] (m16a) at (0,-6) {$m_{16}^A$};
						
						\path (m7a) edge node[align=center, above, xshift=20pt] {$k@[1,T]$} (m13a);
						\path (m13a) edge node[align=center, left, yshift=10pt] {$a_1@12$\\$...$\\$a_n@10n+2$\\} (m14a);
						\path (m14a) edge node[align=center, left] {\\\\$a_1@12$\\$...$\\$a_n@10n+2$} (m15a);
						\path[draw=red, dashed, text=red] (m15a) edge[draw=red] node[align=center, right] {$\infty@1$} (m13a);	
						\path[draw=red, dashed, text=red] (m15a) edge[draw=red] node {$\infty@1$} (m16a);	
					\end{scope}
					
					\begin{scope}[xshift=2.5cm, yshift=-9cm]
						\begin{scope}[xshift=0cm, yshift=0cm]
							\node[state,circle,draw=red, dashed, text=red] (m4b) at (0,0) {$m_4^B$};%
							\node[state,circle,draw=black, text=black, fill=lightgray] (t03) at (2,0) {$p$};%
							\path[draw=red, dashed, text=red] (m4b) edge[draw=red] node[align=center, above] {$\infty@1$} (t03);	
						\end{scope}
						\node[state,circle] (m5b) at (-1.5,-2) {$m_5^B$};%
						\node[state,circle,draw=red, dashed, text=red] (m6b) at (0,-4) {$m_6^B$};%
						
						\path[draw=red, dashed, text=red] (m6b) edge[draw=red] node[right] {$\infty@1$} (m4b);	
						\path (m4b) edge node[align=center,left,yshift=13pt, xshift=2pt] {$a_1@15$\\$...$\\$a_n@10n+5$\\} (m5b);
						\path (m5b) edge node[align=center,left,yshift=-10pt, xshift=3pt] {$a_1@15$\\$...$\\$a_n@10n+5$} (m6b);
						\path[draw=red, dashed, text=red] (m3) edge[draw=red] node[below, xshift=-10pt] {$\infty@1$} (m4b);	
					\end{scope}
					
					\node[state, circle, draw=black, text=black] (m7b) at (4, -15) {$m_7^B$};
					\path[draw=red, dashed, text=red] (m6b) edge[draw=red] node[left] {$\infty@1$} (m7b);	
					
					\begin{scope}[xshift=2.2cm, yshift=-15cm]
						\begin{scope}[xshift=3cm, yshift=3cm]
							\node[state,circle,draw=red, dashed, text=red] (m8b) at (0,0) {$m_8^B$};%
							\node[state,circle,draw=black, text=black, fill=lightgray] (t05) at (0,2) {$p$};%
							\path[draw=red, dashed, text=red] (m8b) edge[draw=red] node[align=center, left] {$\infty@1$} (t05);
						\end{scope}
						\node[state,circle] (m9b) at (5.5,3) {$m_9^B$};%
						\node[state,circle] (m10b) at (5.5,0) {$m_{10}^B$};%
						\node[state,circle,draw=red, dashed, text=red] (m11b) at (5.5,-3) {$m_{11}^B$};%
						\begin{scope}[xshift=3cm, yshift=-3cm]
							\node[state,circle] (m12b) at (0,0) {$m_{12}^B$};
							\node[state,square,scale=0.5, fill=white!100] (m12b_asset) at (0.5, -.5) {$n/2$};
						\end{scope}
						
						\path (m7b) edge node[align=center, right, yshift=-5pt, xshift=-9pt] {\\$1@15$\\$...$\\$1@10n+5$} (m8b);
						\path (m8b) edge node[align=center, above, yshift=5pt] {$1@16$\\$...$\\$1@10n+6$} (m9b);
						\path (m9b) edge node[align=center, right] {$1@16$\\$...$\\$1@10n+6$} (m10b);
						\path (m10b) edge node[align=center, right] {$1@16$\\$...$\\$1@10n+6$} (m11b);
						\path[draw=red, dashed, text=red] (m11b) edge[draw=red] node[align=center, right, yshift=15pt, xshift=-5pt] {$\infty@1$} (m8b);	
						\path[draw=red, dashed, text=red] (m11b) edge[draw=red] node[align=center, above] {$\infty@1$} (m12b);	
						\path (m12b) edge node[align=center, right,yshift=0pt, xshift=3pt] {$1@16$\\$...$\\$1@10n+6$\\} (m7b);
					\end{scope}
					
					\begin{scope}[xshift=2.5cm, yshift=-19cm]
						
						\begin{scope}[xshift=0cm, yshift=0cm]
							\node[state,circle,draw=red, dashed, text=red] (m13b) at (0,0) {$m_{13}^B$};%
							\node[state,circle,draw=black, text=black, fill=lightgray] (t07) at (-2,0) {$p$};%
							\path[draw=red, dashed, text=red] (m13b) edge[draw=red] node[align=center, below] {$\infty@1$} (t07);
						\end{scope}
						\node[state,circle] (m14b) at (1.5,-2) {$m_{14}^B$};%
						\node[state,circle,draw=red, dashed, text=red] (m15b) at (0,-4) {$m_{15}^B$};%
						\node[state,circle] (m16b) at (0,-6) {$m_{16}^B$};
						
						\path (m7b) edge node[align=center, above, xshift=-20pt] {$k@[1,T]$} (m13b);
						\path (m13b) edge node[align=center, right, yshift=10pt] {$a_1@17$\\$...$\\$a_n@10n+7$\\} (m14b);
						\path (m14b) edge node[align=center, right] {\\\\$a_1@17$\\$...$\\$a_n@10n+7$} (m15b);
						\path[draw=red, dashed, text=red] (m15b) edge[draw=red] node {$\infty@1$} (m13b);	
						\path[draw=red, dashed, text=red] (m15b) edge[draw=red] node {$\infty@1$} (m16b);	
					\end{scope}
				
				\begin{scope}[xshift=0cm, yshift=-27cm]
					\node[state, circle] (d) at (0, 0) {$d$};
					\path (m16a) edge node[align=center, right, yshift=10pt] {$k@T$} (d);
					\path (m16b) edge node[align=center, left, yshift=10pt] {$k@T$} (d);
				\end{scope}

				\end{tikzpicture}
			}
			\caption{Construction of an IDM instance (with $k=16$) corresponding to the \textsc{Equal Cardinality Partition} instance $S=\{a_1, \ldots, a_n\}$. Dashed red edges are ``practically infinite'' bankrupting debts.}
			\label{fig:o1 vertices}
		\end{figure}
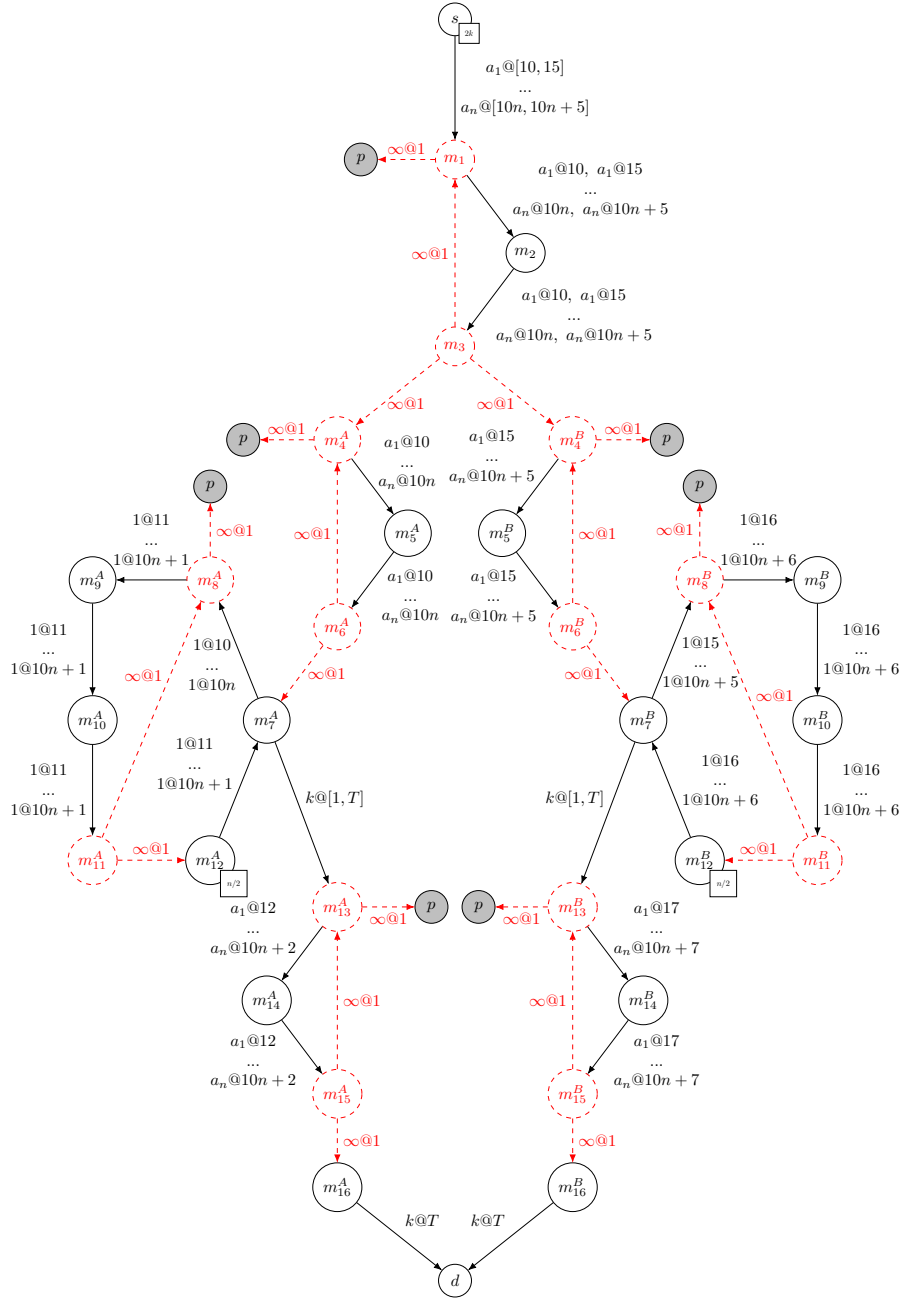
		
		\begin{clm}\label{clm:bankminconstant1}
			If the IDM instance $((G,D,A^0), 16)$ is a yes-instance of \textsc{Bankruptcy Minimization} then $S$ is a yes-instance of \textsc{Equal Cardinality Partition}.
		\end{clm}
		
		\begin{proof}
		Suppose that $(G,D,A_e^0)$ admits a valid schedule $\sigma$ with at most 16 bankruptcies. Note that the 14 nodes $m_1$, $m_3$, $m_4^A$, $m_6^A$, $m_8^A$, $m_{11}^A$, $m_{13}^A$, $m_{15}^A$, $m_4^B$, $m_6^B$, $m_8^B$, $m_{11}^B$, $m_{13}^B$ and $m_{15}^B$ are necessarily bankrupt in \emph{every} valid schedule because they are debtors of debts of monetary amount $\infty$ at time 1 and no payments are made before time 10 (these nodes are dashed and highlighted in red in Fig.~\ref{fig:o1 vertices} as are the debts at time 1 of monetary amount $\infty$). Note also that these nodes can never have any cash assets at any time and nor can the nodes $m_2$, $m_5^A$, $m_9^A$, $m_{10}^A$, $m_{14}^A$, $m_5^B$, $m_9^B$, $m_{10}^B$ and $m_{14}^B$ (as they would otherwise be withholding). Consequently, we can only have at most another 2 nodes going bankrupt within $\sigma$. We begin by showing that one of $\{m_7^A,m_{12}^A\}$ must be bankrupt and one of $\{m_7^B,m_{12}^B\}$ must be bankrupt (which will account for all bankrupt nodes).
		
		Suppose that none of the nodes $m_7^A$, $m_9^A$, $m_{10}^A$ and $m_{12}^A$ are bankrupt in $\sigma$. By considering $m_{12}^A$ at the times $t\in\{11,21,\ldots,10n+1\}$, on at least $\frac{n}{2}$ of these occasions $m_{12}^A$ must have received at least \Euro1 via payments from $m_{11}^A$ (so as to service all its debts to $m_7^A$). Consider the first occasion $t^\prime\in\{11,21,\ldots,10n+1\}$ that $m_{12}^A$ receives a non-zero payment from $m_{11}^A$ (note that all payments from $m_{11}^A$ to $m_{12}^A$ are made at some time from $\{11,21,\ldots,10n+1\}$). There must have been a payment of \Euro1 from $m_{10}^A$ to $m_{11}^A$ at time $t^\prime$ as well as payments of \Euro1 from $m_9^A$ to $m_{10}^A$ and from $m_8^A$ to $m_9^A$ at time $t^\prime$. So, $m_8^A$ must receive at least \Euro1 from $m_7^A$ and $m_{11}^A$ at time $t^\prime$. As $m_{11}^A$ makes a non-zero payment to $m_{12}^A$ at time $t^\prime$, any payment from $m_{11}^A$ to $m_7^A$ at time $t^\prime$ must be for some amount strictly less than \Euro1. Consequently, $m_8^A$ must receive a non-zero payment from $m_7^A$ at time $t^\prime$. The only way that this can happen is if there is an overdue debt from $m_7^A$ to $m_8^A$; that is, $m_7^A$ is bankrupt, which yields a contradiction. Hence, at least one of $m_7^A$, $m_9^A$, $m_{10}^A$ and $m_{12}^A$ is bankrupt. An analogous argument shows that at least one of $m_7^B$, $m_9^B$, $m_{11}^B$ and $m_{12}^B$ is bankrupt. Hence, exactly one of $m_7^A$, $m_9^A$, $m_{10}^A$ and $m_{12}^A$ is bankrupt and exactly one of $m_7^B$, $m_9^B$, $m_{11}^B$ and $m_{12}^B$ is bankrupt.
		
		Suppose that $m_9^A$ is bankrupt. So, $m_{10}^A$ is necessarily bankrupt. Conversely, if $m_{10}^A$ is bankrupt then $m_9^A$ must be bankrupt also. Consequently, neither $m_9^A$ nor $m_{10}^A$ is bankrupt and we must have that either $m_7^A$ or $m_{12}^A$ is bankrupt and analogously either $m_7^B$ or $m_{12}^B$ is bankrupt. In particular, $s$, $m_2$, $m_5^A$, $m_{13}^A$, $m_{15}^A$, $m_5^B$, $m_{13}^B$ and $m_{15}^B$ are not bankrupt.
		
		Let us turn to analysing the flow of resource via the schedule $\sigma$. As $\sigma$ is valid, we must have that both debts from $m_{16}^A$ and $m_{16}^B$ are paid on time with \Euro$2k$ in total reaching $d$. The question is: does this resource consist of the \Euro$2k$ emanating from $s$ or does it consist of resource emanating from $s$ but supplemented with resource emanating from $m_{12}^A$ or $m_{12}^B$? Let us look at the possible debt payments at time 10 (which is the earliest time that payments can be made). In particular, let us look at payments made by $m_6^A$ to $m_7^A$ at time 10. Note that all payments made from $m_6^A$ to $m_7^A$ are at a time from $\{10,20,\ldots,10n\}$.\smallskip
		
		\noindent\underline{Case (\emph{a})}: An amount of \Euro$x>0$ is paid from $m_6^A$ to $m_7^A$ at time 10.\smallskip
		
		\noindent There are two essential sub-cases at time 10:
		\begin{itemize}
			\item[(\emph{i})] $m_7^A$ services its debt to $m_8^A$, which pays \Euro 1 to the sink, with perhaps \Euro$y\geq 0$ paid from $m_7^A$ to $m_{13}^A$ and from there to the sink, so that \Euro$x-y-1\geq 0$ resides at $m_7^A$ in cash assets at time 10
			\item[(\emph{ii})] $m_7^A$ does not service its debt to $m_8^A$ and pays \Euro$x$ to $m_{13}^A$ with this payment immediately going to the sink, so that \Euro0 resides at $m_7^A$ in cash assets at time 10; hence, $m_7^A$ is bankrupt (note that no cash assets can reside at $m_7^A$ at time 10 as otherwise $m_7^A$ would be withholding).
		\end{itemize}
	
		Now consider what happens at time 11. Suppose that we are in Case (\emph{a}.\emph{i}). There must be a payment-cycle involving $m_8^A$, $m_9^A$, $m_{10}^A$ and $m_{11}^A$ and as $\frac{n}{2}\geq 1$, $m_{12}^A$ must service its debt to $m_7^A$, with perhaps $m_7^A$ paying \Euro$z\geq 0$ to $m_{13}^A$ which is immediately paid to the sink. The \Euro1 from $m_{12}^A$ does not supplement the resource emanating from $s$ but just `replaces' \Euro1 which was `lost' to the sink at time 10. Note that the cash assets of $m_{12}^A$ at time 11 are $\frac{n}{2}-1$.
		
		Suppose that we are in Case (\emph{a}.\emph{ii}). Note that as $m_7^A$ is bankrupt, $m_{12}^A$ can never become bankrupt and so must service its debts when required. There are four possibilities as regards what happens at time 11 (bearing in mind the overdue debt from $m_7^A$ to $m_8^A$):
		\begin{itemize}
			\item[(1)] $m_{12}^A$ pays \Euro1 to $m_7^A$ which pays \Euro1 to $m_8^A$ which immediately goes to the sink, with a payment-cycle involving $m_8^A$, $m_9^A$, $m_{10}^A$ and $m_{11}^A$
			\item[(2)] $m_{12}^A$ pays \Euro1 to $m_7^A$ which pays \Euro1 to $m_8^A$ which pays \Euro1 to $m_9^A$ which pays \Euro1 to $m_{10}^A$ which pays \Euro1 to $m_{11}^A$ which pays \Euro1 to $m_8^A$ which immediately goes to the sink
			\item[(3)] $m_{12}^A$ pays \Euro1 to $m_7^A$ which pays \Euro1 to $m_{13}^A$ which immediately goes to the sink, with a payment-cycle involving $m_8^A$, $m_9^A$, $m_{10}^A$ and $m_{11}^A$
			\item[(4)] $m_{12}^A$ pays \Euro1 to $m_7^A$ which pays \Euro1 to $m_8^A$ which pays \Euro1 to $m_9^A$ which pays \Euro1 to $m_{10}^A$ which pays \Euro1 to $m_{11}^A$ which pays \Euro1 to $m_{12}^A$; that is, we have a payment cycle involving $m_7^A$, $m_8^A$, $m_9^A$, $m_{10}^A$, $m_{11}^A$ and $m_{12}^A$.
		\end{itemize}
		In (1-3) above, the \Euro1 from $m_{12}^A$ does not supplement the resource emanating from $s$ but is lost to the sink (as are the \Euro$x>0$ at time 10). Note that in (3), the debt from $m_7^A$ to $m_8^A$ at time 10 is overdue and cannot be paid at time 11 as $m_7^A$ has no cash assets at time 10; so it remains overdue. In (4), again no supplement is made, although \Euro$\frac{n}{2}$ still resides at $m_{12}^A$ in cash assets at time 11, and \Euro$x>0$ emanating from $s$ has been lost to the sink. 
		
		In both Case (\emph{a}.\emph{i}) and Case (\emph{a}.\emph{ii}), at time 12, the only possible non-payment-cycle payments involve $s$, $m_1$, $m_{7}^A$, $m_{13}^A$, $m_{14}^A$ and $m_{15}^A$ but any such payments do not affect whether the resources emanating from $m_{12}^A$ or $m_{12}^B$ supplement the resource emanating from $s$. In Case (\emph{a}.\emph{ii}.3), the debt from $m_{7}^A$ to $m_8^A$ at time 10 is overdue and so must be paid from the cash assets of $m_{7}^A$ at time 12, if it has any and unless all of these assets are paid to $m_{13}^A$. All such payments by $m_7^A$ will immediately go to the sink.\smallskip
		
		\noindent \underline{Case (\emph{b})}: No payment is made from $m_6^A$ to $m_7^A$ at time 10.\smallskip
		
		\noindent Consequently, $m_7^A$ cannot pay its debt to $m_8^A$ and becomes bankrupt. At time 11, $m_{12}^A$ must necessarily service its debt to $m_7^A$ and there are four possibilities with these possibilities being exactly the possibilities (1-4) in Case (\emph{a}.\emph{ii}). As before, at time 12, the only possible non-payment-cycle payments made involve $s$, $m_1$, $m_{7}^A$, $m_{13}^A$, $m_{14}^A$ and $m_{15}^A$ but any such payments do not affect whether the resources emanating from $m_{12}^A$ or $m_{12}^B$ 
		supplement the resources emanating from $s$. Note that in (3), the debt from $m_7^A$ to $m_8^A$ at time 10 is still overdue at time 11 and cannot be paid at time 12 as $m_7^A$ has no cash assets at time 12; so it remains overdue. In (4), again no supplement is made, although \Euro$\frac{n}{2}$ still resides at $m_{12}^A$ in cash assets at time 12.\smallskip
		
		So, the resources emanating from $m_{12}^A$ cannot supplement the resources emanating from $s$ at any time $t<15$. An identical argument can be applied to time 15 so as to yield a similar conclusion as regards the resources emanating from $m_{12}^B$ at any time $t<20$. 
		
		Consider the situation at time 20. With regard to the sub-network involving $m_6^A$, $m_7^A$, $m_8^A$, $m_9^A$, $m_{10}^A$, $m_{11}^A$, $m_{12}^A$ and $m_{13}^A$ (that is, the sub-network of study above), the situation is similar to that at time 10 except that there may be additional restrictions on what can happen given: a possible existing overdue debt from $m_7^A$ to $m_8^A$ (the debt due at time 10); possible non-zero cash assets at $m_7^A$; and possibly reduced cash assets at $m_{12}^A$ of $\frac{n}{2}-1$. Note that if $m_7^A$ has cash assets prior to time 20 then this can be thought of as $m_7^A$ having acquired these assets from $m_6^A$ at time 20; that is, we are in Case (\emph{a}) above. Given the fact that the situation at time $20$ is a restricted version of the situation at time $10$ where the resources emanating from $m_{12}^A$ could not supplement those emanating from $s$, the same is true again. By analysing each time $t=25, 30, 35, \ldots$, we can see that no resources emanating from either $m_{12}^A$ or $m_{12}^B$ can supplement those emanating from $s$. Hence, in order to secure total cash assets of \Euro$2k$ at $d$ after time $T$, we need that all of the \Euro$2k$ resource emanating from $s$ reaches $d$; that is, none of it is lost to the sink en route (although some of it might have been `replaced' as per Case (\emph{a}.\emph{i})).
		
		We can now repeat the above analysis except that now we know that we cannot lose resource emanating from $s$ unless it is replaced as in Case (\emph{a}.\emph{i}). This simplifies things considerably. If $m_7^A$ has \Euro$x>0$ at time $t\in\{10,20,\ldots,10n\}$ (either as cash assets or from a payment by $m_6^A$ at time $t$) then it must be the case that $m_7^A$ services the debt to $m_8^A$ at time $t$, \Euro1 is lost to the sink and $m_7^A$ retains $x-1$ in cash assets. At time $t+1$, $m_{12}^A$ must service its debt to $m_7^A$ so as to replace the lost \Euro1, with \Euro$x$ residing at $m_7^A$ in cash assets at time $t+1$. Consequently, there can only be at most $\frac{n}{2}$ times in $\{10,20,\ldots,10n\}$ when $m_7^A$ receives a payment from $m_6^A$ (recall that $m_6^A$ only makes payments to $m_7^A$ at times from $\{10,20,\ldots,10n\}$). When $m_7^A$ either has no cash assets or receives no payment from $m_6^A$ at time $t\in\{10,20,\ldots,10n\}$, we either lose \Euro1 of cash assets from $m_{12}^A$ to the sink (and so we also lose some capacity to `replace' resource emanating from $s$ that is lost to the sink) or we are in case (4) above and have a payment cycle involving $m_7^A$, $m_8^A$, $m_9^A$, $m_{10}^A$, $m_{11}^A$ and $m_{12}^A$. Analogous comments can be made as regards the corresponding nodes superscripted $B$ and times in $\{15, 25, \ldots, 10n+5\}$. 
		
		Bearing in mind that none of the resource emanating from $s$ goes to the sink before it reaches either $m_6^A$ or $m_6^B$, at least $n$ distinct payments are made from $s$ and these payments result in at least $n$ payments in total from $m_6^A$ to $m_7^A$ or from $m_6^B$ to $m_7^B$. Thus, from above, $m_6^A$ must make exactly $\frac{n}{2}$ payments to $m_7^A$ and $m_6^B$ must make exactly $\frac{n}{2}$ payments to $m_7^B$. This means that any payment from $m_6^A$ to $m_7^A$ or from $m_6^B$ to $m_7^B$ must be for an amount from $\{a_1,a_2,\ldots,a_n\}$ and we have a partition of $\{a_1,a_2,\ldots,a_n\}$ into equal-sized sets both of whose sum is $k$; that is, our instance $S$ of \textsc{Equal Cardinality Partition} is a yes-instance and the claim follows.
		\end{proof}
		
		\begin{clm}\label{clm:bankminconstant2}
			If $S$ is a yes-instance of \textsc{Equal Cardinality Partition} then $((G,D,A^0),16)$ is a yes-instance of \textsc{Bankruptcy Minimization}.
		\end{clm}
	
		\begin{proof}
		Suppose that our instance $S=\{a_1,a_2,\ldots,a_n\}$ of \textsc{Equal Cardinality Partition} is such that $n=2m$ and $\sum_{i=1}^m a_{\alpha_i} = \sum_{i=1}^m a_{\beta_i}$, where $\{\alpha_i,\beta_i: 1\leq i\leq m\} = \{1,2,\ldots,n\}$. We need to build a valid schedule $\sigma$ for $(G,D,A^0)$ with at most 16 bankruptcies. Let $1\leq i\leq n$ and suppose that $i=\alpha_j$, where $1\leq j\leq m$. The node $s$ pays its $i$th debt to $m_1$ (that is, the debt $a_i@[10i,10i+5]$) at time $10i$ and this payment is percolated all the way down to $m_7^A$ at time $10i$. The debt $1@10i$ from $m_7^A$ to $m_8^A$ is paid at time $10i$ with this \Euro1 being replaced from $m_{12}^A$ at time $10i+1$ (see Case (\emph{a}.\emph{i}) from the proof of Claim~\ref{clm:bankminconstant1} above). We also have a payment cycle involving $m_8^A$, $m_9^A$, $m_{10}^A$ and $m_{11}^A$ at time $10i+1$. The cash assets of $a_i$ at $m_7^A$ are percolated down to $m_{16}^A$ at time $10i+2$. At time $10i+5$, we have suitable payment cycles involving: $m_1$, $m_2$ and $m_3$; $m_4$, $m_5$ and $m_6$; and $m_{13}^A$, $m_{14}^A$ and $m_{15}^A$. We also have a payment cycle involving $m_7^A$, $m_8^A$, $m_9^A$, $m_{10}^A$, $m_{11}^A$ and $m_{12}^A$ (see (4) from the proof of Claim~\ref{clm:bankminconstant1} above). An analogous course of action is taken if $i=\beta_j$, for some $1\leq j\leq m$. The resulting schedule is valid and the 16 nodes $m_1$, $m_3$, $m_4^A$, $m_6^A$, $m_7^A$, $m_{8}^A$, $m_{11}^A$, $m_{13}^A$, $m_{15}^A$, $m_4^B$, $m_6^B$, $m_7^B$, $m_{8}^B$, $m_{11}^B$, $m_{13}^B$ and $m_{15}^B$ are bankrupt. The claim follows.
		\end{proof}
		So, our main result holds for the PP variant of \textsc{Bankruptcy Minimization}. Note that everything above holds for the AoN variant too, though Theorem \ref{thm:AoN3path} is a strictly stronger result for that setting.
		
		Let us consider now the FP variant. As it happens, an argument similar to that above works within the FP variant although there are more complicated nuances. Rather than repeat the whole nuanced argument in detail, and given the above complete proof for the PP variant, we only sketch the proof for the FP variant. Henceforth, we assume that we are working within the FP variant.
		
		Consider the proof of the corresponding version of Claim~\ref{clm:bankminconstant1}. The reasoning that establishes that we must have that either $m_7^A$ or $m_{12}^A$ is bankrupt and that either $m_7^B$ or $m_{12}^B$ is bankrupt holds for the FP variant. Consider payments made from $m_6^A$ to $m_7^A$ at time 10.\smallskip
		
		\noindent\underline{Case (\emph{a})}: An amount of \Euro$x>0$ is paid from $m_6^A$ to $m_7^A$ at time 10.\smallskip
		
		\noindent There are two essential sub-cases at time 10:
		\begin{itemize}
			\item[(\emph{i})] $m_7^A$ services its debt to $m_8^A$, which pays \Euro 1 to the sink, with perhaps \Euro$y\geq 0$ paid from $m_7^A$ to $m_{13}^A$ and from there to the sink, so that \Euro$x-y-1\geq 0$ resides at $m_7^A$ in cash assets at time 10
			\item[(\emph{ii})] $m_7^A$ does not fully service its debt to $m_8^A$ but pays \Euro$w$, where $0\leq w < 1$, to $m_8^A$, which immediately goes to the sink, and \Euro$x-w$ to $m_{13}^A$, which immediately goes to the sink, so that \Euro0 resides at $m_7^A$ in cash assets at time 10; hence, $m_7^A$ is bankrupt.
		\end{itemize}
		
		Consider what happens at time 11. In Case (\emph{a}.\emph{i}), there must be a payment cycle involving $m_8^A$, $m_9^A$, $m_{10}^A$ and $m_{11}^A$ and $m_{12}^A$ services its debt to $m_7^A$. The \Euro1 from $m_{12}^A$ does not supplement the resource emanating from $s$ but just `replaces' \Euro1 which was `lost' to the sink at time 10.
		
		Suppose that we are in Case (\emph{a}.\emph{ii}). There are two scenarios:
		\begin{itemize}
			\item[(1)] $m_{12}^A$ pays \Euro1 to $m_7^A$ which pays \Euro$1-w$ to $m_8^A$ of which \Euro$w^\prime$ goes immediately to the sink and \Euro$1-w-w^\prime$ is paid to $m_9^A$; $m_7^A$ has cash assets of at most \Euro$w$ as it may be the case that $m_7^A$ also makes a payment to $m_{13}^A$ which goes straight to the sink, or
			\item[(2)] $m_{12}^A$ pays \Euro1 to $m_7^A$ which pays \Euro$y$ to $m_8^A$, where $0\leq y < 1-w$, of which \Euro$y^\prime$ goes immediately to the sink and \Euro$y-y^\prime$ is paid to $m_9^A$; also, \Euro$1-y$ is paid to $m_{13}^A$ which goes straight to the sink.
		\end{itemize}
		In (1), it must be the case that $m_8^A$ receives at least $w+w^\prime$ from $m_{11}^A$ (so as to fully service its debt to $m_9^A$); hence, $m_{11}^A$ pays at most \Euro$1-w-w^\prime$ to $m_{12}^A$. In any case, \Euro$x$ have been lost to the sink with $m_7^A$ gaining \Euro$w$ from $m_{12}^A$ (with $x \geq w$). In (2), it must be the case that $m_8^A$ receives at least \Euro$1 - (y-y^\prime)$ from $m_{11}^A$ (so as to fully service its debt to $m_9^A$); hence, $m_{11}^A$ pays at most \Euro$y-y^\prime$ to $m_{12}^A$. In any case, \Euro$x$ have been lost to the sink with $m_7^A$ gaining nothing from $m_{12}^A$.\smallskip
		
		\noindent\underline{Case (\emph{b})}: No payment is made from $m_6^A$ to $m_7^A$ at time 10.\smallskip
		
		\noindent At time 11, it must be the case that $m_{12}^A$ services its debt to $m_7^A$ and then we are essentially in Case (\emph{a}.\emph{ii}) above.\smallskip
		
		The outcome is that the resources emanating from $m_{12}^A$ cannot supplement the resources emanating form $s$ at any time $t< 15$. An identical argument can be applied to time 15 so as to yield a similar conclusion as regards the resources emanating from $m_{12}^B$ at any time $t<20$. The rest of the proof of Claim~\ref{clm:bankminconstant1} holds for the FP variant and we have that Claim~\ref{clm:bankminconstant1} holds for the FP variant. 
		
		The schedule $\sigma$ described in the proof of Claim~\ref{clm:bankminconstant2} is a valid schedule in the FP variant and so Claim~\ref{clm:bankminconstant2} also holds for the FP variant. This complete our proof of the main theorem.\end{proof}
	
	The proof of Theorem~\ref{thm:bankminconstant} clearly demonstrates the intricacies of reasoning in our financial networks. By Theorem~\ref{thm:bankminconstant}, it follows that each of the AoN, PP and FP variants of \textsc{Bankruptcy Minimization} are \emph{para-NP-hard} when parameterized by any parameter that is upper-bounded by the number of vertices, such as, e.g., the number of bankruptcies $k$ or the treewidth of the footprint. Note that Theorem~\ref{thm:bankminconstant} concerns weak completeness results (in particular, the integers in an instance of \textsc{Equal Cardinality Matching} appear explicitly as monetary debts in the corresponding instance of \textsc{Bankruptcy Minimization}).
	
	\subsection{Hardness results for \textsc{Perfect Scheduling}}\label{subsec:hardnessPS}
	
	We now turn to \textsc{Perfect Scheduling}. Since this is a subproblem of \textsc{Bankruptcy Minimization} and \textsc{Bailout Minimization}, hardness results in this section also apply to both of those problems.
	
	\begin{thm}
		\label{thm:perfschedDAG}
		The problem \textsc{Perfect Scheduling} is NP-complete for the AoN and PP variants even when we restrict to IDM instances $(G,D,A^0)$ for which: $T\leq 3$; $G$ is a directed acyclic graph with out-degree at most 3 and in-degree at most 3; the monetary amount of any debt is at most \Euro2; and any initial cash assets of a node is at most \Euro3 per bank.
	\end{thm}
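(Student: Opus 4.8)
Membership in NP is immediate: a perfect schedule is a witness of size polynomial in the instance (by the preprocessing bound on $T$ together with Lemma~\ref{lem:polyverif}), and one can check in polynomial time, in both the AoN and PP variants, that it is valid and that no debt is ever overdue. So the real work is the NP-hardness reduction, and the plan is to reduce from a bounded-occurrence satisfiability problem such as \textsc{3-Sat-3} (as defined in the proof of Theorem~\ref{thm:bankmin}), in which each literal occurs at most twice.

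The central design issue, and the main point of departure from the reduction for \textsc{Bankruptcy Minimization}, is that a perfect schedule forbids \emph{any} overdue debt, so the truth value of a variable cannot be encoded by a choice of \emph{which} debt a node leaves unpaid. Instead I would encode the Boolean choice purely \emph{temporally}. Since in a perfect schedule every debt must be paid in full, the only remaining freedom is \emph{when} each payment is made within its interval, and with $T\le 3$ each relevant debt is payable at one of two times, giving a clean bit. Concretely, I would build a variable gadget for $v_i$ in which a single unit of cash can be forwarded ``early'' (time $1$) or ``late'' (time $2$), and arrange the downstream debts so that this one choice activates $x_i$ early exactly when it leaves $\lnot x_i$ late, faithfully modelling a truth assignment. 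Each literal node would then owe \Euro1 to every clause node in which its literal appears; since each literal occurs at most twice, a literal node must deliver at most \Euro2 and can do so on time only if it was activated early. This keeps all monetary amounts at most \Euro2, out-degree and in-degree at most $3$, and the footprint acyclic by layering the gadget across times $1,2,3$.

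For each clause $c_j$ I would attach a clause gadget whose critical debt (due at the latest available time) can be serviced on time precisely when at least one of its $\le 3$ incident literal nodes was activated early, so that an unsatisfied clause necessarily leaves that debt overdue and rules out a perfect schedule. The two directions of correctness then follow the usual template: from a satisfying assignment I would read off the early/late timing bits and verify directly that every debt, clause debt included, is paid on time, so the schedule is perfect; conversely, from a perfect schedule I would read the timing bits back as a truth assignment and argue that a violated clause would force an overdue debt, contradicting perfection. Because every debt amount lies in $\{1,2\}$ and each gadget is engineered so that a partial payment can never meet a deadline that a full payment could not, the same construction is intended to work verbatim in both the AoN and PP variants. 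The hard part is the gadget engineering itself: making a single temporal bit drive a variable's value, keeping that value consistent across the (up to three) clauses referencing it, and having it interact correctly with the clause gadgets, all while simultaneously respecting acyclicity, $T\le 3$, degrees at most $3$, amounts at most \Euro2 and initial assets at most \Euro3.
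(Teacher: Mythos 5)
Your high-level strategy is the same as the paper's: reduce from \textsc{3-Sat-3} and encode each variable's truth value \emph{temporally}, by whether a single euro at a per-variable node is forwarded toward $x_i$ early or toward $\lnot x_i$ early (the paper uses times $1$ versus $3$), with clause nodes able to meet a critical deadline only if some incident literal was activated early. However, your proposal omits the idea that carries the entire proof, and you explicitly defer it (``the hard part is the gadget engineering itself''). The missing piece is the paper's \emph{multiplier gadget} (Fig.~\ref{fig:multiplier}, Claim~\ref{clm:multgadg}): a small subnetwork with its own internal reserves (\Euro2 at $m_0$ and \Euro2 at $s$) that can release \Euro2 to its literal node at time $1$ \emph{if and only if} it receives the \Euro1 trigger from $a_i$ at time $1$ or $2$; otherwise any early release provably bankrupts an internal node. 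This amplification is unavoidable in your own setup: the variable choice consists of a single euro, yet a literal occurring in two clauses must hand \Euro1 to each of two clause nodes \emph{at time 1}. You never say where that \Euro2 comes from, and with initial assets capped at \Euro3 and debts at \Euro2 there is no way to conjure it without a trigger-controlled reservoir of exactly this kind.

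Beyond the omission, the concrete details you do give would break the reduction. You place the clause gadget's critical debt ``at the latest available time,'' but it must be due at the \emph{earliest} time (the paper uses $1@1$ from $q_j$ to the sink $d$). In any perfect schedule every literal node, whether activated early or late, must pay its clause debts by time $3$ (the late-activated literal receives its \Euro2 at time $3$ and pays then), so every clause node holds \Euro1 by time $3$; a critical debt due at time $3$ would therefore always be serviceable on time, making every formula map to a yes-instance and destroying soundness. Dually, your claim that a literal ``can deliver \Euro1 to its clauses on time only if it was activated early'' would destroy completeness: in a satisfiable instance exactly one of $x_i,\lnot x_i$ is activated late, and that node must still pay its clause debts within their windows (just at time $3$ rather than time $1$), or no perfect schedule would exist at all. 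The correct discrimination is the paper's: literal-to-clause debts live on the full window $[1,3]$, the clause-to-sink debt is pinned to time $1$, and the multiplier gadget guarantees that only early activation produces cash at time $1$.
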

	
	\begin{proof}
		Let us work within the PP variant until further notice. We introduce the \emph{multiplier gadget} shown in Fig.~\ref{fig:multiplier} and use this gadget in our main reduction (the gadget sits within the blue dotted line). We claim the following.
		\\
		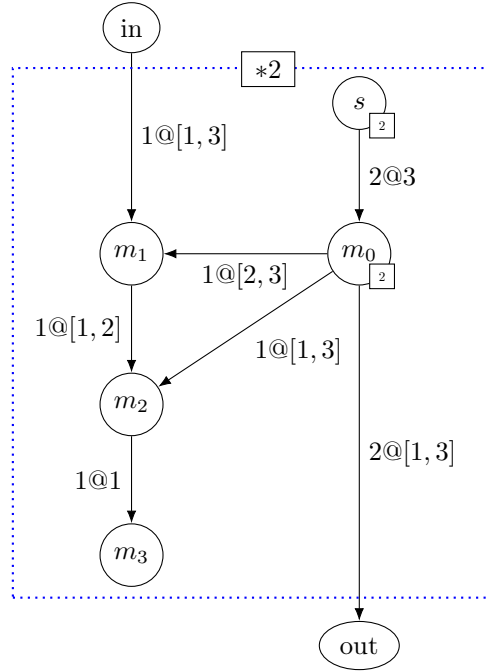
\begin{figure}[!ht]
			\centering  
			\begin{tikzpicture}
				
				\myassetnode{m0}{3}{7}{$m_0$}{$2$}
				\mynode{m1}{0}{7}{$m_1$}
				\mynode{m2}{0}{5}{$m_2$}
				\mynode{m3}{0}{3}{$m_3$}
				\myassetnode{S}{3}{9}{$s$}{$2$}
				
				\node[state] (in) at (0,10) {in};%
				\node[state] (out) at (3,1.8) {out};%

				\path (m0) edge node {$2@[1,3]$} (out);	
				\path (S) edge node {$2@3$} (m0);
				\path (m0) edge node {$1@[2,3]$} (m1);
				\path (m0) edge node {$1@[1,3]$} (m2);		
				\path (m1) edge node[left] {$1@[1,2]$} (m2);
				\path (m2) edge node[left] {$1@1$} (m3);
				\path (in) edge node {$1@[1,3]$} (m1);
				
				\node[draw=blue,dotted, thick,minimum width = 180pt, minimum height = 200pt, inner sep=1cm] at (1.6,5.95) (gadget) {};
				
				\node[state,rectangle, fill=white!100] (label) at (1.8,9.45) {$*2$};
				
			\end{tikzpicture}
			\caption{The multiplier gadget. Intuitively, the gadget ``amplifies'' payments into it at time 1 by a factor 2.}
			\label{fig:multiplier}
		\end{figure}  
		
		\begin{clm}\label{clm:multgadg}
			Assume that the node $in$ of the multiplier gadget has initial cash assets of \Euro1. 
			\begin{itemize}
				\item[(\emph{a})] In any perfect schedule for the multiplier gadget, if no payment is made by $in$ to $m_1$ at either time 1 or time 2 then no payment is made by $m_0$ to $out$ at time 1.
				\item[(\emph{b})] There is a perfect schedule $\sigma_0$ for the multiplier gadget so that a payment is made by $in$ at time 3.
				\item[(\emph{c})] There is a perfect schedule $\sigma_1$ for the multiplier gadget so that a payment is made by $in$ to $m_1$ at time 1 and a payment of \Euro2 is made from $m_0$ to $out$ at time 1. 
			\end{itemize}
		\end{clm}
		
		\begin{proof}
			Suppose that no payment is made by $in$ to $m_1$ at times 1 and 2; so $m_1$ receives no payment at time 1 and makes no payment at time 1. As there is a payment of \Euro1 from $m_2$ to $m_3$ at time 1, there must be a payment of \Euro1 from $m_0$ to $m_2$ at time 1. Suppose that a payment of \Euro1 is made from $m_0$ to $out$ at time 1. If so then $m_0$ can make no payment to $m_1$ at time 2 and $m_1$ is bankrupt which yields a contradiction. Hence, there is no payment from $m_0$ to $out$ at time 1. The statement (\emph{a}) follows.
			
			Consider the schedule whereby:
			\begin{itemize}
				\item at time 1: $m_0$ pays \Euro1 to $m_2$; $m_2$ pays \Euro1 to $m_3$
				\item at time 2: $m_0$ pays \Euro1 to $m_1$; $m_1$ pays \Euro1 to $m_2$
				\item at time 3: $s$ pays \Euro2 to $m_0$; $m_0$ pays \Euro2 to $out$; $in$ pays \Euro1 to $m_1$.
			\end{itemize}
			This yields a perfect schedule and statement (\emph{b}) follows.
			
			Suppose that there is a payment of \Euro1 made from $in$ to $m_1$ at time 1; so, we can also make payments of \Euro1 from $m_1$ to $m_2$ and from $m_2$ to $m_3$ at time 1. Additionally, we can make a payment of \Euro2 from $m_0$ to $out$ at time 1. At time 2, no payments are made. At time 3, we can make payments of: \Euro2 from $s$ to $m_0$; \Euro1 from $m_0$ to $m_2$; and \Euro1 from $m_0$ to $m_1$. This yields a perfect schedule and statement (\emph{c}) follows.
		\end{proof}
		
		Our reduction is from \textsc{3-Sat-3} (again) to \textsc{Perfect Scheduling}. As before, we assume that all \textsc{3-Sat-3} instances are such that every literal appears at least once and at most twice and that no clause contains both a literal and its negation. Given a \textsc{3-Sat-3} instance $\phi$ involving $n$ Boolean variables and $m$ clauses, we construct an IDM instance $(G,D,A^0)$ as portrayed in Fig.~\ref{fig:perfscheddag} (we omit the formal description of $(G,D,A^0)$ as it can be immediately derived from Fig.~\ref{fig:perfscheddag}; moreover, we proceed similarly with other IDM instances that we construct later on). The types of nodes (source, literal, clause and sink) are as in the proof of Theorem~\ref{thm:bankmin}, although we have additional so-called $a$-type nodes, and we abbreviate our multiplier gadget as a square box with $*2$ inside (note that we have $2n$ distinct copies of our multiplier gadget where the node $in$ is taken as $a_i$ and the node $out$ as the literal node $x_i$ or the literal node $\neg x_i$, for $1\leq i\leq n$; of course, we have $m$ clause nodes, $n$ source nodes, $n$ $a$-type nodes and one sink node).
		
		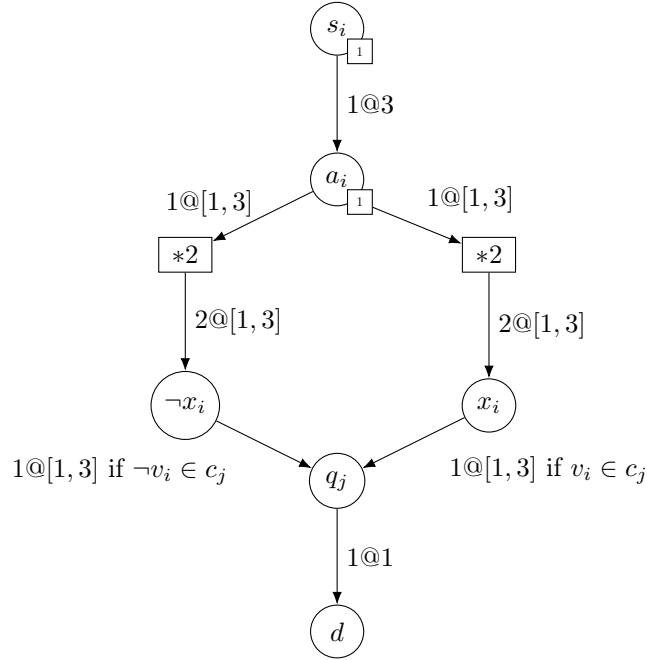
\begin{figure}[!ht]
			\centering
			\begin{tikzpicture}
				\myassetnode{si}{0}{8}{$s_i$}{$1$}
				\myassetnode{ai}{0}{6}{$a_i$}{$1$}
				\mynode{xi}{2}{3}{$x_i$}
				\mynode{nxi}{-2}{3}{$\lnot x_i$}
				\mynode{qj}{0}{2}{$q_j$}
				\mynode{d}{0}{0}{$d$}
				\node[state,rectangle, fill=white!100] (Tmul) at (2,5) {$*2$};
				\node[state,rectangle, fill=white!100] (Fmul) at (-2,5) {$*2$};
				\path (si) edge node[right] {$1@3$} (ai);
				\path (aiasset) edge node[right, yshift=10pt] {$1@[1,3]$} (Tmul);	
				\path (ai) edge node[left, yshift=5pt] {$1@[1,3]$} (Fmul);	
				\path (Tmul) edge node {$2@[1,3]$} (xi);
				\path (Fmul) edge node {$2@[1,3]$} (nxi);
				
				\path (xi) edge node[right, xshift=10pt, yshift=-10pt] {$1@[1,3]$ if $v_i \in c_j$} (qj);	
				\path (nxi) edge node[left, xshift=-10pt, yshift=-10pt] {$1@[1,3]$ if $\lnot v_i \in c_j$} (qj);	
				\path (qj) edge node[right] {$1@1$} (d);	
			\end{tikzpicture}
			\caption{Illustration of the reduction from \textsc{3-SAT 3} to \textsc{Perfect Scheduling} restricted to Directed Acyclic Graphs (DAGs).}
			\label{fig:perfscheddag}
		\end{figure}
				
		\begin{clm}\label{prop:psdag to sat}
			If $(G,D,A^0)$ is a yes-instance of \textsc{Perfect Scheduling} then $\phi$ is a yes-instance of \textsc{3-Sat-3}.
		\end{clm}
		
		\begin{proof}
			Define the truth assignment $X$ via: if, within $\sigma$, $x_i$ receives a payment of at least \Euro1 at time 1 then $X(v_i)=True$; otherwise $X(v_i)=False$.
			
			Fix $1\leq j\leq m$. We must have that $q_j$ pays \Euro1 to $d$ at time 1 and so $q_j$ must receive \Euro1 from some node $x_i$ at time 1 or \Euro1 from some node $\neg x_i$ at time 1.
			
			Suppose that $q_j$ receives \Euro1 from $x_{\tau_j}$ at time 1; in particular, $v_{\tau_j}\in c_j$. Hence, $x_{\tau_j}$ receives at least \Euro1 from its corresponding multiplier gadget at time 1 and so, by definition, $X(v_{\tau_j}) = True$ with the clause $c_j$ satisfied by $X$.
			
			Alternatively, suppose that $q_j$ receives \Euro1 from $\neg x_{\tau_j}$ at time 1; in particular, $\neg v_{\tau_j}\in c_j$. Hence, $\neg x_{\tau_j}$ receives at least \Euro1 from its corresponding multiplier gadget at time 1. By Claim~\ref{clm:multgadg}.\emph{a}, there must be a payment of \Euro1 made from $a_{\tau_j}$ to this multiplier gadget at either time 1 or time 2. Consequently, no payment is made by $a_{\tau_j}$ to the complementary multiplier gadget (that is, the one with a debt to $x_{\tau_j}$) at either time 1 or time 2. By Claim~\ref{clm:multgadg}.\emph{a}, no payment is received by $x_i$ at time 1 and so, by definition, $X(v_{\tau_j})=False$ with the clause $c_j$ satisfied by $X$. The claim follows.
		\end{proof}
		
		\begin{clm}\label{prop:sat to psdag}
			If $\phi$ is a yes-instance of \textsc{3-Sat-3} then $(G,D,A^0)$ is a yes-instance of \textsc{Perfect Scheduling}.
		\end{clm}
		
		\begin{proof}
			Let $X$ be a satisfying truth assignment for $\phi$. For each clause $c_j$, let $v_{\tau_j}$ be a Boolean variable whose occurrence in $c_j$, either via the literal $v_{\tau_j}$ or the literal $\neg v_{\tau_j}$, leads to $c_j$ being satisfiable. So, we get a list $L = v_{\tau_1},v_{\tau_2},\ldots,v_{\tau_m}$ of `satisfying' Boolean variables, possibly with repetitions although no variable appears in the list more than twice and if a Boolean variable $v$ does appear twice then the corresponding literals in the two corresponding clauses are both positive or both negative (of course, this stems from the format of $\phi$ as an instance of \textsc{3-Sat-3}): if the occurrences are positive then we say that $v$ has \emph{positive polarity}, with \emph{negative polarity} defined analogously. Note that any debt from a literal node to a clause node in our IDM instance exists solely because of the occurrence of the corresponding literal in the corresponding clause; in particular, we can never have debts from both literal nodes corresponding to some variable to the same clause node.
						
			Consider the following schedule $\sigma$. At time 1, for each $1\leq j\leq m$:
			\begin{itemize}
				\item $q_j$ pays \Euro1 to $d$
				\item if $v_{\tau_j}$ appears in $L$ with positive (resp. negative) polarity then $x_{\tau_j}$ (resp. $\neg x_{\tau_j}$) pays \Euro1 to $q_j$
				\item if $v_{\tau_j}$ appears in $L$ with positive (resp. negative) polarity then $a_{\tau_j}$ pays \Euro1 to the multiplier gadget which has a debt to $x_{\tau_j}$ (resp. $\neg x_{\tau_j}$).
			\end{itemize} 
			In addition, for each $1\leq j\leq m$, if $v_{\tau_j}$ appears in $L$ with positive (resp. negative) polarity then:
			\begin{itemize}
				\item within the multiplier gadget that has a debt to $x_{\tau_j}$ (resp. $\neg x_{\tau_j}$), we include the schedule $\sigma_1$ from Claim~\ref{clm:multgadg}.\emph{c}
				\item within the multiplier gadget that has a debt to $\neg x_{\tau_j}$ (resp. $x_{\tau_j}$), we include the schedule $\sigma_0$ from Claim~\ref{clm:multgadg}.\emph{b}.
			\end{itemize}
			At time 3, for each $1\leq j\leq m$:
			\begin{itemize}
				\item $s_{\tau_j}$ pays \Euro1 to $a_{\tau_j}$
				\item if $v_{\tau_j}$ appears in $L$ with positive (resp. negative) polarity then $a_{\tau_j}$ pays \Euro1 to the multiplier gadget which has a debt to $\neg x_{\tau_j}$ (resp. $x_{\tau_j}$).
			\end{itemize}
			Finally, having done the above, add any Boolean variable $v$ not appearing in $L$ to $L$ and proceed as above with these Boolean variables and assuming that they have positive polarity (note that the restriction of $X$ to these Boolean variables has no effect on whether $X$ satisfies $\phi$). What results is a perfect schedule. 	
		\end{proof}
	
		Given that our IDM instance $(G,D,A^0)$ of \textsc{Perfect Scheduling} can be built from the instance $\phi$ of \textsc{3-Sat-3} in polynomial time, we obtain our result for the PP variant.
		
		Consider now the AoN variant. As it happens, all of the above schedules are perfect schedules within the AoN variant and all associated reasoning still holds. Hence, we have our result for the AoN variant too.	
	\end{proof}

	\begin{thm}\label{thm:perfsched multiditree}
		The problem \textsc{Perfect Scheduling} is NP-complete for the PP and AoN variants even when we restrict to IDM instances $(G,D,A^0)$ for which: $G$ is a multiditree with diameter 6; all debts have monetary amount $\Euro1$; and there is a maximum of 6 debts between any pair of nodes. 
	\end{thm}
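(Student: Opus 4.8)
First I would observe that every debt in the instances I construct will have monetary amount \Euro1, so the AoN and PP variants coincide on them: a unit debt can only be paid in full or not at all, and the withholding condition (an overdue debt together with at least \Euro1 of cash) is literally identical in the two models, as already remarked. Hence it suffices to establish NP-hardness for the PP variant (equivalently AoN), and membership in NP is immediate from Lemma~\ref{lem:polyverif}. As in Theorems~\ref{thm:bankmin} and~\ref{thm:perfschedDAG}, I would reduce from \textsc{3-Sat-3}, assuming throughout that every literal occurs at least once and at most twice.

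\textbf{The encoding.} The plan is to encode the truth assignment by how a single indivisible euro is routed, and to push clause satisfaction into the \emph{temporal} dimension rather than the spatial one. For each variable $v_i$ I would place a source holding \Euro1 feeding a ``distributor'' with two out-branches, one to a \emph{true} literal node and one to a \emph{false} literal node; since the source has only \Euro1 and payments are integral, at most one branch can be funded, and this single irrevocable choice is exactly the value of $v_i$ and is automatically consistent across all of its occurrences. It is precisely this integrality that the FP variant lacks, which is consistent with \textsc{FP Perfect Scheduling} being tractable by Theorem~\ref{thm:ptime fractional} (under FP the euro could be split and a clause only ``half'' satisfied). Clause satisfaction I would enforce by assigning each clause $c_j$ its own disjoint time window $W_j$ and requiring, for a perfect schedule, that a unit payment reach a designated collector during $W_j$; a funded literal node can forward such a unit during the windows of the (at most two) clauses in which it occurs, while an unfunded one cannot. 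Instant forwarding lets such a unit traverse the bounded-depth tree within a single time step, so spatial depth never costs time. To move and replenish a euro along an arm in each of a literal's windows while keeping the footprint a tree, I would use payment-cycles realised by pairs of oppositely-directed unit debts between adjacent nodes: the undirected footprint stays a tree, yet directed $2$-cycles become available, and the cap of six debts per pair comfortably accommodates the constant number of times any super-edge is used. This is exactly the power a multiditree has over an out-tree (which admits no such $2$-cycles) and is what drives the contrast with the polynomial-time result of Theorem~\ref{thm:ptime out-tree}.

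\textbf{Correctness.} With the construction in place I would prove equivalence in the two usual directions. For the forward direction, given a satisfying assignment I fund the true literal of each variable, pick for every clause one satisfying literal, and use that literal's forwarding together with the associated payment-cycles to serve the collector's debt in each window $W_j$, with all remaining debts (the source transfers and the intra-arm payment-cycles) paid within their intervals; this yields a perfect schedule. For the backward direction, I argue that in any perfect schedule each variable's source funds exactly one literal node, defining a consistent assignment, and that the collector's debt in $W_j$ can be served only if some literal of $c_j$ was funded and forwarded during $W_j$, so every clause is satisfied. Finally I would verify the structural bounds: the footprint is a tree of diameter $6$, every debt has amount \Euro1, and no pair of nodes carries more than six debts.

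\textbf{The main obstacle.} The step I expect to be hardest is the very first design decision: a tree footprint cannot represent the bipartite variable--clause incidence of \textsc{3-Sat-3} spatially, since a clause node would need three literal ``parents'' but has a unique parent in a tree, and two occurrence-nodes of the same variable in different subtrees cannot be wired together without creating a cycle. Decoupling the logical structure from the spatial structure---pushing all incidence into disjoint clause windows while simultaneously guaranteeing that each variable commits to a single \emph{reusable} value and that each window is serviceable exactly when one of its literals is true---is the delicate core of the argument, and engineering the payment-cycle gadgets so that a funded literal has a fresh euro available in each of its (at most two) windows, all within diameter $6$ and six debts per pair, is where the bookkeeping is heaviest.
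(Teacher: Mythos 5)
Your proposal has a genuine gap at its core, and it lies in the very first design decision rather than in the bookkeeping you flag as the hard part. You encode the truth value of $v_i$ spatially: the source holds \Euro1, so ``at most one branch can be funded,'' and the funded branch is the value of $v_i$. But in \textsc{Perfect Scheduling} \emph{every} debt must be paid in full within its interval; the only degree of freedom a schedule has is \emph{when} each payment happens, never whether or to whom money flows (the paper stresses exactly this point in its discussion of the model). So if your distributor owes \Euro1 to the true-literal node and \Euro1 to the false-literal node, any perfect schedule pays both, and there is no ``unfunded'' node: either your instances admit no perfect schedule at all (even for satisfiable formulas, if the euro cannot return to the distributor), or --- once you add the reverse debts and payment-cycles you invoke for replenishment --- both literal nodes receive a euro at some time, and the funded/unfunded distinction on which your backward direction rests collapses. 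The paper's proof avoids this trap by making the variable choice \emph{temporal}, not spatial: each variable gadget has exactly one round-trip pair of unit debts on the edges $y_i$--$w_i$ and $w_i$--$r$, so its euro can make only one genuine excursion to the rest of the graph, and the truth value is encoded by \emph{which} of the two disjoint literal windows ($[10i+1,10i+4]$ versus $[10i+6,10i+9]$) that single excursion occupies.

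A second, related omission: payment-cycles cut both ways, and you only use their convenient side. A pair of mutually opposite debts active at a common time can be discharged with zero net cash, so if a clause collector's incoming debt is paired with any outgoing debt (as it must be if euros are to be ``replenished'' so a literal can serve its second occurrence), the collector's window can be serviced by a payment-cycle involving no genuine euro from any variable, and unsatisfiable formulas become yes-instances. The paper's clause gadget neutralizes this with two mechanisms your proposal lacks: (i) debts due at \emph{exact}, staggered times (an odd/even pair within each literal window, e.g.\ $b_j\to a_j$ and $e_j\to d_j$ at time $\alpha$, with the reverse debts at $\alpha+1$), so that genuine cash must sit at a node across a time step and cannot be simulated by a simultaneous cycle; and (ii) a counting argument --- only $|c_j|-1$ interval debts each way between $a_j$ and $e_j$ --- which forces at least one window of each clause to be serviced by a net inflow through $r$, i.e.\ by a variable euro whose excursion time matches a satisfying literal. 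Without analogues of both mechanisms your reduction does not go through; supplying them would essentially reconstruct the paper's construction, so this is not a correct alternative route but an incomplete sketch of the same one, missing the steps that make it sound.
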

	
	\begin{proof}
		We show that, given an instance $\phi$ of \textsc{3-Sat-3}, involving $n$ Boolean variables and $m$ clauses, we can construct in polynomial-time an IDM instance $(G,D,A^0)$ where $G$ satisfies the criteria stated in the theorem so that $(G,D,A^0)$ admits a perfect schedule iff $\phi$ has a satisfying truth assignment. As usual, we restrict ourselves to those instances of \textsc{3-Sat-3} in which every literal appears at least once and at most twice and where no clause contains both a literal and its negation. In particular, we can label any appearance of any literal in $\phi$ as the first appearance or the second appearance.
		
		Our reduction is portrayed in Fig.~\ref{fig:treehard_node}. There is a distinct \emph{variable gadget} for each Boolean variable $v_i$, with $1\leq i\leq n$, and a distinct \emph{clause gadget}, for each clause $c_j$, with $1\leq j\leq m$. There is one node $r$. The debts in any variable gadget or involving the node $r$ are self-evident whereas the debts in a clause gadget are more involved.
		\begin{itemize}
			\item If the literal $v_i$ is in the clause $c_j$ and this appearance is the first (resp. second) appearance of $v_i$ in $\phi$ then there are:
			\begin{itemize}
				\item debts $1@10(i-1)+1$ (resp. $1@10(i-1)+3$) from $b_j$ to $a_j$ and from $e_j$ to $d_j$
				\item debts $1@10(i-1)+2$ (resp. $1@10(i-1)+4$) from $a_j$ to $b_j$ and from $d_j$ to $e_j$.
			\end{itemize}
			\item If the literal $\neg v_i$ is in the clause $c_j$ and this appearance is the first (resp. second) appearance of $\lnot v_i$ in $\phi$ then there are:
			\begin{itemize}
				\item debts $1@10(i-1)+6$ (resp. $1@10(i-1)+8$) from $b_j$ to $a_j$ and from $e_j$ to $d_j$
				\item debts $1@10(i-1)+7$ (resp. $1@10(i-1)+9$) from $a_j$ to $b_j$ and from $d_j$ to $e_j$.
			\end{itemize}
			\item If the clause $c_j$ has 3 literals (resp. 2 literals) then there are:
			\begin{itemize}
				\item two separate\footnote{By having only unit debts in the instance we have that every PP schedule is also an AoN schedule, and vice versa; for the PP variant we could instead have a single debt $2@[1, T]$.} debts $1@[1,T]$ (resp. a single debt $1@[1,T]$) from $a_j$ to $e_j$ and from $e_j$ to $a_j$.
			\end{itemize}
		\end{itemize}
		The legend in Fig.~\ref{fig:treehard_node} shows the time intervals corresponding to each literal, which we call the \emph{active windows} of the literals, and the occurrence of $x_1$ (resp. $\neg x_2$, $x_6$) in $c_j$ in Fig.~\ref{fig:treehard_node} is the first (resp. second, first) occurrence.
		
	\begin{figure}[!ht]
		\centering
		\begin{tikzpicture}
			\mynode{r}{10}{-2}{$r$}
			
			\begin{scope}
				\myassetnode{u}{0}{0}{$u_i$}{1}
				\mynode{y}{3}{0}{$y_i$}
				\mynode{w}{6}{0}{$w_i$}
				\path (u) edge[bend left] node[align=center, yshift=0pt] {$1@10i+1$\\$1@10i+6$} (y);
				\path (y) edge[bend left] node[align=center, xshift=-28pt] {$1@10i+4$\\$1@10i+9$} (uasset);	
				\path (y) edge[bend left] node[above] {$1@[10i+1,10i+9]$} (w);	
				\path (w) edge[bend left] node[below] {$1@[10i+1,10i+9]$} (y);	
				\node[rectangle, draw=blue, dotted, thick, inner sep=1cm, minimum width = 200pt, minimum height = 100pt] (gadget) at (3,-0.1) {};
				\node[state,rectangle, fill=white!100] (label) at (3,-1.85) {variable gadget for $v_i$};
			\end{scope}
			
			\begin{scope}[yshift=-10.5cm, xshift=0cm]
				\mynode{a}{6}{0.5}{$a_j$}
				\myassetnode{b}{0}{0.5}{$b_j$}{1}
				\mynode{e}{6}{5.5}{$e_j$}
				\mynode{d}{0}{5.5}{$d_j$}
				\path (b) edge[bend left] node[above, align=left] {1@1~1@18~1@51} (a);	
				\path (a) edge[bend left] node[below, align=left] {1@2~1@19~1@52} (basset);

				\path (a) edge[bend left, looseness=.2, out=10, in=170, align=center] node[left, yshift=15pt] {$1@[1,T]$\\$1@[1,T]$} (e);
				\path (e) edge[bend left, looseness=.2, out=10, in=170, align=center] node[right, yshift=15pt] {$1@[1,T]$\\$1@[1,T]$} (a);

				\path (e) edge[bend left] node[below, align=left] {1@1~1@18~1@51} (d);	
				\path (d) edge[bend left] node[above, align=left] {1@2~1@19~1@52} (e);
				\node[rectangle, draw=blue, dotted, thick, inner sep=1cm, minimum width = 200pt, minimum height = 250pt] (gadget) at (3,3.1) {};
				\node[state,rectangle, fill=white!100] (label) at (3,7.5) {clause gadget for $c_j= (x_1, \lnot x_2, x_6)$};
			\end{scope}

			
			\path (r) edge[bend left] node[right,yshift=-.2cm] {1@[1,T]} (e);
			\path (e) edge[bend left] node[right,yshift=-.2cm] {1@[1,T]} (r);
			
			\path (r) edge[bend left] node[align=left,yshift=30pt,xshift=35pt] {$1@[10i+1,$\\\hspace{20pt}$10i+9]$} (w);
			\path (w) edge[bend left] node[align=left,yshift=-10pt] {$1@[10i+1,$\\\hspace{20pt}$10i+9$]} (r);
			
			\node (species1) [shape=rectangle,draw=none, anchor=north west] at (6.8,-7.5) {
				\begin{tabular}{
						| >{\centering\arraybackslash}m{.9cm} 
						| >{\centering\arraybackslash}m{2.2cm} |}
					\hline
					literal & active window \\
					\hline
					$x_1$ & [1,4] \\
					\hline
					$\lnot x_1$ & [6,9] \\
					\hline
					$x_2$ & [11,14] \\
					\hline
					$\lnot x_2$ & [16,19] \\
					\hline
					\ldots & \ldots \\
					\hline
					$x_{i+1}$ & $[10i+1,10i+4]$ \\
					\hline
					$\lnot x_{i+1}$ & $[10i+6,10i+9]$ \\
					\hline
					\ldots & \ldots \\
					\hline
				\end{tabular}
			};
			
		\end{tikzpicture}
		\caption{A reduction from \textsc{3-Sat-3} to \textsc{Perfect Scheduling} restricted to multiditrees.}
		\label{fig:treehard_node}
	\end{figure}
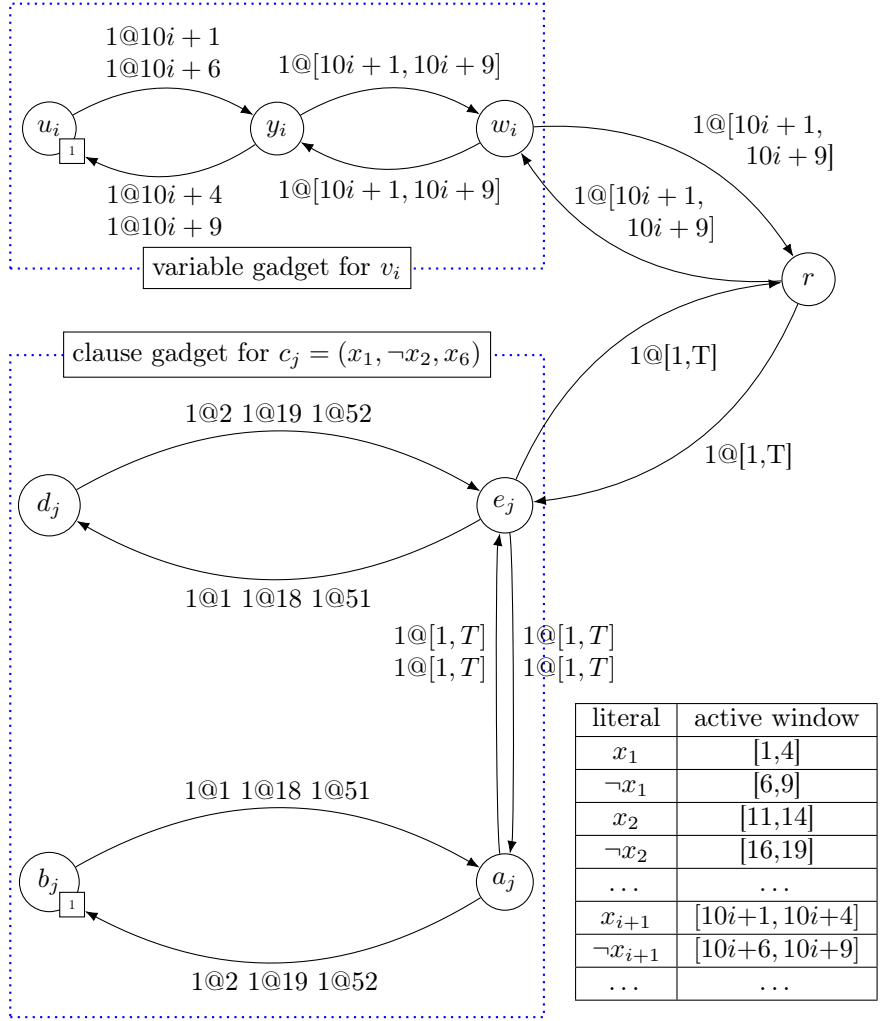

	\begin{clm}
		If $(G,D,A^0)$ is a yes-instance of \textsc{Perfect Scheduling} then  $\phi$ is a yes-instance of \textsc{3-Sat-3}.
	\end{clm}
	
	\begin{proof}
		Suppose that there is a perfect schedule $\sigma$ for $(G,D,A^0)$. Consider a clause gadget, corresponding to the clause $c_j$. There are 4 debts due within each active window corresponding to a literal in the clause. Suppose that the active windows are $[\alpha_1,\beta_1]$, $[\alpha_2,\beta_2]$ and $[\alpha_3,\beta_3]$, with $\beta_1<\alpha_2$ and $\beta_2<\alpha_3$ (we are assuming that our clause has 3 literals but the arguments for clauses with only 2 literals run analogously). If no payment has been received by $e_j$ from $r$ by time $\beta_1+1$ then: the \Euro1 at $b_j$ within the clause gadget must have been used in $\sigma$:
		\begin{itemize}
			\item to pay the debts of $b_j$ to $a_j$, $a_j$ to $e_j$ and $e_j$ to $d_j$ at time $\alpha_1$
			\item to pay the debts of $d_j$ to $e_j$, $e_j$ to $a_j$ and $a_j$ to $b_j$ at time $\alpha_1+1$,
		\end{itemize}
		if the appearance of the corresponding literal is the first; or
		\begin{itemize}
			\item to pay the debts of $b_j$ to $a_j$, $a_j$ to $e_j$ and $e_j$ to $d_j$ at time $\alpha_1+2$
			\item to pay the debts of $d_j$ to $e_j$, $e_j$ to $a_j$ and $a_j$ to $b_j$ at time $\alpha_1+3 = \beta_1$,
		\end{itemize}
		if the appearance of the corresponding literal is the second (note that the payments made towards the debts of $a_j$ to $e_j$ and $e_j$ to $a_j$ are only partial payments). We have an analogous situation as regards the interval $[\alpha_2,\beta_2]$ when no payment has been received by $e_j$ from $r$ by time $\beta_2+1$. However, if no payment has been received by $e_j$ from $r$ by time $\beta_3+1$ then we obtain a contradiction as the debts from $a_j$ to $e_j$ and from $e_j$ to $a_j$ will have been fully paid and consequently $e_j$ will be bankrupt at time $\beta_3$. Hence, within $\sigma$, there must be a payment of \Euro1 received by $e_j$ from $r$ and this is the only payment made from $r$ to $e_j$. Furthermore, there can be no payment from $e_j$ to $r$ strictly before the time at which a payment is made from $r$ to $e_j$ and if a payment is made from $e_j$ to $r$ at the same time that a payment is made from $r$ to $e_j$ then this can be interpreted as no resource leaving or entering the clause gadget, with the external resource satisfying both debts involving $r$ and $e_j$, which cannot be the case.
		
		Consider now a variable gadget, corresponding to the variable $v_i$. At times $t \in [0,10i] \cup [10i+4, 10i+5] \cup [10i+9, T]$ there must be cash assets of (at least) \Euro1 at node $u_i$. So, outside the active windows of the literals associated with the variable $v_i$, there must be at least \Euro1 of cash assets at the node $u_i$. Also, note that the \Euro1 originating at node $u_i$ can only leave and return to the variable gadget at some times in $[10i+1,10i+5]$ or in $[10i+6,10i+9]$ but not both. Consequently, at any time, there is at most \Euro1 of resource that originated within a variable gadget outside that particular variable gadget. 
		
		Consider the time interval $[1,9]$. Within this time interval, the \Euro1 originating at $u_1$ is the only euro that might be possibly `outside' the variable gadget corresponding to $v_1$. Call this euro $E$. Suppose $E$ leaves its variable gadget at some time in $[1,4]$. It needs to be back at $y_1$ by time 4 (and will never leave the variable gadget again). Suppose that $E$ is paid from $r$ to $e_j$, for some $j$, within the time interval $[1,4]$. If the literal $x_1$ is not in $c_j$ then all debts involving only $a_j$ and $b_j$ and all debts involving only $d_j$ and $e_j$ will be due at some time outside $[1,4]$ and so $E$ is of no use to the clause gadget for $c_j$. If $E$ is paid from $r$ to $e_j$ at time 1 (resp. by time 3) and the literal $x_1$ is in clause $c_j$ as the first (resp. second) appearance then $E$ can be used to pay the debts from $e_j$ to $d_j$ at time 1 (resp. time 3) and from $d_j$ to $e_j$ at time 2 (resp. time 4). Note that $E$ cannot be used to pay the debt from $a_j$ to $b_j$ at time 2 or time 4 as it would not get back to its variable gadget by time 4. Given that $E$ leaves the clause gadget by time 4, no more resource either enters or leaves the clause gadget. Alternatively, suppose that $E$ leaves the variable gadget corresponding to $v_1$ at some time in $[6,9]$. Exactly the same argument can be made as that above except with respect to the literal $\neg x_1$ appearing in some clause or other.
		
		Let us continue with the time interval $[11,19]$ and the \Euro1 originating at $u_2$. An analogous argument to that above holds. Note that all clause gadgets that were previously `visited' by $E$, above, are now `closed' in that they accept or eject no further resource. Indeed, an analogous argument to that above holds for every euro originating at some node $u_i$. As $\sigma$ is a perfect schedule, every clause gadget must be visited by some euro originating in some variable gadget and the particular literal corresponding to the active window during which the euro left its variable gadget must appear in the clause. Define the truth assignment $X$ via: $X(v_i) = True$ (resp. $False$) if the euro from the variable gadget corresponding to $v_i$ leaves its variable gadget during the active window $[10i+1,10i+4]$ (resp. $[10i+6,10i+9]$) and visits a clause gadget, with any Boolean variables $v_i$ for which $X(v_i)$ has not been defined such that $X(v_i)$ is defined arbitrarily. Given the above discussion, it should be clear that $X$ satisfies $\phi$.
	\end{proof}
		
	\begin{clm}
		If $\phi$ is a yes-instance of \textsc{3-Sat-3} then $(G,D,A^0)$ is a yes-instance of \textsc{Perfect Scheduling}.
	\end{clm}
	
	\begin{proof}
		Suppose that $X$ is a satisfying truth assignment for $\phi$. Define the schedule $\sigma$ as follows.
		
		If $X(v_i) = True$ then at time $10(i-1)+1$, the euro originating at $u_i$ is paid from $u_i$ to $y_i$ to $w_i$ to $r$.
		\begin{itemize}
			\item If the clause $c_j$ containing the first appearance of the literal $x_i$ exists and has not been visited by some euro originating within a variable gadget then at time $10i+1$, the euro is paid from $r$ to $e_j$ and on to $d_j$. At time $10i+2$, the same euro is paid from $d_j$ to $b_j$ and on to $r$.
			\item If there is no clause containing the literal $x_i$ or if the clause gadget corresponding to the first appearance of $x_i$ has already been visited in the schedule $\sigma$ by some euro originating within a variable gadget, then do nothing.
			\item If the clause $c_j$ containing the second appearance of the literal $x_i$ exists and has not been visited by some euro originating within a variable gadget then at time $10i+3$, the euro is paid from $r$ to $e_j$ and on to $d_j$. At time $10i+4$, the same euro is paid from $d_j$ to $b_j$ and on to $r$.
			\item If there is no clause containing the literal $x_i$ or if there is no second appearance of the literal $x_i$ or if the clause gadget corresponding to the second appearance of $x_i$ has already been visited in the schedule $\sigma$ by some euro originating within a variable gadget, then do nothing.
			\item Our euro at $r$ is paid at time $10i+4$ from $r$ to $w_i$ to $y_i$ to $u_i$.
		\end{itemize}
		If $X(v_i) = False$ then at time $10i+6$, the euro originating at $u_i$ is paid from $u_i$ to $y_i$ to $w_i$ to $r$.
		\begin{itemize}
			\item If the clause $c_j$ containing the first appearance of the literal $\neg x_i$ exists and has not been visited by some euro originating within a variable gadget then at time $10i+6$, the euro is paid from $r$ to $e_j$ and on to $d_j$. At time $10i+7$, the same euro is paid from $d_j$ to $e_j$ and on to $r$.
			\item If there is no clause containing the literal $\neg x_i$ or if the clause gadget corresponding to the first appearance of $\neg x_i$ has already been visited in the schedule $\sigma$ by some euro originating within a variable gadget, then do nothing.
			\item If the clause $c_j$ containing the second appearance of the literal $\neg x_i$ exists and has not been visited by some euro originating within a variable gadget then at time $10i+8$, the euro is paid from $r$ to $e_j$ and on to $d_j$. At time $10i+9$, the same euro is paid from $d_j$ to $e_j$ and on to $r$.
			\item If there is no clause containing the literal $\neg x_i$ or if there is no second appearance of the literal $\neg x_i$ or if the clause gadget corresponding to the second appearance of $\neg x_i$ has already been visited in the schedule $\sigma$ by some euro originating within a variable gadget, then do nothing.
			\item Our euro at $r$ is paid at time $10i+9$ from $r$ to $w_i$ to $y_i$ to $u_i$.
		\end{itemize}
		Within any clause gadget corresponding to $c_j$, the euro originating at $b_j$ is used to pay the debts corresponding to the literal not addressed by the euro from a variable gadget. It can easily be seen that $\sigma$ is valid and a perfect schedule.
	\end{proof}
	
 \noindent The result follows, given that the construction of $(G,D,A^0)$ can clearly be undertaken in polynomial-time.
\end{proof}

	In all the above results, the input IDM instance is allowed to have unlimited (i.e.,~unbounded) total initial assets which might be unrealistic in practically relevant financial systems. We now show that even in the highly restricted case where there is just \Euro1 in initial external assets in total, \textsc{Perfect Scheduling} still remains NP-complete in the AoN and PP variants.

\begin{thm}\label{thm:perfschedhampath}
	The problem \textsc{Perfect Scheduling} is NP-complete in the AoN and PP variants even when the total value of all initial external assets in any instance is \Euro1.
\end{thm}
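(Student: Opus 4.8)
The plan is to reduce from \textsc{Directed Hamiltonian Path}, which fits the single-euro regime naturally. The guiding intuition is that when the total initial external assets equal \Euro1 and payments are integral (the PP and AoN variants; this is exactly why the \emph{fractional} case escapes and is tractable by Theorem~\ref{thm:ptime fractional}), a perfect schedule is nothing more than a timed routing of one \emph{indivisible} token. Since cash is conserved and only \Euro1 is ever present, at each time exactly one node holds the euro, and paying a unit debt amounts to moving this token across the corresponding edge; using only unit debts also makes every valid PP schedule an AoN schedule and vice versa, so both variants are handled at once. Thus a perfect schedule exists if and only if the token can follow a walk (with instant forwarding permitted within a single time-step, as in Fig.~\ref{fig:paymentcycle}) that pays every debt in full, on time, and \emph{without overpaying} any debt. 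Membership in NP is immediate from Lemma~\ref{lem:polyverif}.

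First I would build the \emph{vertex gadgets}. For each vertex $v$ of the input digraph $H=(V_H,A_H)$ I introduce a pair $v^{\mathrm{in}},v^{\mathrm{out}}$ joined by a single mandatory unit debt $(v^{\mathrm{in}},v^{\mathrm{out}})$, which the token pays precisely by passing through $v$. Because overpayment is forbidden, the token may cross this debt \emph{at most once}; because a perfect schedule must pay it in full, the token must cross it \emph{at least once}. Hence every vertex gadget is traversed exactly once, so the token's trajectory is forced to be a \emph{simple} path rather than an arbitrary walk. Arcs of $H$ are then the only permitted transitions between gadgets: for each $(u,v)\in A_H$ I add a transition debt from $u^{\mathrm{out}}$ to $v^{\mathrm{in}}$. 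Covering all the visit debts exactly once while moving only along these transitions means visiting every vertex exactly once along $H$-arcs, i.e.\ tracing a Hamiltonian path; conversely any Hamiltonian path of $H$ yields an explicit perfect schedule by pushing the token along it. The construction is plainly polynomial.

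The delicate point is the tension between the global demand of a perfect schedule — that \emph{every} debt be paid in full — and the fact that a Hamiltonian path uses only some of the arcs of $H$. If each transition were simply a mandatory debt, then covering all of them would encode an \emph{Eulerian} rather than a Hamiltonian traversal. My plan to resolve this is to realise each arc-debt as the crossing edge of a small self-satisfying payment-cycle of the kind in Fig.~\ref{fig:cycleempty}, which is valid with no cash present: an arc \emph{not} taken by the token is then discharged ``for free'' by that \Euro0 cycle, whereas an arc that \emph{is} taken is discharged by the token's traversal (never both, so no overpayment occurs). This makes the transition debts effectively optional while the vertex-visit debts remain compulsory.

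The main obstacle, therefore, is to guarantee that no such free payment-cycle can ever discharge a vertex-gadget's visit debt, since otherwise the forcing of ``each vertex visited exactly once'' collapses — this is especially sharp when $H$ itself contains directed cycles. I would control this entirely through the active windows, assigning the visit debts and the cycle-closing edges \emph{disjoint} windows so that every directed cycle of the construction contains an edge that is inactive at any common clearing time; consequently the only way to discharge a visit debt is to inject the genuine euro, while the arc-closing cycles can still fire on their own. The same windowing must also block the token from exploiting instant forwarding to ``revisit'' a gadget through a phantom cycle, or from prematurely closing a cycle it later needs. Verifying that one family of windows simultaneously (i) admits every Hamiltonian-path traversal of $H$, (ii) leaves every unused transition self-dischargeable, and (iii) makes every visit debt payable \emph{only} by the real token is the crux of the argument and the part demanding the most care.
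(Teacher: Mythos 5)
Your overall strategy --- reducing from a Hamiltonian path problem and viewing a perfect schedule in the single-euro, integral-payment regime as the routing of one indivisible token, with zero-cash payment-cycles mopping up the debts the token never touches --- is the same strategy as the paper's (which reduces from \textsc{Sourced Hamiltonian Path}). However, there are two genuine gaps. The first is concrete and already breaks the easy direction of the reduction: you give every arc debt $u^{\mathrm{out}} \to v^{\mathrm{in}}$ a reverse ``closing'' debt so that \emph{unused} arcs can discharge themselves as a two-debt payment-cycle, but a perfect schedule must also pay the closing debt of every arc the token \emph{does} use, and your construction has no way to do this: re-firing the two-cycle would overpay the already-paid arc debt, and the euro, having moved forward, can never return to $v^{\mathrm{in}}$, since your vertex gadget $v^{\mathrm{in}} \to v^{\mathrm{out}}$ has no backward edge. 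Already for $H$ a single arc $(a,b)$ --- a trivial yes-instance --- the closing debt from $b^{\mathrm{in}}$ back to $a^{\mathrm{out}}$ is unpayable, so no perfect schedule exists. The paper's construction is organised around exactly this difficulty: all return debts are due at time $T$, each at-least-once gadget (Fig.~\ref{fig:atleastonce}) contains an internal return debt $1@T$ from $v_R$ to $v_L$, and in the intended schedule the euro retraces the whole path \emph{backwards} at time $T$, paying the return debts of the used edges, while the return debts of unused edges fire as payment-cycles at time $T$.

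Second, what you yourself identify as the crux --- a single family of active windows satisfying your conditions (i)--(iii) --- is never constructed, and the mechanism you propose (windows on a \emph{single} visit debt per vertex) cannot deliver it. If the visit-debt windows are generous enough that the token may visit vertices in an arbitrary order (which completeness requires, since the reduction cannot know in which order a Hamiltonian path will visit the vertices), then for any directed cycle of $H$ the corresponding cycle of visit debts and arc debts in your construction is co-active at some time and fires as a zero-cash payment-cycle, discharging those visit debts with no euro present; the instance then certifies only a spanning union of one path and some directed cycles --- a structure decidable in polynomial time --- rather than a Hamiltonian path, so soundness fails. Conversely, staggering the windows to kill such cycles cannot avoid harming completeness: by Helly's theorem for intervals, an empty common intersection forces two of the windows on the cycle to be disjoint, which pre-commits the relative order of two of the required payments and thereby excludes traversal orders that a legitimate Hamiltonian path of some input may need. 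The paper escapes this dilemma with a mechanism absent from your plan: multiplicity and counting rather than timing asymmetry. Every gadget has the \emph{same} time structure --- $n$ unit debts from $v_L$ to $v_C$ due at the odd times $1,3,\dots,T-2$ against only $n-1$ counter-debts from $v_C$ to $v_L$, and symmetrically for $v_C$ and $v_R$ at even times --- so internal payment-cycles can cover all but one forced debt per gadget, the real euro must supply the remaining one, and the exact lifetime $T=2n+1$ forces the euro to do so in a different gadget at every odd time, i.e.\ to visit every gadget exactly once, in any order. Without this counting idea (or an equivalent), the tension you flag in your final paragraph is not merely a step ``demanding care'' but an actual obstruction.
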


\begin{proof}
	The following proof applies to both the AoN and PP variants. Our reduction is a reduction from the problem \textsc{Sourced Hamiltonian Path} defined as follows.\smallskip
	
	\noindent\underline{\textsc{Sourced Hamiltonian Path}}
	\begin{description}
		\item Instance: a digraph $H$ and a vertex $x$
		\item Yes-instance: there exists a Hamiltonian path in $H$ with source $x$.
	\end{description}
	This problem can be trivially shown to be NP-complete by reducing from the standard problem of deciding whether a digraph has a Hamiltonian path \cite{Karp72}.
	
	Let $H$ be some digraph on the $n$ vertices $\{x_i:1\leq i\leq n\}$ and w.l.o.g. let $x=x_1$. In order to describe our reduction, we first describe a gadget, namely the \emph{at-least-once} gadget. We have one of these gadgets for each vertex of $H$ and we refer to the gadget corresponding to the vertex $x_i$ of $H$ as at-least-once$(i)$. Our at-least-once gadget can be defined as in Fig.~\ref{fig:atleastonce} where the value $T$ is defined as $2n+1$. Note that the gadget is exactly the nodes and debts within the blue dotted box and so contains its own copies of nodes $v_L$, $v_C$ and $v_R$ and the $4n-1$ debts involving them. The nodes $v^\prime_R$ and $v^{\prime\prime}_L$ are not part of the gadget but are nodes in other gadgets as we now explain.

	\begin{figure}[!ht]
	\centering
	\scalebox{0.95}{
		\begin{tikzpicture}
			\begin{scope}
				\mynode{vl}{0.3}{0}{$v_L$}
				\mynode{vc}{4}{0}{$v_C$}
				\mynode{vr}{7.7}{0}{$v_R$}
				
				\mynode{vrprime}{-1.8}{0}{$v^\prime_R$}
				\mynode{vlprimeprime}{9.8}{0}{$v^{\prime\prime}_L$}

				\path (vrprime) edge[bend left] node[above] {$1@[1,T]$} (vl);
				\path (vr) edge[bend left] node[above] {$1@[1,T]$} (vlprimeprime);
				
				\path (vlprimeprime) edge[bend left] node[below] {$1@T$} (vr);
				\path (vl) edge[bend left] node[below] {$1@T$} (vrprime);
				
				\path (vl) edge[bend left] node[above] {$1@1~1@3~\ldots~1@T-2$} (vc);	
				\path (vc) edge[bend left] node[align=center, below] {$1@[1,T-1]$\\ $\ldots$\\$1@[1,T-1]$\\ \scriptsize($n-1$ times)\normalsize} (vl);	
				
				\path (vc) edge[bend left] node[above, xshift=-3pt] {$1@2~1@4~\ldots~1@T-1$} (vr);	
				\path (vr) edge[bend left] node[align=center, below] {$1@[1,T-1]$\\ $\ldots$\\$1@[1,T-1]$\\ \scriptsize($n-1$ times)\normalsize} (vc);	
				
				\path (vr) edge[bend right, looseness=.8,out=-90,in=-90] node[above] {$1@T$} (vl);

				\node[draw=blue,dotted, thick,inner sep=0cm, minimum height=5.2cm, minimum width=8.3cm,yshift=5pt] at (4,0) (gadget) {};
				
				\node[state,rectangle, fill=white!100] (label) at (2,2.8) {at-least-once($i$)};
				
			\end{scope}
			
	\end{tikzpicture}}
	\caption{An at-least-once gadget. Note that as $T=2n+1$ there are, $n$ \Euro 1 debts owed by $v_L$ to $v_C$ and by $v_C$ to $v_R$.}
	\label{fig:atleastonce}
	\end{figure}
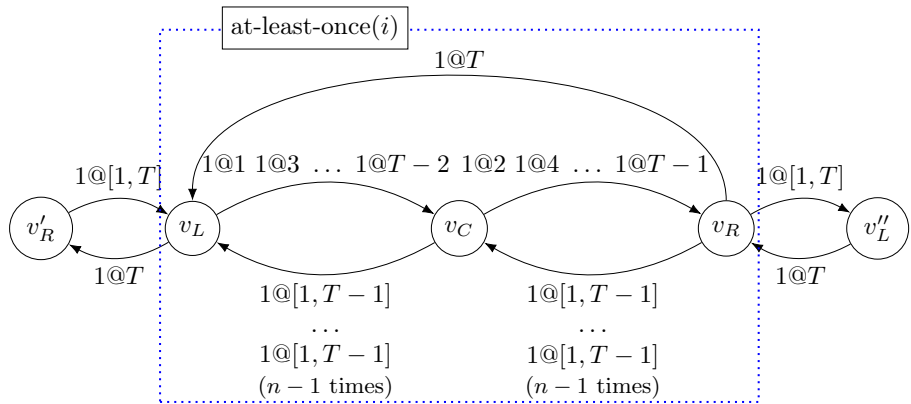

	Set $T=2n+1$. Our IDM instance $(G,D,A^0)$ can be defined as follows:
	\begin{itemize}
		\item there is the at-least-once$(i)$ gadget, for $1\leq i\leq n$
		\item for every edge $(x_i,x_j)$ of $H$, there is a debt of \Euro1 from $v_R$ of at-least-once$(i)$ to $v_L$ of at-least-once$(j)$ to be paid in the interval $[1,T]$ and a debt of \Euro1 from $v_L$ of at-least-once$(j)$ to $v_R$ of at-least-once$(i)$ to be paid at time $T$
		\item all nodes have initial external assets of 0 except for node $v_L$ of at-least-once$(1)$ which has initial external assets of 1.
	\end{itemize}
	We refer to the single euro of initial external assets as the \emph{initial euro}. The IDM instance $(G,D,A^0)$ can clearly be constructed in time polynomial in $n$.

	\begin{clm}\label{clm1:perfschedhampath}
		If $(G,D,A^0)$ is a yes-instance of \textsc{Perfect Scheduling} then $(H,x)$ is a yes-instance of \textsc{Sourced Hamiltonian Path}.
	\end{clm}
		
	\begin{proof}
		Note that strictly prior to time $T$, the only payment-cycles that can exist within $G$ involve the two nodes $v_L$ and $v_C$ of some at-least-once gadget or the two nodes $v_C$ and $v_R$ of some at-least-once gadget. Note also that within some gadget there are $n$ debts of \Euro1 from $v_L$ to $v_C$ needing to be satisfied and $n-1$ debts of \Euro1 from $v_C$ to $v_L$. An analogous statement can be made as regards $v_C$ and $v_R$. Consequently, in order to satisfy all debts involving $v_L$ and $v_C$ within some gadget, at some odd time in $[1,T-1]$, the initial euro must be within that gadget so as to satisfy some debt from $v_L$ to $v_C$ (by moving from $v_L$ to $v_C$). Similarly, at some even time in $[1,T-1]$, the initial euro must be within the gadget so as to satisfy some debt from $v_C$ to $v_L$ (by moving from $v_C$ to $v_R$). Moreover, at any time in $[1,T-1]$, the initial euro can only satisfy at most one of the debts mentioned above. Hence, given that $T=2n+1$, at any time in $[1,T-1]$, the initial euro must be satisfying exactly one of the above debts.
		
		Suppose that the initial euro satisfies some debt from $v_L$ to $v_C$ in some at-least-once gadget at time $t$. As the initial euro needs to satisfy one of the above debts at time $t+1$, we need that at time $t+1$ the initial euro satisfies a debt from $v_C$ to $v_R$ in the same gadget. Also, it cannot be the case that a debt from $v_C$ to $v_R$ in some gadget is satisfied by the initial euro before the euro satisfies some debt from $v_L$ to $v_R$ in the same gadget. As the initial euro starts at $v_L$ in at-least-once$(1)$, our schedule must be such that the initial euro `moves' through the at-least-once gadgets corresponding to every node, entering at the node $v_L$ and exiting at the node $v_R$. Consequently, its path within $G$ corresponds to a path $x=x_1, x_2, \ldots, x_n$ in $H$ where every node on this path is distinct and where there is a directed edge from node $x_i$ to $x_{i+1}$, for $1\leq i\leq n-1$; that is, a Hamiltonian path in $H$ with source $x$.
	\end{proof}

	\begin{clm}\label{clm2:perfschedhampath}
	If $(H,x)$ is a yes-instance of \textsc{Sourced Hamiltonian Path} then $(G,D,A^0)$ is a yes-instance of \textsc{Perfect Scheduling}.
	\end{clm}

	\begin{proof}
		Let $x=x_1,x_2,\ldots,x_n$ be a Hamiltonian path $P$ in the digraph $H$. Consider the following schedule $\sigma$:
		\begin{itemize}
			\item the initial euro is used so as to pay the following debts:
			\begin{itemize}
				\item \Euro1 at time $2i-1$ from $v_L$ to $v_C$ in at-least-once$(x_i)$ and \Euro1 at time $2i$ from $v_C$ to $v_R$ in at-least-once$(x_i)$, for $1\leq i\leq n$
				\item \Euro1 at time $2i$ from $v_R$ in at-least-once$(x_i)$ to $v_L$ in at-least-once $(x_{i+1})$, for $1\leq i\leq n-1$
			\end{itemize}
			\item for any $v_L$ and $v_C$ within some at-least-once gadget and at any odd time $t<T$ when a debt is not being paid from $v_L$ to $v_C$ using the initial euro, there is a payment-cycle consisting of payments of \Euro1 from $v_L$ to $v_C$ and of \Euro1 from $v_C$ to $v_L$
			\item for any $v_C$ and $v_R$ within some at-least-once gadget and at any even time $t< T$ when a debt is not being paid from $v_C$ to $v_R$ using the initial euro, there is a payment-cycle consisting of payments of \Euro1 from $v_C$ to $v_R$ and of \Euro1 from $v_R$ to $v_C$
			\item for any edge $(x_i,x_j)$ of $H$ that does not feature in the Hamiltonian path $P$, there is a payment-cycle consisting of payments at time $T$ of \Euro1 from $v_R$ in at-least-once$(x_i)$ to $v_L$ in at-least-once$(x_j)$ and of \Euro1 from $v_L$ in at-least-once$(x_j)$ to $v_R$ in at-least-once$(x_i)$
			\item the initial euro is used so as to pay the following debts:
			\begin{itemize}
				\item \Euro1 at time $T$ from $v_R$ to $v_L$ in at-least-once$(x_i)$, for $1\leq i\leq n$
				\item \Euro1 at time $T$ from $v_L$ in at-least-once$(x_i)$ to $v_R$ in at-least-once$(x_{i-1})$, for $2\leq i\leq n$.
			\end{itemize}
		\end{itemize}
		The `path' taken by the initial euro within $(G,D,A^0)$ can be visualized as in Fig.~\ref{fig:chain} where the debt arrows are tagged with the time of payment. It is clear that $\sigma$ is valid and a perfect schedule.
	\end{proof}
	Our main result follows.
\end{proof}

		\begin{figure}[!ht]
			\centering
			\scalebox{0.85}{%
				\begin{tikzpicture}
					
					\node[draw,circle]       (v1L) at ({360/24 * 23}:4cm) {$v_{L}$};
					\node[draw,circle]       (v1C) at ({360/24 * 22}:4cm) {$v_{C}$};
					\node[draw,circle,thick] (v1R) at ({360/24 * 21}:4cm) {$v_{R}$};
					
					\node (dummy1)                  at ({360/24 * 20}:4cm) {};
					
					\node[draw,circle]       (v2L) at ({360/24 * 19}:4cm) {$v_{L}$};
					\node[draw,circle]       (v2C) at ({360/24 * 18}:4cm) {$v_{C}$};
					\node[draw,circle]       (v2R) at ({360/24 * 17}:4cm) {$v_{R}$};
					
					\node (dummy2)                  at ({360/24 * 16}:4cm) {};
					
					\node[draw,circle]       (v3L) at ({360/24 * 15}:4cm) {$v_{L}$};
					\node[draw,circle]       (v3C) at ({360/24 * 14}:4cm) {$v_{C}$};
					\node[draw,circle]       (v3R) at ({360/24 * 13}:4cm) {$v_{R}$};
					
					\node (dummy3)                  at ({360/24 * 12}:4cm) {};
					
					\node[draw,circle]       (v4L) at ({360/24 * 11}:4cm) {$v_{L}$};
					\node[draw,circle]       (v4C) at ({360/24 * 10}:4cm) {$v_{C}$};
					\node[draw,circle]       (v4R) at ({360/24 *  9}:4cm) {$v_{R}$};
					
					\node (dummy4)                  at ({360/24 *  8}:4cm) {};
					
					\node[draw,circle]       (v5L) at ({360/24 *  7}:4cm) {$v_{L}$};
					\node[draw,circle]       (v5C) at ({360/24 *  6}:4cm) {$v_{C}$};
					\node[draw,circle]       (v5R) at ({360/24 *  5}:4cm) {$v_{R}$};
					
					\node (dummy5)                  at ({360/24 *  4}:4cm) {$\cdots$};
					
					\node[draw,circle]       (vnL) at ({360/24 *  3}:4cm) {$v_{L}$};
					\node[draw,circle]       (vnC) at ({360/24 *  2}:4cm) {$v_{C}$};
					\node[draw,circle]       (vnR) at ({360/24 *  1}:4cm) {$v_{R}$};
					
					\node (dummyn)                  at ({360/24 *  0 }:4cm) {};
					
					\node[draw=blue,dotted, thick,fit=(v1L)(v1R), inner sep=0cm] (gadget) {};
					\node[draw=blue,dotted, thick,minimum height = 60pt, minimum width = 80pt, inner sep=0cm] at (0,-4) (gadget) {};
					\node[draw=blue,dotted, thick,fit=(v3L)(v3R), inner sep=0cm] (gadget) {};
					\node[draw=blue,dotted, thick,fit=(v4L)(v4R), inner sep=0cm] (gadget) {};
					\node[draw=blue,dotted, thick,minimum height = 60pt, minimum width = 80pt, inner sep=0cm] at (0,4) (gadget) {};
					\node[draw=blue,dotted, thick,fit=(vnL)(vnR), inner sep=0cm] (gadget) {};
					
					\node[state,rectangle, fill=white!100] (label) at (5.4,-2.7) {at-least-once($1$)};
					\node[state,rectangle, fill=white!100] (label) at (0,-5.1) {at-least-once($2$)};
					\node[state,rectangle, fill=white!100] (label) at (-5.4,-2.7) {at-least-once($3$)};
					\node[state,rectangle, fill=white!100] (label) at (-5.4,2.7) {at-least-once($4$)};
					\node[state,rectangle, fill=white!100] (label) at (0,5.1) {at-least-once($5$)};
					\node[state,rectangle, fill=white!100] (label) at (5.4,2.3) {at-least-once($n$)};

					
					\path (v1R) edge[bend left] node {$2$} (v2L);
					\path (v2L) edge[bend left] node {$3$} (v2C);
					\path (v2C) edge[bend left] node {$4$} (v2R);
					
					\path (v2R) edge[bend left] node {$4$} (v3L);
					\path (v3L) edge[bend left] node {$5$} (v3C);
					\path (v3C) edge[bend left] node {$6$} (v3R);
					
					\path (v3R) edge[bend left] node {$6$} (v4L);
					\path (v4L) edge[bend left] node {$7$} (v4C);
					\path (v4C) edge[bend left] node {$8$} (v4R);
					
					\path (v4R) edge[bend left] node {$8$} (v5L);
					\path (v5L) edge[bend left] node {$9$} (v5C);
					\path (v5C) edge[bend left] node {$10$} (v5R);
					
					\path (v5R) edge[bend left] node {$10$} (dummy5);
					
					\path (dummy5) edge[bend left] node {$2n-2$} (vnL);
					\path (vnL) edge[bend left] node {$2n-1$} (vnC);
					\path (vnC) edge[bend left] node[xshift=5pt, yshift=-5pt] {$2n$} (vnR);
					
					\path (v1L) edge[bend left] node {$1$} (v1C);
					\path (v1C) edge[bend left] node {$2$} (v1R);
					
					
					\path (v1R) edge[bend left] node {$2n+1$} (v1L);
					\path (vnR) edge[bend left] node {$2n+1$} (vnL);
					\path (vnL) edge[bend left] node[yshift=-5pt, xshift=5pt] {$2n+1$} (dummy5);
					\path (dummy5) edge[bend left] node[yshift=-5pt, xshift=15pt] {$2n+1$} (v5R);
					\path (v5R) edge[bend left] node {$2n+1$} (v5L);
					\path (v5L) edge[bend left] node {$2n+1$} (v4R);
					\path (v4R) edge[bend left] node {$2n+1$} (v4L);
					\path (v4L) edge[bend left] node {$2n+1$} (v3R);
					\path (v3R) edge[bend left] node {$2n+1$} (v3L);
					\path (v3L) edge[bend left] node {$2n+1$} (v2R);
					\path (v2R) edge[bend left] node {$2n+1$} (v2L);
					\path (v2L) edge[bend left] node {$2n+1$} (v1R);
					
				\end{tikzpicture}
			}
			\caption{The path taken by the initial euro straightforwardly corresponds to a sourced Hamiltonian path in the original graph.}\label{fig:chain}
		\end{figure}
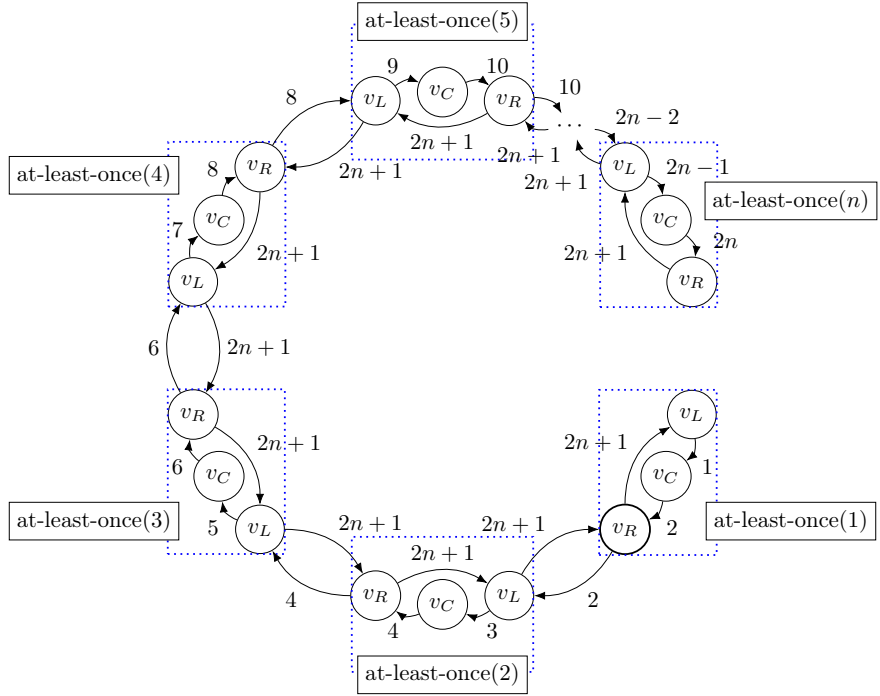
		
		Of course, one can obtain additional restrictions on the structure of the IDM instances for \textsc{Perfect Scheduling} in Theorem~\ref{thm:perfschedhampath} by looking at NP-completeness results relating to \textsc{Sourced Hamiltonian Path} on restricted digraphs; however, we have refrained from doing so (as nothing of any significance emerges).
		
		We can constraint the digraph $G$ of an instance $(G,D,A^0)$ of \textsc{Perfect Scheduling} even further in the AoN variant; indeed, so that it is always a directed path of length 3. The price we pay is that the initial external assets are potentially large.
		
		\begin{thm}\label{thm:AoN3path}
			Consider the problem \textsc{Perfect Scheduling} restricted so that every instance $(G,D,A^0)$ is such that $G$ is a directed path of bounded length.\begin{itemize}
				\item[(\emph{a})] If, further, $T$ is restricted to be 2 then the resulting problem is weakly NP-complete in the AoN variant.
				\item[(\emph{b})] If there are no restrictions on $T$ then the resulting problem is strongly NP-complete in the AoN variant.
			\end{itemize}
		\end{thm}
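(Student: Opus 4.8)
The plan is to exploit the fact that, for perfect schedules, the no-withholding condition is vacuous: since a perfect schedule leaves no debt ever overdue, no bank ever holds an overdue debt, so validity collapses to non-negative cash together with the AoN single-payment-within-window constraint. Thus \textsc{Perfect Scheduling} on a directed path $v_0\to v_1\to\cdots\to v_L$ in the AoN variant amounts to choosing, for each debt, a single time in its window at which its full amount is transferred, so that every node's cash stays non-negative at every time. The only quantity that matters at node $v_i$ at time $t$ is its cumulative balance, and since (by AoN) the amount flowing on edge $(v_{i-1},v_i)$ up to time $t$ is a \emph{subset sum} of that edge's debt amounts, the whole problem becomes one of threading subset sums through a chain of cumulative-non-negativity inequalities. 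I would establish membership in NP via Lemma~\ref{lem:polyverif} and prove hardness by encoding number problems into this threading.

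For part (\emph{a}) I would reduce from the (weakly NP-complete) \textsc{Subset Sum} problem \cite{Karp72}: given $\{a_1,\dots,a_n\}$ of total $M$ and target $k$, decide whether some subset sums to $k$. Build the $4$-node path $t\to u\to v\to w$ with $T=2$ as follows: $t$ has initial assets $M-k$ and owes $u$ a single debt $(M{-}k)@2$; $u$ has initial assets $k$ and owes $v$ the debts $a_1@[1,2],\dots,a_n@[1,2]$; $v$ has no assets and owes $w$ the debts $k@1$ and $(M{-}k)@2$. The crux is a two-sided squeeze on the amount $s$ that $u$ pays $v$ at time~$1$: because $u$'s only income ($M-k$) is locked until time~$2$, at time~$1$ it has only $k$, forcing $s\le k$; because $v$ must discharge $k@1$ at time~$1$ out of income $s$ (with $c_v^0=0$), we get $s\ge k$. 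Hence $s=k$ exactly, and since $s$ is a subset sum of the $a_i$, a perfect schedule exists iff \textsc{Subset Sum} is a yes-instance. I would then verify the easy direction (pay a sum-$k$ subset at time~$1$, the rest at time~$2$, discharge the forced debts) and that all balances are non-negative (indeed exactly $0$ at time~$2$).

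For part (\emph{b}) I would upgrade this single binary split to $m$ time slots and reduce from the (strongly NP-complete) \textsc{3-Partition} problem: given $3m$ integers of total $mB$, each in $(B/4,B/2)$, decide whether they split into $m$ triples each summing to $B$. On the same path $t\to u\to v\to w$, let the $(u,v)$ edge carry the debts $a_i@[1,m]$, let $v$ owe $w$ the debts $B@j$ for $j=1,\dots,m$, and let $t$ ``drip-feed'' $u$ via debts $B@j$ for $j=2,\dots,m$, with $u$ starting with $B$ and $t$ with $(m{-}1)B$. The squeeze now applies cumulatively: $u$'s available cash by time~$j$ is exactly $jB$, capping its cumulative payment to $v$ at $jB$, while $v$'s demands $B@1,\dots,B@j$ force that cumulative payment to be at least $jB$; hence $u$ pays exactly $B$ in each slot, and by AoN each slot's payment is a subset of the $a_i$ summing to $B$. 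As the $3m$ amounts total $mB$ and each slot takes $B$, this is precisely a partition into $m$ groups of sum $B$ --- equivalently, by the range condition, into triples --- so a perfect schedule exists iff \textsc{3-Partition} is a yes-instance. Since all numeric values and $T=m$ are polynomially bounded, this yields \emph{strong} NP-completeness.

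The main obstacle, in both parts, is forcing the \emph{upper} bound on early flow: on a purely forward path a node may freely hold surplus cash, so nothing prevents money from arriving ``too early,'' and a naive construction enforces only lower bounds (``enough by the deadline''). The device that overcomes this is to make the relevant node resource-limited at early times by delaying (part (\emph{a})) or metering (part (\emph{b})) its incoming funds with debts due at exactly prescribed later times, so that its cumulative availability matches the intended cumulative output with zero slack. The remaining care is bookkeeping: checking that these zero-slack balances never go negative, that instant forwarding (permitted in the model) lets $v$ relay each slot's receipt within the same time step, and that the $T=2$ versus unbounded-$T$ dichotomy is exactly what separates the single-subset (weak) regime from the multi-way-partition (strong) regime.
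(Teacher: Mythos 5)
Your proposal is correct and follows essentially the same route as the paper: both hardness proofs build a four-node directed path in which exact-time debts meter money into and out of a middle node whose interval debts encode the input numbers, so that non-negativity of cash plus the AoN constraint squeezes the time-1 (resp.\ per-slot) flow to be a subset sum hitting an exact target. The only cosmetic differences are the source problems --- the paper reduces from \textsc{Partition} in part (\emph{a}) and from a formulation of \textsc{3-Partition} without the range condition in part (\emph{b}), forcing triples via a ``add 1 to each amount and make everything divisible by 4'' parity trick, whereas you use \textsc{Subset Sum} and the range-restricted \textsc{3-Partition}, which lets the triple structure come for free.
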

		
		\begin{proof}
			Consider (\emph{a}). We reduce from the problem \textsc{Partition} defined as follows (and proven in \cite{Karp72} to be weakly NP-complete).\smallskip
			
			\noindent\underline{\textsc{Partition}}
			\begin{description}
				\item Instance: a multi-set of integers $S=\{a_1,a_2,\ldots,a_n\}$ with $sum(S)=2k$
				\item Yes-instance: there exists a partition of $S$ into two subsets $S_1$ and $S_2$ such that $sum(S_1)=sum(S_2)=k$.
			\end{description}
			In general, an instance $S=\{a_1,a_2,\ldots,a_n\}$ has size $nb$ where $b$ is the least number of bits required to express any of the integers of $S$ in binary.
			
			Let $S$ be an instance of \text{Partition} of size $nb$. Consider the IDM instance $(G,D,A^0)$ in Fig.~\ref{fig:AoN_partition}. Note that the time taken to construct $(G,D,A^0)$ from $S$ is polynomial in $nb$.
						
			\begin{figure}[!ht]
				\centering
				\begin{tikzpicture}
					\myassetnode{s}{0}{0}{$s$}{$2k$};
					\mynode{v}{3}{0}{$v$};
					\mynode{w}{6}{0}{$w$};
					\mynode{x}{9}{0}{$x$};
					
					\path (v) edge node[above, align=center] {$a_1@[1,2]$\\$a_2@[1,2]$\\...\\$a_{n}@[1,2]$} (w);
					\path (s) edge node[align=center] {$k@1$\\$k@2$} (v);
					\path (w) edge node[align=center] {$k@1$\\$k@2$} (x);
					
				\end{tikzpicture}
				
				\caption{An IDM instance encoding an instance $\{a_1, \ldots, a_n\}$ of \textsc{Partition} with sum $2k$.}
				\label{fig:AoN_partition}
			\end{figure}
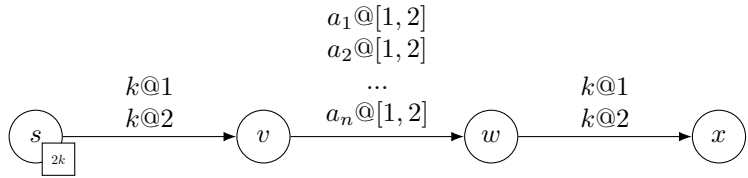
			
		\begin{clm}
		If $(G,D,A^0)$ is a yes-instance of \textsc{Perfect Scheduling} then $S$ is a yes-instance of \textsc{Partition}.
		\end{clm}
	
		\begin{proof}
		Suppose that there is a valid schedule $\sigma$ for $(G,D,A^0)$ that is a perfect schedule. It must be the case that the total amount paid by $s$ at time 1 is \Euro$k$ and that this payment is immediately paid by $v$ to $w$ and from $w$ to $x$ at time 1. As we are in the AoN variant, it must be the case that the sum of a subset of integers of $S$ amounts to $k$. An analogous argument applies to the payments made at time 2 and the remainder of the integers in $S$. the claim follows.
		\end{proof}
	
		\begin{clm}
		 	If $S$ is a yes-instance of \textsc{Partition} then $(G,D,A^0)$ is a yes-instance of \textsc{Perfect Scheduling}.
		\end{clm}
		
		\begin{proof}
			Suppose that $S_1$ and $S_2$ is a partition of $S$ such that $\sumop(S_1)=\sumop(S_2)=k$. Define the schedule $\sigma$ so that:
			\begin{itemize}
				\item at time 1: $s$ pays \Euro$k$ to $v$; if $a_i\in S_1$, for $1\leq i\leq n$, then $v$ pays \Euro$a_i$ to $w$; and $w$ pays \Euro$k$ to $x$
				\item at time 2: $s$ pays \Euro$k$ to $v$; if $a_i\in S_2$, for $1\leq i\leq n$, then $v$ pays \Euro$a_i$ to $w$; and $w$ pays \Euro$k$ to $x$.
			\end{itemize}
			The schedule $\sigma$ is a valid perfect schedule.
		\end{proof}
		\noindent The proof of (\emph{a}) follows.
		Now consider (\emph{b}). We reduce from the strongly NP-complete problem \textsc{3-Partition} defined as follows (see \cite{Garey-Johnson79}).\smallskip
		
		\noindent\underline{\textsc{3-Partition}}
		\begin{description}
			\item Instance: a multi-set of integers $S=\{a_1,a_2,\ldots,a_{3m}\}$, for some $m\geq 1$, with $sum(S)=mk$
			\item Yes-instance: there exists a partition of $S$ into $m$ triplets $S_1,S_2,\ldots S_m$ such that $sum(S_i)=k$, for each $1\leq i \leq m$.
		\end{description}
		In general, an instance $S=\{a_1,a_2,\ldots,a_{3m}\}$ has size $mb$ where $b$ is the least number of bits required to express any of the integers of $S$ in binary.
		
		Let $S$ be an instance of \textsc{3-Partition} of size $mb$. By multiplying all integers by 4 if necessary, we may assume that every integer of $S$ is divisible by 4 as is $k$. Consider the IDM instance $(G,D,A^0)$ in Fig.~\ref{fig:AoN_partition}. Note that the time taken to construct $(G,D,A^0)$ from $S$ is polynomial in $mb$.
			
		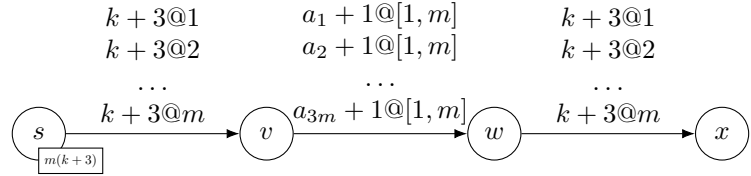
\begin{figure}[!ht]
			\centering
			\begin{tikzpicture}
				\myassetnodenew{s}{0}{0}{$s$}{$m(k+3)$};
				\mynode{v}{3}{0}{$v$};
				\mynode{w}{6}{0}{$w$};
				\mynode{x}{9}{0}{$x$};
				
				\path (v) edge node[above, align=center] {$a_1+1@[1,m]$\\$a_2+1@[1,m]$\\$\ldots$\\$a_{3m}+1@[1,m]$} (w);
				\path (s) edge node[align=center] {$k+3@1$\\$k+3@2$\\$\ldots$\\$k+3@m$} (v);
				\path (w) edge node[align=center] {$k+3@1$\\$k+3@2$\\$\ldots$\\$k+3@m$} (x);
				
			\end{tikzpicture}
			
			\caption{An IDM instance showing the reduction from \textsc{3-Partition} to \textsc{AoN Perfect Scheduling}.}
			\label{fig:AoN_3partition}
		\end{figure}
			
		\begin{clm}
			If $(G,D,A^0)$ is a yes-instance of \textsc{Perfect Scheduling} then $S$ is a yes-instance of \textsc{3-Partition}.
		\end{clm}
		
		\begin{proof}
			Suppose that there is a valid schedule $\sigma$ for $(G,D,A^0)$ that is a perfect schedule. It must be the case that the total amount paid by $s$ at time $i$, for every $1\leq i\leq m$, is \Euro$k+3$ and that this payment is immediately paid by $v$ to $w$ and by $w$ to $x$. Suppose that the payment at time $i$ by $v$ to $w$ pays at least 4 of the debts due. So, there exists another time $j$, say, where the payments made by $v$ to $w$ pay at most 2 debts. So, we have that either $a_\alpha+a_\beta+2=k+3$ or $a_\alpha+1=k+3$, for some $1\leq \alpha\neq \beta \leq 3m$. This yields a contradiction as the right-hand sides of these equations are equivalent to 3 modulo 4 whereas the left-hand sides are not. So, at any time $i$, for $1\leq i\leq m$, exactly three debts are paid by $v$ to $w$ at time $i$. If $1\leq \alpha,\beta,\gamma\leq 3m$ are distinct so that debts of monetary amounts $a_\alpha+1$, $a_\beta+1$ and $a_\gamma+1$ are paid by $v$ to $w$ at some time then $a_\alpha+1+a_\beta+1+a_\gamma+1 = k+3$; that is, $a_\alpha+a_\beta+a_\gamma = k$. So, we have a yes-instance of \textsc{3-Partition}. The claim follows.
		\end{proof}
		
		\begin{clm}
			If $S$ is a yes-instance of \textsc{3-Partition} then $(G,D,A^0)$ is a yes-instance of \textsc{Perfect Scheduling}.
		\end{clm}
		
		\begin{proof}
			Suppose that $S$ can be partitioned into triplets so that the sum of the integers in each triplet is $k$; so, suppose that $a_{\alpha_i}+a_{\beta_i}+a_{\gamma_i} = k$, for each $1\leq i\leq m$, where $S = \{a_{\alpha_i},a_{\beta_i}, a_{\gamma_i} : 1\leq i\leq m\}$. Define the following schedule $\sigma$: at time $i$, for each $1\leq i\leq m$, $s$ pays \Euro$k+3$ to $v$; $v$ pays \Euro$a_{\alpha_i}+a_{\beta_i}+a_{\gamma_i}+3 = k+3$ to $w$; and $w$ pays \Euro$k+3$ to $x$. The schedule $\sigma$ is valid and a perfect schedule. The claim follows.
		\end{proof}
		\noindent The main result follows.\end{proof}

	\subsection{Hardness results for \textsc{Bankruptcy Maximization}}\label{subsec:hardnessBmax}
		
		We now turn to \textsc{Bankruptcy Maximization}.

	\begin{thm}
		\label{thm:bankmax}
		The problem \textsc{Bankruptcy Maximization} is NP-complete in the AoN, PP and FP variants even when for an instance $(G,D,A^0)$: $T=2$; $G$ is a directed acyclic graph with out-degree at most $2$, in-degree at most 3; all monetary debts are at most \Euro2 per edge; and initial external assets are at most \Euro3 per bank.
	\end{thm}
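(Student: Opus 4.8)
The plan is to first settle membership in NP, which is immediate from Lemma~\ref{lem:polyverif}: a witness is a valid schedule together with a set of $k$ banks each of which is the debtor of an overdue debt at some time, and both validity and the bankruptcy count are checkable in polynomial time in each of the AoN, PP and FP variants. For hardness I would reduce from \textsc{3-Sat-3}, reusing the source/literal/clause/sink skeleton of Theorem~\ref{thm:bankmin}. As there, each source $s_i$ is insolvent (hence bankrupt in every valid schedule) and can route its cash to exactly one of the two literal nodes $x_i,\neg x_i$, the choice encoding the truth value of $v_i$; this already fixes a baseline set of bankruptcies and enforces a consistent assignment. To respect the in-degree bound of $3$, I would split the single sink $d$ of Figure~\ref{fig:bankruptcy minimization} into many private sink nodes (one per incoming debt, or grouped in threes); since sinks never go bankrupt this does not affect the count. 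All debts are kept at \Euro$1$ or \Euro$2$, all initial external assets at \Euro$3$, and the footprint is an acyclic network of constant depth, so the structural restrictions of the theorem hold by construction. The extra time-step (the reason the statement has $T=2$ rather than $T=1$) is exploited inside the clause gadgets, as described below.

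The correspondence I would aim for is: a fixed baseline of $B$ bankruptcies occurs in every valid schedule (the $n$ insolvent sources, together with, for each variable, the one literal node left unfunded by $s_i$), and each clause gadget admits \emph{one additional} bankruptcy precisely when the clause $c_j$ is satisfied by the encoded assignment. Setting the target to $k=B+m$, the instance is a yes-instance of \textsc{Bankruptcy Maximization} iff all $m$ clause gadgets can simultaneously yield their extra bankruptcy, which happens iff a single consistent assignment satisfies every clause, i.e.\ iff $\phi$ is a yes-instance of \textsc{3-Sat-3}. The completeness direction (satisfying assignment $\Rightarrow$ schedule with $k$ bankruptcies) is the routine one: fix the assignment, fund the true literal of each variable, and drive each clause gadget into its `satisfied' configuration, checking validity as in Claim~\ref{clm:bankminclm1}.

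The delicate part, and the step I expect to be the main obstacle, is \emph{soundness}: no valid schedule can reach $B+m$ bankruptcies unless $\phi$ is satisfiable. The subtlety here is genuinely counterintuitive, since in the IDM additional incoming cash can only ever \emph{help} a node avoid bankruptcy; so a satisfied clause cannot be made to fail merely by starving it, and the extra clause bankruptcy must instead be forced by the no-withholding rule. This is exactly where $T=2$ earns its keep: I would design the gadget so that a funded (true) literal is compelled to push its euro into the clause gadget at time~$1$, displacing an internal saving payment onto a debt due at time~$2$ which then goes overdue. I would then have to verify, gadget by gadget and time-step by time-step, that (i) the extra bankruptcy becomes available exactly when some literal of $c_j$ is true, and (ii) the scheduler cannot conjure two extra bankruptcies in one gadget, nor an extra bankruptcy in a baseline position, by any admissible re-timing.

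This soundness analysis is most demanding in the FP variant, where a source may split its euro fractionally between $x_i$ and $\neg x_i$ and literal nodes may forward fractions of euros. As in the FP portion of the proof of Theorem~\ref{thm:bankminconstant}, I would dispatch this through a short case analysis showing that fractional and partial payments never create bankruptcies beyond those already accounted for, so the bound $k=B+m$, and hence the equivalence with \textsc{3-Sat-3}, survives in all three variants (with the AoN and PP cases coinciding on the \Euro$1$ debts, as already noted in the discussion preceding Theorem~\ref{thm:bankmin}). Finally I would remark that $T=2$ is essential rather than cosmetic: with $T=1$ there is no timing freedom to drive the forced displacement above, which is consistent with the genuinely temporal flavour emphasised throughout the paper.
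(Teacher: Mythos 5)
There is a genuine gap, and it is exactly at the step you flag as ``the main obstacle'': the clause gadget you propose cannot exist in the IDM. You correctly observe that additional incoming cash can only ever help a node avoid bankruptcy, but your mechanism (a funded literal pushes a euro into the clause gadget at time~1, ``displacing'' an internal payment so that a debt due at time~2 goes overdue) would need precisely the opposite. For your soundness direction you need a gadget that produces \emph{no} extra bankruptcy in any valid schedule when starved; this forces every internal node's resources to cover all of its obligations, and then the no-withholding and no-overpayment rules guarantee that all internal debts are still paid when an extra euro arrives --- the euro either sits idle (legal, since there are no overdue debts) or is forced along debt edges, which only delivers more money to creditors. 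So ``money in $\Rightarrow$ extra bankruptcy available'' is unrealizable, and your reduction has no way to certify satisfied clauses. A secondary error: your claimed baseline (``in every valid schedule, for each variable the unfunded literal node is bankrupt'') fails in the PP/FP variants without some additional device, since $s_i$ may split its \Euro 3 so that $x_i$ receives exactly its clause obligations and $\neg x_i$ receives exactly its own, leaving \emph{neither} literal node bankrupt.

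The paper's construction resolves both problems by means you did not anticipate. First, it \emph{inverts} the literal--clause incidence: $x_i$ owes $q_j$ the debt $1@2$ iff $\lnot v_i \in c_j$ (and $\neg x_i$ owes $q_j$ iff $v_i\in c_j$), and $q_j$ owes $|c_j|@2$ to the sink. Thus a clause node goes bankrupt exactly when it is \emph{starved}, and it is starved exactly when some literal of $c_j$ is \emph{true} under the encoded assignment (the corresponding negated-literal node is the unfunded one). So ``satisfied $\Rightarrow$ bankrupt'' is achieved by starvation after all, consistently with the monotonicity you noted --- you rejected starvation only because you kept the non-inverted skeleton of Theorem~\ref{thm:bankmin}, under which a satisfied clause receives money. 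Second, the paper attaches to each literal node a \emph{chain gadget} of $m+1$ nodes with debts $1@1$: failing to forward \Euro 1 into a chain bankrupts all $m+1$ of its nodes, so with target $k=2n+n(m+1)+m$ any maximizing schedule must have every source pay its full \Euro 3 to a single literal node (the chain reward dominates anything obtainable from clause nodes), which restores your baseline, makes the assignment extraction clean, and is what lets the argument go through uniformly in the AoN, PP and FP variants.
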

	
	\begin{proof}
        We build a polynomial-time reduction from the problem \textsc{3-Sat-3}, with the usual restrictions on instances (see the proof of Theorem~\ref{thm:bankmin}). Suppose that we have some instance $\phi$ of \textsc{3-Sat-3} where there are $n$ Boolean variables and $m$ clauses. We start with the \emph{chain gadget} chain$(l)$, where $l\geq 1$, as portrayed in Fig.~\ref{fig:chain gadget} (note that the gadget is the path of $l$ nodes within the blue dotted box). The key point about any chain gadget is that in some schedule: if at time 1, node $u$ does not make a payment to node $m_1$ then $u$ and all the nodes of the chain gadget are bankrupt; and if at time 1, $u$ pays its debt to $m_1$ then none of the nodes of the chain gadget is bankrupt. As in our proof of Theorem \ref{thm:bankminconstant}, we first work in the PP variant unless otherwise stated, though the reasoning will apply to the AoN and FP variants as well.
		
	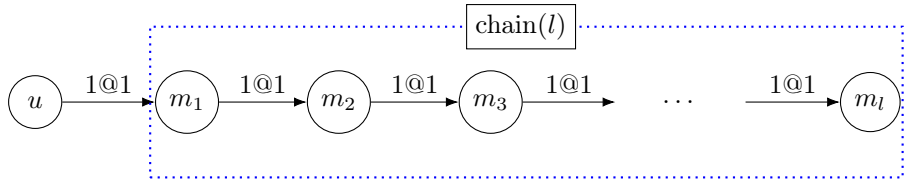
\begin{figure}[!ht]
			\centering
			\begin{tikzpicture}
				\mynode{u}{1}{0}{$u$}
				\mynode{m1}{3}{0}{$m_1$}
				\mynode{m2}{5}{0}{$m_2$}
				\mynode{m3}{7}{0}{$m_3$}
				
				\node[state,draw=none] (m4) at (9,0) {};%
				\node[state,draw=none] (m5) at (10,0) {};%

				\mynode{ml}{12}{0}{$m_l$}
				
				\path (u) edge node[above] {$1@1$} (m1);
				\path (m1) edge node[above] {$1@1$} (m2);
				\path (m2) edge node[above] {$1@1$} (m3);	
				\path (m3) edge node[above] {$1@1$} (m4);	
				\path (m5) edge node[above] {$1@1$} (ml);	
				\path (m4.east) -- node[auto=false]{\ldots} (m5.west);
				
				\node[draw=blue,dotted, thick, inner sep=1cm, minimum height = 35pt, minimum width=282pt] at (7.47,0) (gadget) {};
				
				\node[state,rectangle, fill=white!100] (label) at (7.4,1) {chain($l$)};

			\end{tikzpicture}
			\caption{The chain gadget. In any valid schedule, either $p_{u,m_1}^1<1$ and all vertices $m_i$ are bankrupt, or $p_{u,m_1}^1=1$ and no vertices $m_i$ are bankrupt.}
			\label{fig:chain gadget}
	\end{figure}

	We now define variable nodes $\{s_i:1\leq  i \leq n\}$, literal nodes $\{x_i, \neg x_i: 1\leq i \leq n\}$, clause nodes $\{q_j:1 \leq j \leq m\}$ and a sink node $d$ analogously to as in the proof of Theorem~\ref{thm:bankmin} and include the debts as depicted in Fig.~\ref{fig:bankruptcy maximization} so as to obtain our IDM instance $(G,D,A^0)$. Note that the chain gadgets corresponding to the different literal nodes are all distinct and $|c_j|$ denotes the number of literals in the clause $c_j$ of $\phi$.

	\begin{figure}[!ht]
	\centering
	\begin{tikzpicture}
		
		\myassetnode{si}{0}{0}{$s_i$}{$3$}
		\mynode{xi}{2}{2}{$x_i$}
		\mynode{nxi}{2}{-2}{$\lnot x_i$}
		\mynode{qj}{4}{0}{$q_j$}
		\mynode{d}{6}{0}{$d$}
		\node[state,rectangle, fill=white!100] (Tchain) at (6,2) {chain($m+1$)};
		\node[state,rectangle, fill=white!100] (Fchain) at (6,-2) {chain($m+1$)};

		\path (si) edge node[left, xshift=-5pt] {$3@1$} (xi);	
		\path (siasset) edge node[left, xshift=-5pt] {$3@1$} (nxi);
		\path (xi) edge  node[right, yshift=5pt] {$1@2$ if $\lnot v_i \in c_j$} (qj);	
		\path (nxi) edge node[right, yshift=-5pt] {$1@2$ if $v_i \in c_j$} (qj);	
		\path (qj) edge node[above] {$|c_j|@2$} (d);	
		\path (xi) edge node[above] {$1@1$} (Tchain);	
		\path (nxi) edge node[above] {$1@1$} (Fchain);	
		
	\end{tikzpicture}
	\caption{An IDM instance illustrating the reduction from \textsc{3-SAT 3} to \textsc{Bankruptcy Maximization}, using chain gadgets.}
	\label{fig:bankruptcy maximization}
\end{figure}
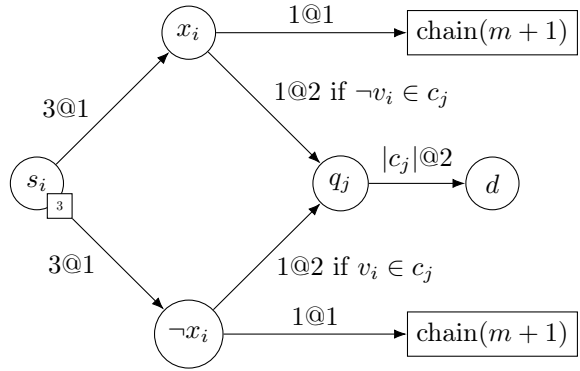

	Note that because all debts of $(G,D,A^0)$ are due at an exact time, rather than over an interval, reasoning in the AoN variant is identical to reasoning in the PP variant.

	\begin{clm}\label{clm:bankruptnodes}
		In any valid schedule $\sigma$ for $(G,D,A^0)$ in which $c\geq 0$ variable nodes $s_i$ either pay \Euro3 to $x_i$ or \Euro3 to $\neg x_i$:
		\begin{itemize}
			\item all $n$ variable nodes are bankrupt
			\item exactly $n$ literal nodes are bankrupt
			\item exactly $c(m+1)$ chain nodes are bankrupt.
		\end{itemize}
	Consequently, this amounts to exactly $2n+c(m+1)$ bankrupt nodes with any other bankrupt nodes necessarily being clause nodes.
	\end{clm}
	
	\begin{proof}
		Suppose that in the valid schedule $\sigma$, $s_i$, for some $1\leq i\leq n$, pays \Euro1 to a node from $\{x_i,\neg x_i\}$ at time 1 and \Euro2 to the other node from $\{x_i,\neg x_i\}$ at time 1. Since both $x_i$ and $\neg x_i$ have \Euro1 at time 1, both must pay \Euro1 to their corresponding chain gadget; so, none of the nodes of either of these chain gadgets is bankrupt. As any variable and its negation both appear at least once in some clause of $\phi$, exactly one of the nodes $x_i$ and $\neg x_i$ is bankrupt at time 2.
		
		Alternatively, suppose that $s_j$, for $1\leq j\leq n$, pays \Euro3 to either $x_j$ or $\neg x_j$ at time 1. So, the literal node to which $s_j$ makes no payment is bankrupt at time 1 as are all the nodes of its corresponding chain gadget. As any variable and its negation both appear at least once in some clause of $\phi$, exactly one of the nodes $x_j$ and $\neg x_j$ is bankrupt at time 2.
		
		In any case, we have: $n$ variable nodes that are bankrupt; $n$ literal nodes that are bankrupt; and $c(m+1)$ chain nodes that are bankrupt. This results in $2n+c(m+1)$ bankrupt nodes. As the sink node $d$ cannot be bankrupt, the claim follows.		
	\end{proof}

	\begin{clm}\label{clm:chainvalidimpliestruth}
		If $((G,D,A^0),2n+n(m+1)+m)$ is a yes-instance of \textsc{Bankruptcy Maximization} then $\phi$ is a yes-instance of \textsc{3-Sat-3}.
	\end{clm}
		
	\begin{proof}
	Suppose that there exists a valid schedule $\sigma$ that results in at least $2n+n(m+1)+m$ bankruptcies. So, by Claim~\ref{clm:bankruptnodes}, every variable node $s_i$ pays \Euro3 to either $x_i$ or $\neg x_i$ and also every clause node $q_j$ is bankrupt in $\sigma$. The reason a clause node $q_j$ is bankrupt is because there is some literal $v_i$ or $\neg v_i$ in clause $c_j$ but where $\neg x_i$ or $x_i$, respectively, receives no payment from $s_i$. Define the truth assignment $X$ on the variables of $\phi$ so that $X(v_i)=True$ iff $x_i$ receives a payment of \Euro3 from $s_i$, for $1\leq i\leq n$. This truth assignment satisfies every clause of $\phi$.
	\end{proof}

	\begin{clm}\label{clm:chaintruthimpliesvalid}
		If $\phi$ is a yes-instance of \textsc{3-Sat-3} then $((G,D,A^0),2n+n(m+1)+m)$ is a yes-instance of \textsc{Bankruptcy Maximization}.
	\end{clm}
	
	\begin{proof}
	Suppose that there is a satisfying truth assignment $X$ for $\phi$. Consider the following schedule $\sigma$:
	\begin{itemize}
		\item at time 1, every $s_i$ pays: \Euro3 to $x_i$ if $X(v_i)=True$; and \Euro3 to $\neg x_i$ if $X(v_i)=False$
		\item if $x_i$ (resp. $\neg x_i$) received \Euro3 from $s_i$ at time 1 then:
		\begin{itemize}
			\item at time 1, it pays \Euro1 to its corresponding chain gadget so as to satisfy all debts in the gadget
			\item at time 2, it pays \Euro1 to each clause node $q_j$ for which the literal $\neg v_i\in c_j$ (resp. $v_i\in c_j$)
		\end{itemize}
		\item if $x_i$ (resp. $\neg x_i$) received no payment from $s_i$ at time 1 then at times 1 and 2 then it can make no payments
		\item each $q_j$ makes a payment of however many euros it has to $d$ at time $2$ (note that it never received more than \Euro$|c_j|$).
	\end{itemize}
	The schedule $\sigma$ is clearly valid. By Claim~\ref{clm:bankruptnodes}, we have at least $2n+n(m+1)$ bankrupt nodes with any additional bankrupt nodes necessarily clause nodes. Consider some clause node $q_j$ containing some literal $v_i$ so that $X(v_i)=True$. By definition, $\neg x_i$ receives no payment from $s_i$ at time 1 and so the debt of \Euro1 at time 2 from $\neg x_i$ to $q_j$ is not paid. Consequently, $q_j$ is bankrupt. Hence, we have exactly $2n+n(m+1)+m$ bankrupt nodes and the claim follows.
	\end{proof}
		
	Note that the above reasoning clearly holds in the AoN variant (since in the schedules we consider all debts are paid either in full or not at all) as well as in the FP variant (since in order for $s_i$ to bankrupt one of the chains attached to $x_i$ and $\lnot x_i$ it must pay \emph{strictly} less than £1 to the bankrupt node and hence \emph{at least} £2 to the ``surviving'' node). 
    As the construction of $((G,D,A^0),2n+n(m+1)+m)$ can be completed in time polynomial in $n$, the result follows.\end{proof}		

	Just as we did with \textsc{Perfect Scheduling} in Theorem~\ref{thm:AoN3path}, we can restrict \textsc{Bankruptcy Maximization} in the AoN variant so that any IDM $(G,D,A^0)$ in any instance is such that $G$ is a directed path of bounded length (here 2).
		
	\begin{thm}\label{thm:AoN3pathBankMax}
			Consider the problem \textsc{Bankruptcy Maximization} restricted so that every instance $((G,D,A^0),k)$ is such that $G$ is a directed path of length 3. If, further, $T$ is restricted to be 2 then the resulting problem is weakly NP-complete in the AoN variant.
	\end{thm}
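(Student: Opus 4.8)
The plan is to reduce from \textsc{Partition}, mirroring the companion Theorem~\ref{thm:AoN3path}(a) (which treats the same graph family: a directed path of length~3 with $T=2$ in the AoN variant) but counting bankruptcies rather than demanding a perfect schedule. Membership in NP is immediate from Lemma~\ref{lem:polyverif}, so essentially all of the work is the hardness reduction.

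Given a \textsc{Partition} instance $S=\{a_1,\dots,a_n\}$ with $\mathrm{sum}(S)=2k$, I would build the path $s\to v\to w\to x$ with $T=2$, placing the numbers $a_i$ as the monetary amounts of debts on the middle edge $(v,w)$ and using debts with coincident endpoints $t_1=t_2\in\{1,2\}$ on the outer edges $(s,v)$ and $(w,x)$ to pin down exactly how much money is forced to move at time~1 and at time~2. The initial external assets would be placed at $s$ so that $s$ is compelled, by the no-withholding rule, to push a fixed total down the path, the only genuine freedom in a valid schedule being the AoN choice of which $a_i$-debts are discharged at time~1 versus time~2. Since the sink $x$ has no outgoing debt it can never be bankrupt, the demanded threshold $k$ is a small constant ($\le 3$): it is set to the number of non-sink nodes whose debts are arranged to become overdue precisely when the time-1/time-2 split of the $a_i$ fails to be balanced.

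The two directions would then read as follows. If $S$ admits a balanced partition $S=S_1\cup S_2$ with $\mathrm{sum}(S_1)=\mathrm{sum}(S_2)=k$, I would exhibit the schedule that serves the $S_1$-debts at time~1 and the $S_2$-debts at time~2 (together with the forced outer payments) and verify, by direct bookkeeping of the quantities $c_u^t$, $I_u^t$ and $O_u^t$, that it is valid and realises exactly the intended bankruptcies. Conversely, from any valid schedule meeting the threshold I would read off a partition from which $a_i$-debts are served at each time, arguing that the non-negativity and no-withholding constraints force the amounts served at each time to match the pinned outer flows, so both parts sum to $k$.

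The main obstacle I anticipate is the recurring subtlety of the no-withholding rule. Because the scheduler in bankruptcy \emph{maximisation} always profits from \emph{under}-paying, I must design the outer (exact-time) debts so that sufficient flow is genuinely \emph{forced} through $v$ and $w$; otherwise a node could be bankrupted trivially by simply starving it, and the reduction would collapse (indeed a naive copy of the construction in Figure~\ref{fig:AoN_partition} lets $w$ be starved unconditionally). The delicate point is therefore to calibrate the forced time-1 and time-2 flows and to exploit the AoN indivisibility of the $a_i$ so that the maximum achievable bankruptcy count reaches the threshold \emph{exactly} when a balanced split exists. In particular I expect the bulk of the effort to go into ruling out ``dodging'' schedules — in which a node discharges a spurious small debt early in order to duck a larger obligation — via a careful case analysis of the payments available at times~1 and~2, showing that no such schedule can beat the threshold in the absence of a genuine partition.
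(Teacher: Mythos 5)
Your proposal has a genuine gap: you correctly diagnose the central difficulty (that in \textsc{Bankruptcy Maximization} the scheduler profits from underpaying, so a naive copy of Fig.~\ref{fig:AoN_partition} collapses to ``starve $w$''), but you never supply the mechanism that resolves it, and the one design principle you do commit to has the wrong polarity. You write that the threshold counts nodes ``whose debts are arranged to become overdue precisely when the time-1/time-2 split of the $a_i$ fails to be balanced.'' That cannot yield a reduction: the interval debts $a_i@[1,2]$ are not due at time~1, and the no-withholding rule only forces payment of debts that are \emph{due}, so in any such construction there is a valid schedule in which $v$ pays none of the $a_i$ at time~1 --- an unbalanced split --- and the bankruptcies you have tied to unbalanced splits become achievable in \emph{every} instance, making every instance a yes-instance regardless of whether a balanced partition exists. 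For the reduction to work the polarity must be reversed: a bankruptcy must be \emph{achievable} exactly when an exact subset-sum split exists. Engineering that is precisely the part your proposal defers to ``a careful case analysis of dodging schedules,'' i.e., the actual content of the theorem.

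The missing idea, which is the whole of the paper's proof, is a \emph{detector} mechanism. The paper reduces from \textsc{Subset Sum} on the three-node path $u\to v\to w$ (Fig.~\ref{fig:AoN bankmax}): node $v$ holds cash exactly equal to the target $k$ and owes $w$ the interval debts $a_i@[1,2]$ (integers doubled so every $a_i\geq 2$) plus one unit debt $1@1$, while $u$ holds $A=\sumop(S)$ and owes $v$ the debt $A@2$. The rescue inflow $A@2$ is forced by no-withholding and guarantees that every $a_i@[1,2]$ debt is paid at time~2, so neither $u$, nor $w$, nor any interval debt can ever produce a bankruptcy; the only candidate bankruptcy is $v$ failing the $1@1$ debt at time~1. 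By no-withholding in the AoN variant, $v$ may validly leave that debt unpaid only if its cash at time~1 is $0$, i.e., only if the AoN payments it makes at time~1 --- a subset of the $a_i$ --- sum to exactly $k$. Hence a schedule with at least one bankruptcy exists iff $S$ has a subset summing to $k$, and the threshold is simply $1$. Your skeleton (assets at $s$, exact-time outer debts forcing \Euro$k$ into $v$ at time~1, a fourth node as sink) can be completed along exactly these lines by adding the $1@1$ detector debt at $v$ and a sufficiently large time-2 inflow; but without the detector and the rescue inflow there is no link between the partition and any achievable bankruptcy, so as written the argument does not go through.
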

	
	\begin{proof}
		We reduce from the weakly NP-complete problem \textsc{Subset Sum} defined as follows (see \cite{Garey-Johnson79}).\newpage

        \noindent\underline{\textsc{Subset Sum}}
		\begin{description}
			\item Instance: a multi-set of integers $S=\{a_1,a_2,\ldots,a_{n}\}$ and an integer $k$
			\item Yes-instance: there exists a subset $S_1$ of $S$ so that the sum of the numbers in $S_1$ is $k$.
		\end{description}
		
		In general, an instance $S=\{a_1,a_2,\ldots,a_{n}\}$ has size $nb$ where $b$ is the least number of bits required to express any of the integers of $S$ in binary. 	
		Let $S$ be an instance of \textsc{Subset Sum} of size $nb$. By doubling all integers if necessary, we may assume that every integer of $S$ is at least 2. Consider the IDM instance $(G,D,A^0)$ in Fig.~\ref{fig:AoN bankmax} (for which $T=2$). The value $A$ in Fig.~\ref{fig:AoN bankmax} is the sum of all integers in $S$. Note that the time taken to construct $(G,D,A^0)$ from $S$ is polynomial in $nb$.
		
			\begin{figure}[!ht]
			\centering
			\begin{tikzpicture}
				\myassetnode{u}{0}{0}{$u$}{$A$}
				\myassetnode{v}{3}{0}{$v$}{$k$}
				\mynode{w}{6}{0}{$w$}
				
				\path (v) edge node[align=center,above] {$a_1@[1,2]$\\\ldots\\$a_{n}@[1,2]$} (w);
				\path (v) edge node[align=center,below] {$1@1$} (w);
				\path (u) edge node[align=center] {$A@2$} (v);
			\end{tikzpicture}
			
			\caption{An IDM instance corresponding to an instance of \textsc{Subset Sum}.}
			\label{fig:AoN bankmax}
		\end{figure}
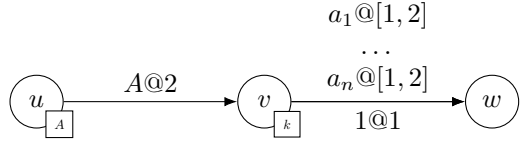
		
		\begin{clm}
			If $((G,D,A^0),1)$ is a yes-instance of \textsc{Bankruptcy Maximization} then $S$ is a yes-instance of \textsc{Subset Sum}.
		\end{clm}
		
		\begin{proof}
			Suppose that $\sigma$ is a valid schedule within which there is at least 1 bankruptcy in the IDM instance $(G,D,A^0)$. The nodes $u$ and $w$ are never bankrupt; so, $v$ must be bankrupt within $\sigma$. At time 2, the node $v$ necessarily pays off all of the unpaid debts to $w$ of monetary amount greater than \Euro1, as it receives sufficient funds from $u$ at time 2 to do this. Hence, $v$ must be bankrupt at time 1; that is, $v$ does not pay its debt to $w$ of monetary amount \Euro1 at time 1. As $\sigma$ is valid, $v$ is not withholding at time 1 and the only way for this to happen is for $v$ to pay debts to $w$ amounting to \Euro$k$. That is, we have a subset of integers of $S$ whose total sum is $k$; that is, $S$ is a yes-instance of \textsc{Subset Sum}. The claim follows.	
		\end{proof}
		
		\begin{clm}
			If $S$ is a yes-instance of \textsc{Subset Sum} then $((G,D,A^0),1)$ is a yes-instance of \textsc{Bankruptcy Maximization}.
		\end{clm}
		
		\begin{proof}
			Suppose that the subset $S_1$ of $S$ is such that $sum(S_1)=k$. W.l.o.g. let $S_1=\{a_1,a_2,\ldots,a_r\}$. Define the schedule $\sigma$ as: at time 1, $v$ pays the debts $a_1@[1,2], \ldots,a_r@[1,2]$; and at time $2$, $u$ pays its debt to $v$ and $v$ pays the debts $a_{r+1}@[1,2],\ldots,a_n@[1,2]$. Note that this is a valid schedule, as no node is withholding at any time, within which $v$ is bankrupt. The claim follows.
		\end{proof}
		The main result follows.
	\end{proof}

	\subsection{Polynomial-time algorithms}\label{subsec:poly}

	In this section we show that \textsc{Bailout Minimization} in the FP variant is solvable in polynomial-time and also that \textsc{Bailout Minimization} in the PP variant is solvable in polynomial-time when our IDM instances are restricted to out-trees. We begin with the FP variant result.
	
	\begin{thm}\label{thm:ptime fractional}
		The problem \textsc{Bailout Minimization} in the FP variant is solvable in polynomial time.
	\end{thm}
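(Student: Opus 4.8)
The plan is to express \textsc{Bailout Minimization} in the FP variant directly as a linear program and then invoke the polynomial-time solvability of linear programming over the rationals. The crucial preliminary observation is that insisting on a \emph{perfect} schedule removes the only awkward, disjunctive-looking constraint of the model, namely the no-withholding condition: in a perfect schedule no debt is ever overdue, so no bank can ever be withholding, and that constraint is vacuously satisfied and may be discarded. What remains is entirely linear.

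With withholding gone, I would set up the LP as follows. For each debt $e$ and each time $t \in [D_{t_1}(e), D_{t_2}(e)]$ introduce a variable $p_e^t \ge 0$; payments outside this window may be ignored, since earlier payments are forbidden by the no-early-payment rule and later payments are impossible because a perfect, non-overpaying schedule pays each debt exactly to its amount within its window (a positive payment after $D_{t_2}(e)$ would either leave the debt overdue at $D_{t_2}(e)$ or cause overpayment). Introduce also a variable $b_v \ge 0$ per bank for the bailout. The perfect-schedule requirement then becomes the single equality $\sum_{t=D_{t_1}(e)}^{D_{t_2}(e)} p_e^t = D_a(e)$ for each debt $e$ (this encodes both "paid in full by the deadline" and "no overpayment"). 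Writing $I_v^t = \sum_{e \in E_{\mathrm{in}}(v)} p_e^t$ and $O_v^t = \sum_{e \in E_{\mathrm{out}}(v)} p_e^t$, the non-negative-cash requirement becomes, for every node $v$ and every $0 \le t \le T$,
$$ c_v^0 + b_v + \sum_{s=1}^{t} \bigl( I_v^s - O_v^s \bigr) \ge 0, $$
which is just the explicit form of the recurrence $c_v^t = c_v^{t-1} + I_v^t - O_v^t \ge 0$ and is linear in the $p_e^t$ and the $b_v$. The objective is to minimize $\sum_v b_v$.

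I would then prove equivalence in both directions. Any feasible LP point yields, via its values $p_e^t$, a schedule meeting every clause of Definition~\ref{def:validschedule} (non-negative payments, non-negative cash, no overpayment, no early payment, and vacuously no withholding) in which no debt is overdue, i.e.\ a perfect schedule for $(G, D, A^0 + B)$; conversely, any bailout $B$ admitting a valid perfect schedule gives a feasible LP point, since perfection forces each debt paid in full within its window and validity forbids overpayment. I would emphasize that the single net-balance inequality per node-time pair correctly captures the simultaneous-clearing semantics (instant forwarding and payment-cycles): a node with zero standing cash may still receive and forward an equal amount within one time-step, because its net change is zero and the inequality is satisfied.

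Finally, I would bound the LP size: working with the canonical (compacted) instance, $T$ is polynomial in $n$ and $m$, so there are $O(mT + n)$ variables and $O(nT + m)$ constraints, all with rational coefficients of polynomial bit-length. Hence the minimum total bailout is computable in polynomial time by any polynomial LP method, and since every vertex of a rational polyhedron is rational of polynomial bit-size, the optimum is attained by a genuine FP (rational) schedule; comparing this optimal value against $b$ decides the instance. The only genuinely delicate step is the equivalence argument itself — in particular, justifying that restricting each payment variable to its debt's window costs no generality and that dropping the withholding constraint is legitimate precisely because we demand an overdue-free (perfect) schedule.
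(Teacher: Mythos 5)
Your proposal is correct and follows essentially the same route as the paper's own proof: encode the problem as a linear program over payment and bailout variables, observe that demanding a \emph{perfect} schedule makes the no-withholding condition vacuous so all constraints are linear, and invoke polynomial-time LP solvability on the compacted (canonical) instance. Your refinements --- restricting each $p_e^t$ to its debt's window, eliminating the auxiliary income/outgoing/cash variables by substitution, and minimizing $\sum_v b_v$ rather than imposing the budget as a constraint --- are cosmetic streamlinings of the same argument.
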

	
     \begin{proof}
         
    A solution to \textsc{FP Bailout Minimization} is a bailout vector $B$ of size $|V|$ together with a schedule $\sigma$ consisting of $|E|T$ payment values $p_e^t$. We describe below how an instance $G, D, A^0, b$ of \textsc{Bailout Minimization} can be encoded as a linear program (LP), which can then be solved in polynomial time. 
    
    Our variables are:
    \begin{itemize}
        \item Bailout variables $\{B[v]| v \in V\}$ (altogether $|V|$ variables), 
        \item Payment variables $\{p_e^t| e \in e, t \in [T]\}$ (altogether $|E|T$ variables), and
        \item Income variables $\{I_v^t|v \in V, t \in [T]\}$, outgoing variables $\{O_v^t|v \in V, t \in [T]\}$, and cash asset variables $\{c_v^t|v \in V, t \in [0,T]\}$ (altogether $3\cdot|V|\cdot T$ variables). 
    \end{itemize}
    
    
    In the below, for $a,b \in \mathbb N_0$, $[a,b]$ denotes the set $\{a, a+1, \ldots, b\}$, and we write $[b]$ as shorthand for $[1,b]$. Our constraints are:
    \begin{itemize}
        \item The total bailout is at most $b$:
            $$\sum_v B[v] \le b$$
        \item The starting cash assets of a node (at time $0$) are its external assets (specified by $A^0$) plus any bailout it receives. For each $v \in V$:
            $$c_v^0 = A^0[v] + B[v]$$
        \item No debt is paid early, and all payments are non-negative. For each $e \in E$ and $t \in [0, T]$:
        $$
            p_e^t~\begin{cases}
           = 0, & \text{if}\ t < D_{t_1}(e) \\
          \ge 0 , & \text{otherwise}
        \end{cases}
        $$
        \item The income (resp. outgoings) of a node at some time are obtained by summing over payments into (resp. out of) that node at each time. These then can be used to compute external (cash) assets at all nodes and times. For each $v \in V$ and $t \in [T]$:
            $$I_v^t = \sum_{e \in E_{in}(v)} p_e^t ~\textrm{ (resp. }~ O_v^t = \sum_{e \in E_{out}(v)} p_e^t)$$
            $$e_v^t=e_v^{t-1} + I_v^t - O_v^t$$
        \item No bank has negative assets at any point. For each $v \in V, t \in [T]$:
            $$e_v^t \ge 0$$
        \item Each debt is paid in full within its interval. This guarantees that there are no bankruptcies in the schedule (and that no banks are withholding, a validity constraint). For each $e\in E$ with $D(e)=(a, t_1,t_2)$:
            $$\sum_{t\in [t_1, t_2]}p_e^t=a$$
    \end{itemize}
    
    Recall from our discussion of canonical instances in Section 2.4 that we may assume $T$ is at most $2|E|$. Then we have $O(nm+m^2)$ variables and $O(nm+m^2)$ constraints. If the largest integer in the input instance $G,D,A^0,b$ required $\beta$ bits to encode, then our constructed LP has size polynomial in $n+m+\beta$. Any assignment to $B$ and to the payment variables $p_e^t$ satisfying the above is necessarily a perfect valid schedule for $((G,D,A^0),b)$. As linear programs can be solved in polynomial-time, our result follows.
    \end{proof}
        
	Note the limitations of the use of linear programming for other problems. For \textsc{Bailout Minimization} in the PP variant, proceeding as in the proof of Theorem~\ref{thm:ptime fractional} results is an integer linear program, the solution of which is NP-complete in general. Moreover, we have already proven \textsc{Perfect Scheduling}, the special case of \textsc{Bailout Minimization} with $b$ fixed to $0$, to be NP-complete in the PP variant through the proofs in Theorems~\ref{thm:perfschedDAG}, \ref{thm:perfsched multiditree} and \ref{thm:perfschedhampath}. As regards trying to use linear programming for \textsc{Bankruptcy Minimization} in the FP variant, it is not possible to express a constraint on the number of bankruptcies through a linear combination of the payment variables; indeed, we have already proven \textsc{Bankruptcy Minimization} in the FP variant to be NP-complete in Theorems~\ref{thm:bankmin} and \ref{thm:bankminconstant}.
	
	For the AoN and PP variants, by restricting the temporal properties of the IDM instances considered, we obtain tractability of \textsc{Bailout Minimization}, namely when all debts are due at an exact time.
    
    \begin{thm}\label{thm:ptime-t1eqt2}
    The problem \textsc{(AoN/PP/FP) Bailout Minimization} is solvable in polynomial time when restricted to inputs $(G,D,A^0)$ such that $D_{t_1}=D_{t_2}$. 
    \end{thm}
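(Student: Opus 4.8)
The plan is to exploit the fact that the restriction $D_{t_1}=D_{t_2}$ eliminates all scheduling freedom, reducing the problem to a per-node arithmetic computation. First I would observe that in any perfect schedule every debt $e$ must be fully paid by its due time $D_{t_2}(e)$ and may not be paid before $D_{t_1}(e)$; since $D_{t_1}(e)=D_{t_2}(e)$, the entire monetary amount $D_a(e)$ must be paid at the single time-stamp $t_e:=D_{t_1}(e)=D_{t_2}(e)$. Hence the payment values of \emph{any} perfect schedule are uniquely determined, independently of the variant: $p_e^{t_e}=D_a(e)$ and $p_e^t=0$ for $t\neq t_e$. This is consistent with AoN (the debt is paid in full), with PP (the amount is integral), and with FP. So there is exactly one candidate schedule to consider, and it is trivially computable.

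Given this forced schedule, the incoming and outgoing totals $I_v^t$ and $O_v^t$ are fixed constants for every node $v$ and time $t$, obtained by a single pass over the debts. Writing $\Delta_v^t=I_v^t-O_v^t$ for the net flow and $P_v^t=\sum_{s=1}^{t}\Delta_v^s$ for its prefix sums, the cash assets under a bailout vector $B$ are $c_v^t=A^0[v]+B[v]+P_v^t$. For a perfect schedule no debt is ever overdue (so no bank is withholding), no debt is overpaid, and none is paid early; the only remaining validity requirement is non-negativity of cash, i.e.\ $c_v^t\ge 0$ for all $0\le t\le T$. Thus the forced schedule is valid under $B$ if and only if, for every node $v$, $B[v]\ge -\bigl(A^0[v]+P_v^t\bigr)$ holds for all $t$.

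The crucial point is that bailing out a node $u$ increases only $u$'s own cash and leaves every payment total unchanged, since the schedule is forced; hence the feasibility constraints for distinct nodes are completely decoupled, and no bailout at one node can relieve a shortfall at another. The minimum admissible bailout for node $v$ is therefore $B^{\min}[v]=\max\bigl(0,\,-\min_{0\le t\le T}(A^0[v]+P_v^t)\bigr)$, and the globally minimum total bailout enabling a perfect schedule is exactly $\sum_{v\in V}B^{\min}[v]$. The algorithm answers \textsc{yes} precisely when this sum is at most $b$. Using the canonical-instance bound of Section~2.4 to assume $T$ is polynomially bounded, each prefix sum and minimum is computed by one sweep over polynomially many time-stamps, so the whole procedure runs in polynomial time.

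The only step requiring genuine care — and the crux of correctness — is the decoupling claim that the per-node minimum bailouts may simply be summed. This follows from the forced-schedule observation: a node's cash trajectory depends on $B$ only through its own entry, so $\sum_v B^{\min}[v]$ is not merely feasible but optimal. I expect no serious obstacle here, since the timing rigidity does all the work; one should just verify carefully that the perfect-schedule requirement indeed forces full payment at $t_e$ (ruling out partial or deferred payments in the PP and FP variants) so that the candidate schedule is unique. Note finally that for the FP variant this result is already subsumed by Theorem~\ref{thm:ptime fractional}, but the argument above handles all three variants uniformly.
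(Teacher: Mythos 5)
Your proposal is correct and follows essentially the same approach as the paper's proof: both observe that $D_{t_1}=D_{t_2}$ forces a unique candidate schedule paying each debt in full at its single time-stamp, and then compute the minimum bailout per node as $\max\bigl(0,-c_v^{\min}\bigr)$ from that node's (forced) cash trajectory, summing these to compare against $b$. Your explicit spelling-out of the per-node decoupling argument is a slightly more detailed justification of the optimality claim that the paper leaves implicit, but it is the same proof.
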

    \begin{proof}
    Let $(G,D,A^0)$ be an IDM instance satisfying the above. In such an instance, all debts are due at an exact point in time, rather than an interval. For convenience, we use $D_t$ as shorthand for either of $D_{t_1}$ or $D_{t_2}$.
    By definition, for any bailout vector $B$ (including the all-zero vector) a perfect schedule for $(G,D,A^0+B)$ is one in which every debt is paid in full and on time. Let $\sigma$ be the schedule defined by $p_e^{D_t(e)} = D_a(e)$ for each edge $e$, with all other payment variables equal to zero. 
    Clearly, for any vector $B$, a perfect schedule for $(G,D,A^0+B)$ exists if and only if $\sigma$ is a valid schedule (and hence a perfect schedule).
    
    Moreover, we can efficiently compute a vector $B$ of minimum sum such that $\sigma$ is a perfect schedule for $(G,D,A^0+B)$. 
    For each vertex $v$ and time $t$, compute $c_v^t$ under $\sigma$ for the instance $(G,D,A^0)$. Note that if $(G,D,A^0)$ does not admit a perfect schedule then $c_v^t$ will be negative for some $v$ and $t$, and $\sigma$ is not a valid schedule for that instance (without a bailout).
    Denote the minimum (again, possibly negative) cash assets of $v$ at any time by $c_v^{\min{}}$.
    Compute $b_v := \max(-1 \cdot c_v^{\min{}},0)$ for each $v$, and let $B=(b_v | v \in V)$. By construction, $\sigma$ is a perfect schedule for $(G,D,A^0+B)$, and $\sigma$ is not a perfect schedule for $(G,D,A^0+B')$ for any $B'$ with $\sumop(B')< \sumop(B)$.
    
    All of our arguments hold in all three variants (AoN, PP, and FP), and the result follows.
    \end{proof}

    Interestingly, Theorem \ref{thm:ptime-t1eqt2} is the only positive result we derive for the All-or-Nothing setting. We also obtain tractability results for the problem \textsc{PP Bailout Minimization} if we restrict the structure of IDM instances.
	
	\begin{thm}\label{thm:ptime out-tree}
		The problem \textsc{Bailout Minimization} in the PP variant is solvable in polynomial-time when our IDM instances are restricted to out-trees.  
	\end{thm}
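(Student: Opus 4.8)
The plan is to give a direct bottom-up (leaves-to-root) algorithm exploiting the fact that in an out-tree every non-root node $v$ receives money only across the debts from its unique parent, so money flows strictly downward. For a node $v$ and time $t$ write $\text{avail}_v(t)=\sum\{D_a(e): e=(u,v,\cdot),\ D_{t_1}(e)\le t\}$ for the most $v$ can possibly have received from its parent by time $t$, and for a child $w$ of $v$ write $\text{due}_{v,w}(t)=\sum\{D_a(e): e=(v,w,\cdot),\ D_{t_2}(e)\le t\}$ for the amount of the $v\to w$ debts that is overdue unless paid by time $t$. I first characterise feasibility. Define, bottom-up, $\psi_v(t)=\sum_{w\text{ child of }v}\max(\text{need}_w(t),\text{due}_{v,w}(t))$ and $\text{need}_v(t)=\max(0,\psi_v(t)-\hat A_v)$, where $\hat A_v=A^0[v]+b_v$ is $v$'s effective starting cash and leaves have $\psi_v\equiv 0$. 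The core lemma is that $(G,D,\hat A)$ admits a perfect schedule iff $\psi_v(t)\le \hat A_v+\text{avail}_v(t)$ for every node $v$ and every $t$.

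I would prove this lemma by induction on the tree. For necessity, in any perfect schedule a non-negative cash balance forces the cumulative income of each node $w$ by time $t$ to be at least $\max(0,(\text{cumulative outgo of }w)-\hat A_w)$, and by induction the cumulative outgo of $w$ is at least $\psi_w(t)$ (it must both clear debts already due and meet each child's income requirement); since all income of $w$ comes from its parent, summing the parent's cumulative payments over its children and combining with the overdue-debt lower bounds forces the parent's cumulative outgo to be at least $\psi_v(t)$, whence $\psi_v(t)\le \hat A_v+\text{avail}_v(t)$. For sufficiency I construct the schedule in which each node pays every child the cumulative profile $\max(\text{need}_w(t),\text{due}_{v,w}(t))$, i.e. as late as the children's requirements and the deadlines permit. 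This profile lies between the earliest-possible cumulative payment $\sum\{D_a(e): e=(v,w,\cdot),\ D_{t_1}(e)\le t\}$ and the latest-possible one $\text{due}_{v,w}(t)$ on the $v\to w$ debts, using precisely the assumed inequality at $w$, so it is realisable by an integral per-debt split, and a short computation shows the resulting cash of every node is non-negative.

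Given the characterisation, the minimum total bailout is obtained by a single bottom-up sweep: for each node $v$ (children first) compute $\psi_v$, then set $\hat A_v=\max\big(A^0[v],\ \max_{1\le t\le T}(\psi_v(t)-\text{avail}_v(t))\big)$, $b_v=\hat A_v-A^0[v]$ and $\text{need}_v(t)=\max(0,\psi_v(t)-\hat A_v)$, with $\text{avail}_{\text{root}}\equiv 0$. This is the least $\hat A_v$ making $v$'s own deliverability inequality hold given the already-fixed children, so the vector $B$ produced is feasible and, via the sufficiency construction, yields an integral (hence valid PP) perfect schedule for $(G,D,A^0+B)$; the decision problem is answered by testing $\sum_v b_v\le b$. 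To prove optimality I rely on two structural facts: (i) the inequality at $v$ can be repaired only by increasing $\hat A_v$ or by lowering $\psi_v$, i.e. by bailing out $v$ or a descendant, since bailing out a strict ancestor changes neither $\text{avail}_v$ nor $\psi_v$; and (ii) one unit of bailout lowers a node's exported $\text{need}$ by at most one unit, so shifting bailout from a node to its parent is at best cost-neutral. Formally I would show by induction that for every node $v$ and every feasible effective-asset vector $\hat A'$ one has $\sum_{z\in T_v}(\hat A'_z-A^0[z])\ge \sum_{z\in T_v}(\hat A_z-A^0[z])$ and, simultaneously, that any reduction $\text{need}_v(t)-\text{need}'_v(t)$ in the exported requirement is paid for by at least as much extra subtree bailout; these reinforce each other through the recursion and give the lower bound at the root. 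For complexity, Section~2.4 lets us assume $T\le 2|E|$, so each of $\psi_v,\text{need}_v,\text{avail}_v$ is a length-$T$ integer vector and the whole sweep costs $O(|E|\,T)$ vector operations on numbers of polynomially many bits.

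The main obstacle is precisely the optimality lower bound of the previous paragraph: the trade-off between spending bailout at a node and the size of the requirement it exports to its parent means one cannot minimise each subtree's bailout in isolation, and the two-part induction (cost dominance together with the ``rate-one'' bound on requirement reduction) must be arranged so that the $\max$ with the overdue-debt terms $\text{due}_{v,w}$ does not break the telescoping. A secondary point to get right is realisability and integrality in the sufficiency construction, namely that the chosen cumulative profile always decomposes into integer per-debt payments lying in their correct intervals, which is exactly where the hypothesis $\text{need}_w(t)\le\text{avail}_w(t)$ is used.
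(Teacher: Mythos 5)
Your proposal is correct, but it takes a genuinely different route from the paper's proof. The paper works by iterative instance simplification: it first bails out ``prefix-insolvent'' nodes minimally, splits every debt $a@[t_1,t_2]$ with $a>1$ into $a$ unit debts (an operation that must be simulated to avoid pseudo-polynomial blowup), snaps all parent-to-leaf debts to their deadlines, merges sibling leaves, and then repeatedly applies a three-case analysis to a deepest leaf/parent/grandparent triple, each time rewriting the instance into an equivalent one (introducing dummy nodes, deleting matched debt pairs, or shifting bailout money from a node upward), until only a single directed edge remains. You instead give a direct bottom-up characterization --- a Hall-type feasibility criterion $\psi_v(t)\le \hat A_v+\mathrm{avail}_v(t)$ for all $v,t$ --- followed by a greedy minimal bailout assignment and an exchange-style optimality induction. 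Your approach buys several things: it works with aggregate integer profiles, so it needs no unit-splitting or simulation; it yields an explicit closed form for the optimal bailout at each node; and the sufficiency construction (pay each child the cumulative profile $\max(\mathrm{need}_w(t),\mathrm{due}_{v,w}(t))$, decomposed per-debt by a transportation/earliest-deadline argument whose feasibility is exactly $\mathrm{need}_w(t)\le\mathrm{avail}_w(t)$) is cleaner than the paper's case analysis. The delicate point you flag --- optimality despite the trade-off between bailing out a node and the requirement it exports --- does close the way you propose: writing $\Delta_i$ for the excess bailout of a competitor solution in the subtree of child $w_i$, the inequalities $\max(x,c)-\max(y,c)\le\max(0,x-y)$ and $\max(0,x)-\max(0,y)\le\max(0,x-y)$ give $\psi_v(t)-\psi'_v(t)\le\sum_i\Delta_i$, from which both parts of your induction ($\Delta_v\ge 0$ and $\mathrm{need}_v(t)-\mathrm{need}'_v(t)\le\Delta_v$) follow simultaneously and telescope to the root, since feasibility of the competitor at $v$ and time $t^*$ (the argmax defining the greedy bailout) supplies $b'_v\ge b_v-\sum_i\Delta_i$. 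What the paper's rewriting argument buys in exchange is locality --- each step is an elementary equivalence of instances --- at the cost of an intricate and somewhat ad hoc case structure; your dynamic-programming formulation is arguably the more transparent and more easily generalizable proof.
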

	
	\begin{proof}
		Let $((G,D,A^0),b)$ be an instance of \textsc{Bailout Minimization} so that $G$ is an out-tree. Suppose that $G$ has node set $\{u_i:1\leq i\leq n\}$. We need to decide whether we can increase the initial external assets of each node $u_i$ by $b_i$ so that $\sum_i b_i \leq b$ and $(G,D,A^0+B)$ has a perfect schedule, where $B=(b_1,b_2,\ldots,b_n)$; that is, whether $(G,D,A^0)$ is `$b$-bailoutable' via a \emph{$b$-bailout vector} $B$. Our intention is to repeatedly amend $(G,D,A^0)$ so that $(G,D,A^0)$ is `$b$-bailoutable' iff the resulting IDM instance is `$b^\prime$-bailoutable', for some amended $b^\prime$; in such a case, we say that the two problem instances are \emph{equivalent}. We will then work with the (simplified) amended instance.
		
		We proceed as follows. First, identify nodes $v$ for which, at any time $t$, the sum of all debts $v$ must pay by time $t$ minus the sum of all debts which could be paid to $v$ by time $t$ exceeds $v$'s initial external assets $c_v^0$. We call these nodes \emph{prefix-insolvent}. Note that if a node $v$ is prefix-insolvent then under any perfect schedule $\sigma$, we would have $I_v^{[t]} + c_v^0 < O_v^{[t]}$, violating a validity constraint, and hence there is no such perfect schedule. Also note that every insolvent node is prefix-insolvent (namely by taking $t=T$). For any node that is prefix-insolvent, increase the initial external assets by the minimal amount that causes the node to cease to be prefix-insolvent and simultaneously decrease the bailout amount by this value. Our new instance is clearly equivalent with our initial instance. If, in doing this, the bailout amount becomes less than 0 then we answer `no' and we are done. So, we may assume that none of our nodes is prefix-insolvent.
		
		Before we start, we make a simple amendment to the debts $D$: we replace any debt from node $u$ to node $v$ of the form $a@[t_1,t_2]$, where $a > 1$, with $a$ distinct debts $1@[t_1,t_2]$. Our resulting instance, with bailout $b$, is equivalent to our initial instance, with bailout $b$, as we are working within the PP variant. This amendment simplifies some of the reasoning coming up. Note that it may be necessary to simulate this operation rather than actually performing it (since if $a$ is exponential in the instance size then the operation takes exponential time), but that the reasoning which follows can easily be ``scaled up'' to deal with non-unit amounts. 
		
		Consider a leaf node $v$ and its parent $u$ in the out-tree $G$. Replace every debt of the form $1@[t_1,t_2]$ from $u$ to $v$ by the debt $1@t_2$ and denote the revised instance by $(G,D^\prime,A^0)$. Let $\sigma$ be a perfect valid schedule for $(G,D,A^0+B)$ (here, and throughout, we write $B$ to denote some $b$-bailout vector; that is, some assignment of resource to the nodes of $G$ so that the total bailout amount does not exceed the total available $b$). Define the schedule $\sigma^\prime$ for $(G,D^\prime,A^0+B)$ by amending any payment from $u$ to $v$ of some debt $1@[t_1,t_2]$ so that the payment is made at time $t_2$. The schedule $\sigma^\prime$ is clearly a perfect valid schedule for $(G,D^\prime,A^0+B)$. Furthermore, any perfect valid schedule for $(G,D^\prime,A^0+B)$ is a perfect valid schedule for $(G,D,A^0+B)$. Hence, we can replace $((G,D,A^0),b)$ by $((G,D^\prime,A^0),b)$, as these instances are equivalent. We can proceed as above for every leaf node and its parent and so assume that all debts from a parent to a leaf are due at some specific time only; that is, have a singular time-stamp and are of the form $1@t$. Note also that no node of $G$ is insolvent.
		
		Suppose that we have two leaf nodes $v$ and $w$ with the same parent $u$. We can replace $v$ and $w$ with a `merged' node $vw$ so that all debts from $u$ to $v$ or from $u$ to $w$ are now from $u$ to $vw$. Our initial problem instance is clearly equivalent to our amended problem instance (note that we never assign bailout resource to either $v$ or $w$ as this is pointless). We can proceed likewise for all such triples $(u,v,w)$. Hence, we may assume that our digraph $G$ is such that: no two leaves have a common parent; all debts to a leaf node have monetary amount \Euro1 and have a singular time-stamp; and no node in $G$ is prefix-insolvent.
		
		Suppose that we have a leaf node $w$ that is the only child of its parent node $v$ whose parent is $u$ (such a node $w$ exists: take a leaf of the tree that is furthest away from the root). We may assume that $v$ has no initial external assets as we would simply use these assets to pay as many debts to $w$ as possible (in increasing order of time-stamp); that is, we could remove all these debts from $D$ along with the corresponding amount from the initial external assets of $v$. If the result of doing this is that there are no debts from $v$ to $w$ then we remove $w$ from $G$ and any remaining initial external assets from $v$. We would then repeat all of the above amendments until it is the case that our nodes $u$, $v$ and $w$ are such that $v$ has no initial external assets.
		
		By the above, every debt from $v$ to $w$ is of the form $1@t$. Let $t^\prime$ be the minimum time-stamp for all debts from $v$ to $w$ and let $d_v$ be a debt from $v$ to $w$ of the form $1@t^\prime$. Consider the debts from $u$ to $v$: these have the form $1@[t_1,t_2]$. There are various cases:
		\begin{itemize}
			\item[(\emph{a})] there is a debt $d_u$ from $u$ to $v$ of the form $1@[t_1,t_2]$ where $t_2\leq t^\prime$
			\item[(\emph{b})] there is no debt from $u$ to $v$ of the form $1@[t_1,t_2]$ where $t_2\leq t^\prime$ but there is a debt from $u$ to $v$ of the form $1@[t_1,t_2]$ where $t_1\leq t^\prime \leq t_2$
			\item[(\emph{c})] there is no debt from $u$ to $v$ of the form $1@[t_1,t_2]$ where $t_1\leq t^\prime$.
		\end{itemize}
		The nodes $u$, $v$ and $w$ can be visualized as in Fig.~\ref{fig:bankruptcy minim}(d) and the three cases above in Fig.~\ref{fig:bankruptcy minim}(a)--(c).
		
		\begin{figure*}[!ht]
			\centering
			\subfloat[\label{fig:bankruptcy minima}]{
			\begin{tikzpicture}
				\node(t1) at (0,2) {$t_1[$};
				\node(t2) at (1,2) {$]t_2$};
				\node(t3) at (1,1.5) {$t_3[$};
				\node(t4) at (2.5,1.5) {$]t_4$};
				\node(t5) at (-0.25,1) {$t_5[$};
				\node(t6) at (2.7,1) {$]t_6$};
				\node(t7) at (0,2.5) {$t_7[$};
				\node(t8) at (2,2.5) {$]t_8$};
				\node(tprime) at (1.5,3) {$t^\prime$};
				\node(bottprime) at (1.5,0.5) { };
				\path (t1) edge[-] (t2);
				\path (t3) edge[-] (t4);
				\path (t5) edge[-] (t6);
				\path (t7) edge[-] (t8);
				\path (tprime) edge[-, dashed] (bottprime);
			\end{tikzpicture}
			}	
			\hfill
			\subfloat[\label{fig:bankruptcy minimb}]{
				\begin{tikzpicture}
					\node(t1) at (0,2) {$t_1[$};
					\node(t2) at (2,2) {$]t_2$};
					\node(t3) at (1,1.5) {$t_3[$};
					\node(t4) at (2.5,1.5) {$]t_4$};
					\node(t5) at (-0.25,1) {$t_5[$};
					\node(t6) at (2.7,1) {$]t_6$};
					\node(t7) at (2,2.5) {$t_7[$};
					\node(t8) at (3,2.5) {$]t_8$};
					\node(tprime) at (1.5,3) {$t^\prime$};
					\node(bottprime) at (1.5,0.5) { };
					\path (t1) edge[-] (t2);
					\path (t3) edge[-] (t4);
					\path (t5) edge[-] (t6);
					\path (t7) edge[-] (t8);
					\path (tprime) edge[-, dashed] (bottprime);
				\end{tikzpicture}
			}
			\hfill
			\subfloat[\label{fig:bankruptcy minimc}]{
				\begin{tikzpicture}
					\node(t1) at (0,2) {$t_1[$};
					\node(t2) at (2,2) {$]t_2$};
					\node(t3) at (0.5,1.5) {$t_3[$};
					\node(t4) at (2,1.5) {$]t_4$};
					\node(t5) at (-0.25,1) {$t_5[$};
					\node(t6) at (2.7,1) {$]t_6$};
					\node(t7) at (1,2.5) {$t_7[$};
					\node(t8) at (2.25,2.5) {$]t_8$};
					\node(tprime) at (-0.5,3) {$t^\prime$};
					\node(bottprime) at (-0.5,0.5) { };
					\path (t1) edge[-] (t2);
					\path (t3) edge[-] (t4);
					\path (t5) edge[-] (t6);
					\path (t7) edge[-] (t8);
					\path (tprime) edge[-, dashed] (bottprime);	
				\end{tikzpicture}
			}
			\\
			\subfloat[\label{fig:bankruptcy minimd}]{
			\begin{tikzpicture}
				\mynode{w}{1.2}{0}{$w$}
				\mynode{v}{1.2}{2}{$v$}
				\mynode{u}{1.2}{4}{$u$}
				\path (u) edge node[right, align=center] {$1@[t_1,t_2]$\\$1@[t_3,t_4]$\\$\ldots$} (v);	
				\path (v) edge node[right, align=center] {$1@t^\prime$\\$1@t^{\prime\prime}$\\$\ldots$} (w);		
			\end{tikzpicture}
			}
		\hspace{1in}
		\subfloat[\label{fig:bankruptcy minime}]{
			\begin{tikzpicture}
				\mynode{w}{1.2}{0}{$w$}
				\mynode{v}{1.2}{2}{$v$}
				\mynode{u}{1.2}{4}{$u$}
				\mynode{vprime}{-0.8}{2}{$v^\prime$}
				\path (u) edge node[right, align=center] {\sout{$1@[t_1,t_2]$}\\$1@[t_3,t_4]$\\$\ldots$} (v);	
				\path (v) edge node[right, align=center] {\sout{$1@t^\prime$}\\$1@t^{\prime\prime}$\\$\ldots$} (w);
				\path (u) edge node[left, align=center] {$1@t^\prime$} (vprime);
			\end{tikzpicture}
		}
			\caption{Cases when a leaf has a parent with one child in our algorithm for \textsc{PP Bailout Minimization} on out-trees.}
			\label{fig:bankruptcy minim}
		\end{figure*}
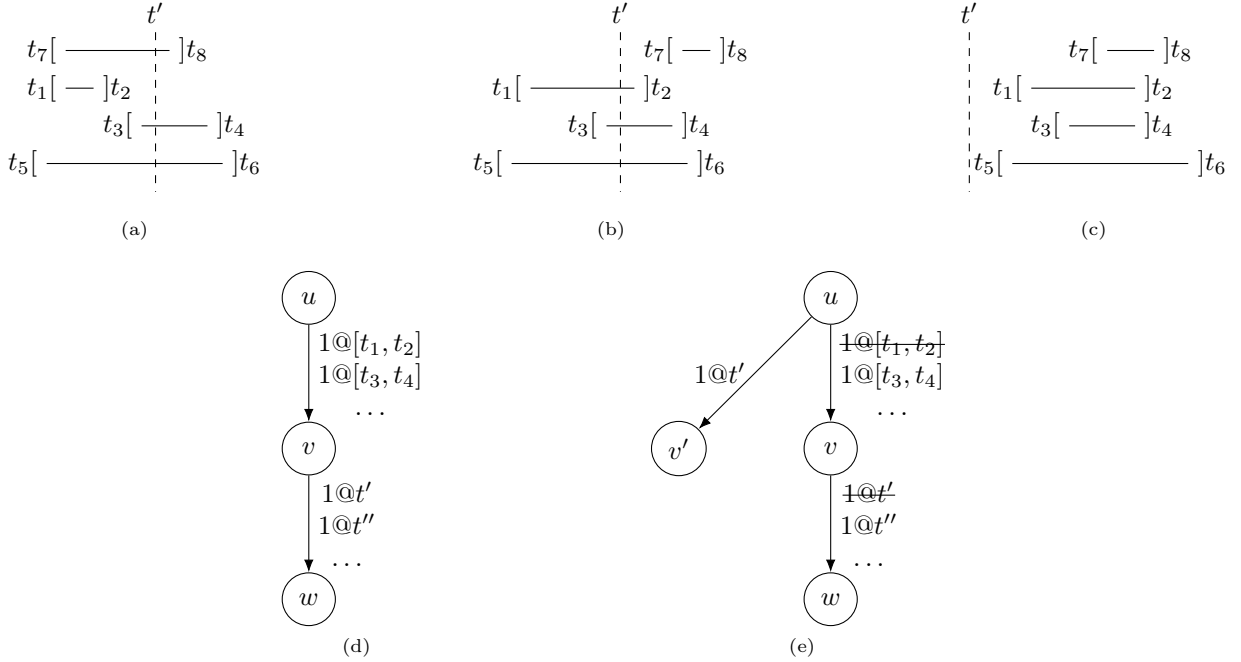
		
		\noindent\underline{Case (\emph{a})}: We amend $G$ by: introducing a new node $v^\prime$ whose parent is $u$ and a new debt $d^\prime$ from $u$ to $v^\prime$ of the form $1@[t_1,t_2]$; removing the debt $d_u$ from $u$ to $v$; and removing the debt $d_v$ from $v$ to $w$. Denote this revised IDM instance by $(G^\prime,D^\prime,A^0)$. Suppose that there exists a perfect valid schedule $\sigma$ for $(G,D,A^0+B)$. Define the schedule $\sigma^\prime$ for $(G^\prime,D^\prime,A^0+B)$ from $\sigma$ by: changing the payment from $u$ to $v$, at time $t$ where $t_1\leq t\leq t_2\leq t^\prime$ and covering the debt $d_u$, so that the payment is made from $u$ to $v^\prime$ at time $t$ (so as to cover the new debt $d^\prime$); and dropping the payment, at time $t^\prime$, that covers the debt $d_v$. The resulting schedule $\sigma^\prime$ is clearly valid and perfect. Conversely, suppose that we have a perfect valid schedule $\sigma^\prime$ for $(G^\prime,D^\prime,A^0+B)$. Define the schedule $\sigma$ for $(G,D,A^0+B)$ from $\sigma^\prime$ by: changing the payment from $u$ to $v^\prime$ at time $t$ where $t_1\leq t\leq t_2\leq t^\prime$ and covering the debt $d^\prime$, so that the payment is made from $u$ to $v$ at time $t$, so as to cover the debt $d_u$; and using the \Euro1 received by $v$ so as to cover the debt $d_v$ from $v$ to $w$. The resulting schedule $\sigma$ is clearly a perfect valid schedule for $(G,D,A^0+B)$. Consequently, $((G,D,A^0),b)$ and $((G^\prime,D^\prime,A^0), b)$ are equivalent and we can work with $((G^\prime,D^\prime,A^0), b)$.\smallskip	
		
		\noindent\underline{Case (\emph{b})}: Let $D_u$ be the set of debts from $u$ to $v$ and let $D_v$ be the set of debts from $v$ to $w$. Order the $k$ debts of $D_v$ in increasing order of time-stamp as $d_v=d_1, d_2, \ldots, d_k$ where the corresponding time-stamps are $t^\prime = \bar{t}_1, \bar{t}_2, \ldots, \bar{t}_k$ (there may be repetitions). Suppose that for some $1\leq i\leq k$, the number of debts in $D_u$ of the form $1@[t_1,t_2]$ with $t_1\leq \bar{t}_i$ is strictly less than $i$. Consequently, at time $\bar{t}_i$, the debt $d_i$ cannot be paid and we necessarily need to give $v$ some bailout amount, \Euro$c>1$ say, to cover the $c$ debts that cannot be paid by $v$ at time $\bar{t}_i$. We do this and reduce the overall bailout amount by \Euro$c$. We then delete debts $d_1,d_2,\ldots, d_c$ from $G$ and remove the bailout amount of \Euro$c$ from $v$. If doing this results in there being no remaining debts from $v$ to $w$ then we delete $w$ from $G$. Irrespective of this, the resulting instance $((G^\prime, D^\prime, A^0), b^\prime)$, where $b^\prime = b-c$, is equivalent to $((G,D,A^0),b)$. We would then repeat all of the amendments above and so w.l.o.g. we may assume that we are in Case (\emph{b}) and for every $1\leq i\leq k$, the number of debts in $D_u$ of the form $1@[t_1,t_2]$ with $t_1\leq \bar{t}_i$ is at least $i$. In particular, there are at least $k$ debts in $N_u$.
		
		Suppose that $\sigma$ is a valid perfect schedule for $(G,D,A^0+B)$, for some $B$ where $v$ receives a bailout amount of \Euro$c>0$. Suppose further that there is no $B^\prime$ where there is a valid perfect schedule for $(G,D,A^0+B^\prime)$ with $v$ receiving a bailout of less than \Euro$c$. The reason that $v$ receives the bailout amount of \Euro$c$ is that in the schedule $\sigma$, if we ignore the payments by $v$ that use the bailout amount at $v$ then there are $c$ debts from $d_1, d_2, \ldots, d_k$ that are not paid on time; let us call these debts the `bad' debts. Note that for each bad debt, there is a debt from $u$ to $v$ that \emph{might} have been paid at a time early enough to cover the debt but wasn't. Let $e_1,e_2,\ldots,e_c$ be distinct debts from $D_u$ that might have been paid earlier so as to enable the payment of the bad debts. Amend the bailout $B$ so that the \Euro$c$ formerly given as bailout to $v$ is now given to $u$ and denote this revised bailout by $B^\prime$. Revise the schedule $\sigma$ so that for each $1\leq i\leq c$, \Euro1 of bailout at $u$ is used to pay the debt $e_i$ at the earliest time possible. Doing so results in us being able to pay all bad debts on time; hence, we have a perfect schedule for $(G,D,A^0+B^\prime)$ where $v$ receives no bailout. This yields a contradiction and so if there is a bailout $B$ and a valid perfect schedule for $(G,D,A^0+B)$ then there is a bailout $B^\prime$ and a valid perfect schedule for $(G,D,A^0+B^\prime)$ where $v$ receives no bailout funds. We shall return to this comment in a moment.
		
		From the debts of $D_u$, we choose a debt $d_u = 1@[t_1,t_2]$, where $t_1\leq t^\prime \leq t_2$, so that from amongst all of the debts of $D_u$ of the form $1@[t_3,t_4]$, where $t_3\leq t^\prime \leq t_4$, we have that $t_2\leq t_4$; that is, from all of the debts of $D_u$ that `straddle' $t^\prime$, $d_u$ is a debt whose right-most time-stamp is smallest. We amend $G$ by: introducing a new node $v^\prime$ and a new debt $d^\prime$ from $u$ to $v^\prime$ of the form $1@t^\prime$; removing the debt $d_u$ from $u$ to $v$; and removing the debt $d_v$ from $v$ to $w$. Denote this revised IDM instance by $(G^\prime, D^\prime, A^0)$; it can be visualized as in Fig.~\ref{fig:bankruptcy minim}(e).
		
		Suppose that $\sigma$ is a valid perfect schedule for $(G,D,A^0+B)$, for some $B$. From above, we may assume that there is no bailout to node $v$ in $B$. Consider the payment by $v$ to $w$ of the debt $d_v$. If the actual \Euro1 that pays this debt came from the payment of a debt from $D_u\setminus\{d_u\}$ of the form $1@[t_3,t_4]$ (where $t_3\leq t^\prime\leq t_4$), then we can amend $\sigma$ so that we use this \Euro1 to pay the debt $d_u$ at the time $t^\prime$ and use the \Euro1 that paid the debt $d_u$ to pay the debt $1@[t_3,t_4]$ (at whatever time $d_u$ was paid); that is, we swap the times of the payment of the debts $d_u$ and $1@[t_3,t_4]$ in $\sigma$ except that we now pay $d_u$ at time $t^\prime$. If the actual \Euro1 that pays $d_v$ came from the payment of $d_u$ then we can amend the payment time of $d_u$ to $t^\prime$ (if necessary).
		
		Build a schedule $\sigma^\prime$ in $(G^\prime, D^\prime, A^0+B)$ from $\sigma$ by: instead of paying $d_u$ (at time $t^\prime$), we pay the new debt $d^\prime$ from $u$ to $v^\prime$; and we remove the payment of the debt $d_v$. The schedule $\sigma^\prime$ is clearly a valid perfect schedule of $(G^\prime, D^\prime, A^0+B)$. Conversely, if $\sigma^\prime$ is a valid perfect schedule of $(G^\prime, D^\prime, A^0+B)$, we can build a schedule $\sigma$ for $(G,D,A^0+B)$ from $\sigma^\prime$ by: instead of paying the debt $d^\prime$ (at time $t^\prime$), we pay the debt $d_u$ at time $t^\prime$; and we use this \Euro1 to pay immediately the debt $d_u$. The schedule $\sigma$ is clearly a valid perfect schedule of $(G,D,A^0+B)$. Hence, $((G,D,A^0),b)$ and $((G^\prime, D^\prime, A^0), b)$ are equivalent and we can work with $((G^\prime, D^\prime, A^0),b)$.\smallskip
		
		\noindent\underline{Case (\emph{c})}: Suppose that there is no debt from $u$ to $v$ of the form $1@[t_1,t_2]$ where $t_1\leq t^\prime$. This case cannot happen as we have ensured that no node of $G$ is insolvent.\smallskip
	
		By iteratively applying all of the amendments to the instance $((G,D,A^0),b)$, as laid out above, we reduce $((G,D,A^0),b)$ to an equivalent instance $((G^\prime,D^\prime,$\linebreak$(A^\prime)^0),b^\prime)$ where $G$ consists of a solitary directed edge and all debts have a singular time-stamp. The process of reduction can clearly by undertaken in time polynomial in the size of the initial instance $((G,D,A^0),b)$ and the resulting instance $((G^\prime,D^\prime,(A^\prime)^0),b^\prime)$ can clearly be solved in time polynomial in the size of the initial instance $((G,D,A^0),b)$. Hence, \textsc{Bailout Minimization} is solvable in polynomial-time.
		\end{proof}
		
		Our polynomial-time algorithm for \textsc{Bailout Minimization}, and so \textsc{Perfect Scheduling}, in the PP variant in Theorem~\ref{thm:ptime out-tree} when we restrict to out-trees contrasts with the NP-completeness of \textsc{Perfect Scheduling} when we restrict to directed acyclic graphs or multiditrees, as proven in Theorem~\ref{thm:perfschedDAG} and Theorem~\ref{thm:perfsched multiditree}, respectively.
	
	\section{Conclusion and open problems}\label{section:conclusion}

		This paper introduces the \emph{Interval Debt Model (IDM)}, a new model seeking to capture the temporal aspects of debts in financial networks. We investigate the computational complexity of various problems involving debt scheduling, bankruptcy and bailout with different payment options (All-or-nothing (AoN), Partial (PP), Fractional (FP)) in this setting. We prove that many variants are hard even on very restricted inputs but certain special cases are tractable. For example, we present a polynomial time algorithm for \textsc{PP Bailout Minimization} where the IDM graph is an out-tree. However, for a number of other classes (DAGs, multitrees, total assets are $\Euro1$), we show that the problem remains NP-hard. This leaves open the intriguing question of the complexity status of problems which are combinations of two or more of these constraints, most naturally on multitrees which are also DAGs, an immediate superclass of our known tractable case. 

        An interesting result of ours is the (weak) NP-completeness of \textsc{Bankruptcy Minimization} on a fixed, 32-node footprint graph (with edge multiplicity unbounded) in Theorem \ref{thm:bankminconstant}. It is noteworthy that constantly many nodes suffice to express the complexity of any problem in NP, and this leads to several open questions. Does the same hold when integers must be encoded in unary? We know this is true for the AoN case (as shown in Theorem \ref{thm:AoN3path}).
        What is the smallest number $n$ such a family of $n$-node \textsc{(FP/PP) Bankruptcy Minimization} instances is NP-complete? From the other side, what is the largest number $n$ such that any $n$-node \textsc{(FP/PP) Bankruptcy Minimization} instance may be solved in polynomial time, and with what techniques?

		We prove that \textsc{FP Bailout Minimization} is polynomial-time solvable by expressing it as a Linear Program. 
		Can a similar argument be applied to some restricted version of FP Bankruptcy Minimization (which is NP-Complete, in general)? A natural generalization is simultaneous Bailout and Bankruptcy minimization i.e. can we allocate $\Euro b$ in bailouts such that a schedule with at most $k$ bankruptcies becomes possible. 
		Variations of this would be of practical interest. For example, if regulatory authorities can allocate bailouts as they see fit, but not impose specific payment times, it would be useful to consider the problem of allocation of $\Euro b$ in bailouts such that the maximum number of bankruptcies in any valid schedule is at most $k$. Conversely, where financial authorities can impose specific payment times, the combination of the problems Bankruptcy Minimization and Bailout Minimization would be more applicable.

		Finally, can we make our models more realistic and practical? How well do our approaches perform on real-world financial networks? Can we identify topological and other properties of financial networks that may be leveraged in designing improved algorithms? What hardness or tractability results hold for variants in which the objective is, instead of the number of bankruptcies, the total amount of unpaid debt (or any other objective, for that matter)?
  
\section*{Declarations}
This work was partially supported by Engineering and Physical Sciences Research Council grant EP/P020372/1.

The authors have no relevant financial or non-financial interests to disclose.
The authors have no conflicts of interest to declare that are relevant to the content of this article.
All authors certify that they have no affiliations with or involvement in any organization or entity with any financial interest or non-financial interest in the subject matter or materials discussed in this manuscript.
The authors have no financial or proprietary interests in any material discussed in this article.

\backmatter 

\bibliography{tcs_idm}

\end{document}